%% file: neurips_2026.tex
\documentclass{article}

\usepackage[preprint]{neurips_2026}

\usepackage[utf8]{inputenc} % allow utf-8 input
\usepackage[T1]{fontenc}    % use 8-bit T1 fonts
\usepackage{hyperref}       % hyperlinks
\usepackage{url}            % simple URL typesetting
\usepackage{booktabs}       % professional-quality tables
\usepackage{amsfonts}       % blackboard math symbols
\usepackage{nicefrac}       % compact symbols for 1/2, etc.
\usepackage{microtype}      % microtypography
\usepackage{xcolor}         % colors

\input{preamble}

\title{A Theoretical Analysis of Test-Driven Code Generation}

\author{
Nicolas Menet$^{1,2}$\\
{\tt\footnotesize nicolas.menet@ibm.com}\\
\And
Michael Hersche$^{1}$\\
{\tt\footnotesize michael.hersche@ibm.com}\\
\And
Andreas Krause$^{2}$\\
{\tt\footnotesize krausea@ethz.ch}\\
\And
Abbas Rahimi$^{1}$\\
{\tt\footnotesize abr@zurich.ibm.com}
}

\begin{document}

\maketitle

\input{content}

\newpage
\input{checklist.tex}

\end{document}

%% file: preamble.tex
\usepackage{graphicx, subcaption, wrapfig}
\usepackage{mathtools, dsfont, amsthm}
\usepackage{listings, upquote}
\theoremstyle{plain}
\newtheorem{theorem}{Theorem}[section]
\newtheorem{proposition}[theorem]{Proposition}
\newtheorem{lemma}[theorem]{Lemma}
\newtheorem{corollary}[theorem]{Corollary}
\theoremstyle{definition}
\newtheorem{definition}[theorem]{Definition}

\theoremstyle{remark}

\newtheoremstyle{nonotebrackets}
  {\topsep}   % Space above
  {\topsep}   % Space below
  {\itshape}  % Body font
  {}          % Indent amount
  {\bfseries} % Theorem head font
  {.}         % Punctuation after theorem head
  {.5em}      % Space after theorem head
  {\thmname{#1}\thmnumber{ #2}\thmnote{ #3}} % <--- The magic happens here!
\theoremstyle{nonotebrackets}
\newtheorem*{theorem*}{Theorem}
\newtheorem*{proposition*}{Proposition}

%% file: content.tex
\begin{abstract}
  Code assistants are increasingly utilized in test-driven software development, yet the theoretical mechanisms behind their environment-interaction strategies remain underexplored. We provide a probabilistic framework for two dominant paradigms: code selection after generation using the execution environment, and code generation conditioned on environment feedback. First, we formalize several well-established selection heuristics as environment-aware estimators of code correctness. We theoretically prove that estimators based on fuzzy functional similarity add an inductive bias and strictly dominate estimators based on functional equivalence in terms of signal-to-noise ratio. Second, we frame backprompting as an in-context approximation of Thompson sampling. We derive a novel regret bound for reward functions with unobservable components, theoretically explaining why the effectiveness of backprompting is limited by the ambiguity of the informal task description (an irreducible regret). Using five state-of-the-art open weight models, we corroborate these findings across BigCodeBenchHard, LeetCodeDataset, and QiskitHumanEvalSim. Our formalization also suggests how to improve task descriptions effectively, leading to a new benchmark, QiskitHumanEvalSimX.
\end{abstract}

\section{Introduction}
Large Language Models (LLMs) have demonstrated strong capabilities across several domains \citep{achiam2023gpt}, including coding \citep{chen2021evaluating}. Building on these capabilities, code assistants have been integrated into disruptive products, such as GitHub Copilot, Claude Code, and Cursor.

With code assistants increasingly adopted for critical large-scale software, their capability to conduct test-driven development \citep{beck2003test} becomes paramount. In addition to facilitating collaboration and quality control, test-driven development provides a unique opportunity for boosting LLM performance via post-generation selection \citep{li2022competition, shi2022mbr, chen2023codet, mundler2024swt, li2025testtimescal, liu2026dynamic} or iterative backprompting \citep{shinn2023reflexion, olausson2024is, chen2024teaching, li2026oracle}. Environment-aware post-generation selection leverages test suites to score implementations and group equally (or similarly) behaving code. In contrast, backprompting of execution results provides the language model with an opportunity to refine its own beliefs on the behavior of code. These strategies are illustrated in Figure~\ref{fig:overview}. Although widely used, their theoretical mechanisms remain underexplored, resulting in a zoo of heuristics without a common foundation.

In this work, we adopt a probabilistic perspective on both types of environment interaction. Our contributions are as follows: (1) We formalize post-generation selection heuristics as estimators of environment-aware probability of correctness, introducing a unifying categorization of prior works in Table~\ref{tab:types_of_self_consistency}. We prove that estimators adopting functional similarity add an inductive bias that diminishes neural approximation artifacts and strictly dominate estimators built on functional equivalence in terms of signal-to-noise ratio. This provides a theoretical basis for why aggregating ``similar'' behaviors is more robust than counting identical outputs. (2) We explore the limitations of conditioning the LLM on environment feedback during generation. To that end, we frame backprompting as an in-context approximation of Thompson sampling. We then derive a novel regret bound for reward functions with unobservable components. The bound identifies an irreducible regret: the effectiveness of backprompting is strictly limited by the ambiguity of the task description, regardless of the quantity of execution feedback. 
\begin{wrapfigure}{r}{0.58\textwidth}
    \centering
    \vspace{-9pt}
    \begin{subfigure}[b]{0.51\linewidth}
        \centering
        \captionsetup{justification=centering}\includegraphics[width=\linewidth]{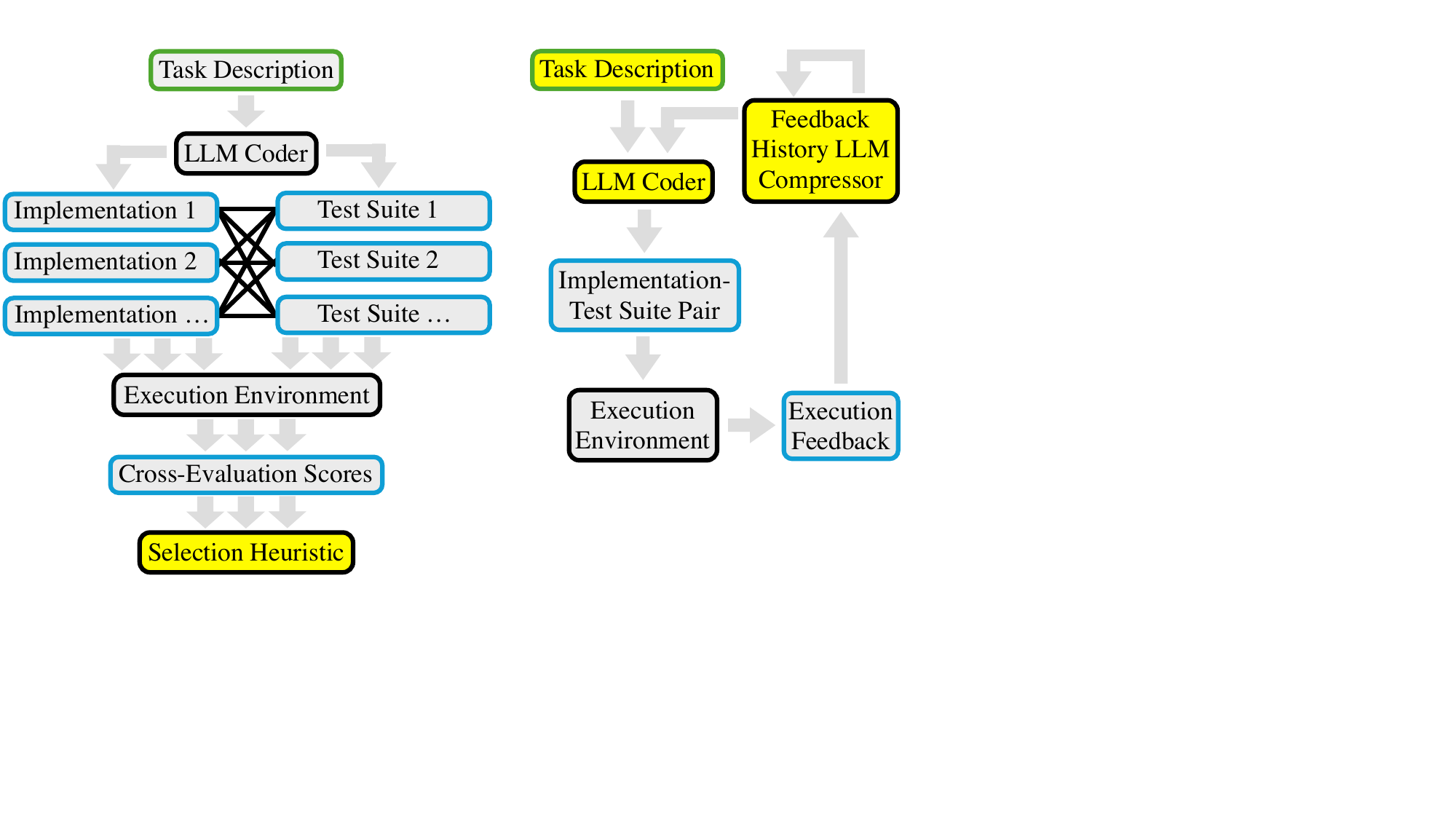}
    \caption{Section~\ref{sec:post_generation_selection_via_feedback}: Selection via Environment after Generation}
    \end{subfigure}
    \hfill
    \begin{subfigure}[b]{0.46\linewidth}
        \centering
        \captionsetup{justification=centering}\includegraphics[width=0.9\linewidth]{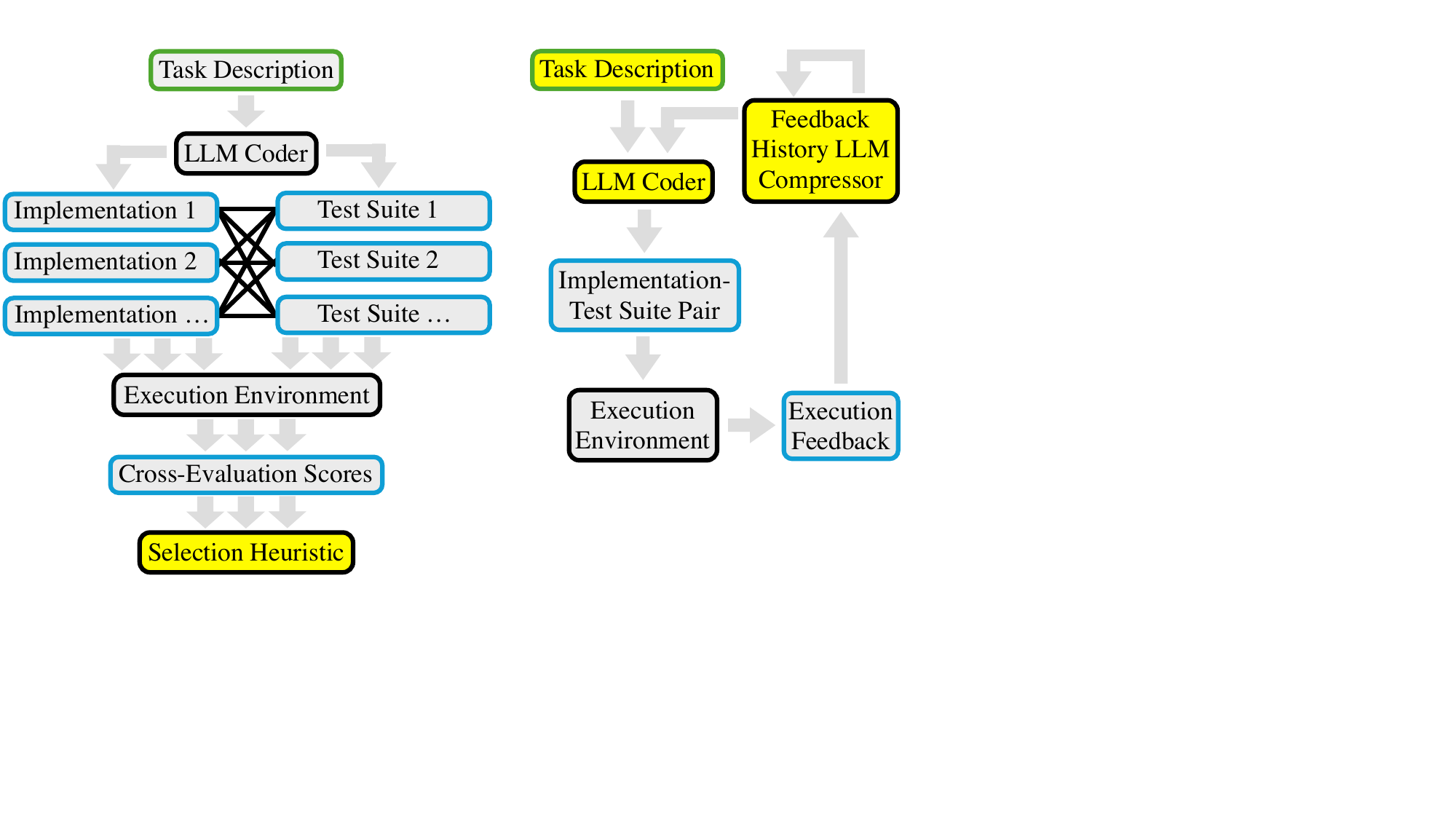}
        \caption{Section~\ref{sec:feedback_during_generation}: Environment Feedback during Generation}
    \end{subfigure}
    \caption{We consider environment interaction in two settings. The scope of our analysis is highlighted in yellow.}
    \label{fig:overview}
    \vspace{-20pt}
\end{wrapfigure}
(3) We confirm these findings experimentally using five open weight models (Qwen3-235B-Instruct, Qwen3-Next, Qwen3-Next-Coder, GPT-OSS-120B, MiniMax-M2.1) on competitive coding (BigCodeBenchHard, LeetCodeDataset, and QiskitHumanEvalSim). Our results confirm that ``soft'' estimators (adopting functional similarity) consistently outperform ``hard'' ones (adopting functional equivalence), and that backprompting is most effective in settings with a small unobservable reward component. Finally, we verify the importance of clear task descriptions for effective backprompting by deriving a novel dataset \textsc{QiskitHumanEvalSimX}.

\section{Preliminaries}

\paragraph{Notation} 
Let $\mathcal V$ be a finite vocabulary of tokens. For any set $\Omega$, the Kleene star denotes the set of all finite strings $\Omega^* := \cup_{i \geq 0}\, \Omega^i$. Hence, $\mathcal V^*$ represents the space of all possible token strings, which includes all code implementations. We denote the number of test cases in a test suite $t \in \mathcal V^*$ by $|t|$.

\begin{wrapfigure}{r}{0.53\textwidth}
    \centering
    \vspace{-16pt}
    \includegraphics[width=\linewidth]{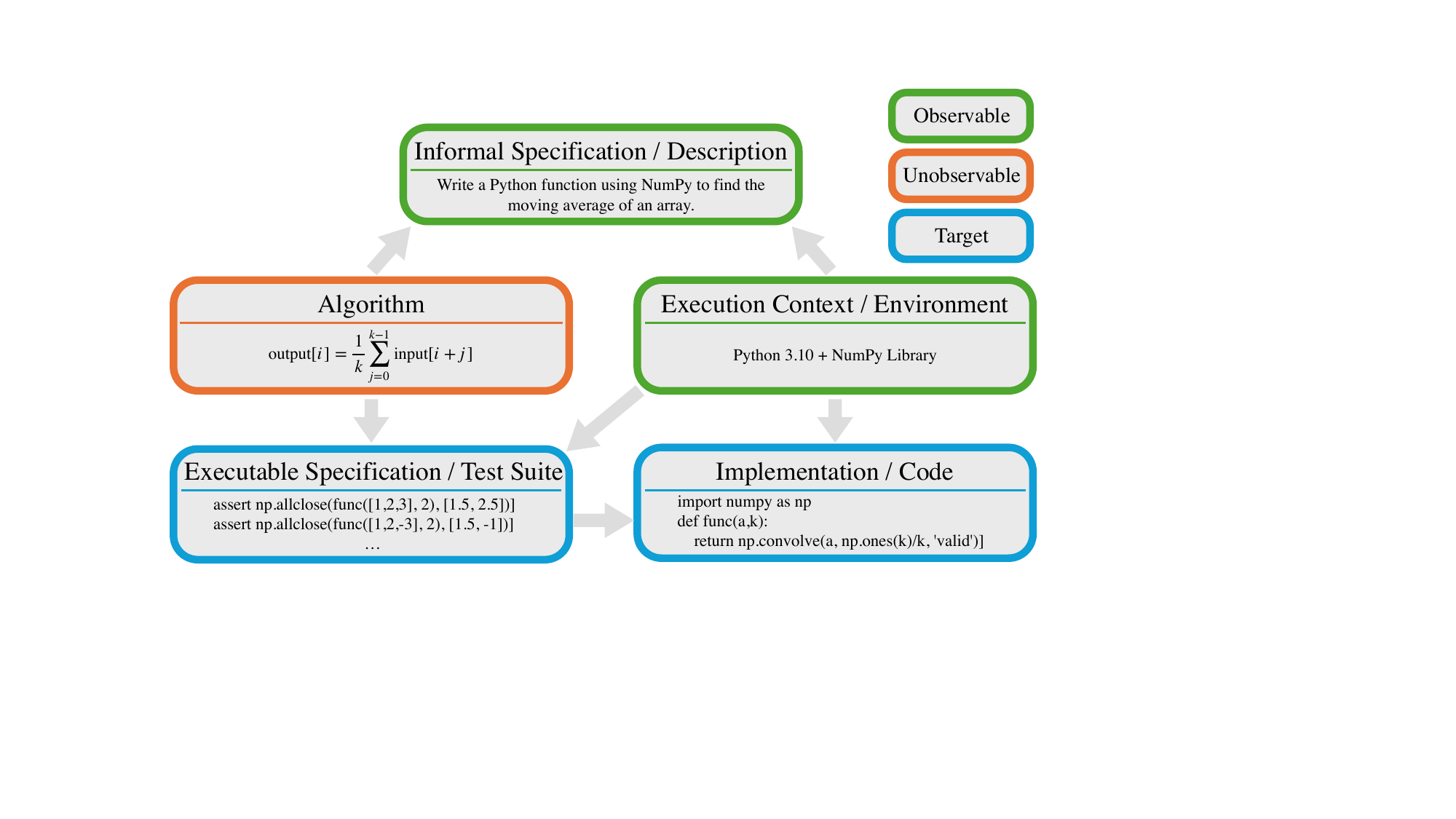}
    \caption{Generative process represented as a structural causal model (arrows indicate causality). We observe an informal specification (description) corresponding to an algorithm in an execution context (environment). We seek an executable specification (test suite) and an implementation that satisfies it (code).}
    \label{fig:probabilistic_model}
\end{wrapfigure}
\paragraph{Problem setting} 
We assume a generative process driven by a prior distribution over abstract \emph{algorithms} $a \in \mathcal A$ and \emph{environments} $e \in \mathcal E$ (execution contexts).
The environment constrains the available tools (e.g., libraries) and required syntax to implement the algorithm. We observe a natural language task \emph{description} $d \in \mathcal V^*$ (the informal specification), which is a lossy compression of the algorithm $a$ and environment $e$. Crucially, the abstract algorithm $a$ is inherently unobservable beyond $d$. In contrast, the environment $e$ can be partially observed. Given a description $d$ and an observation history $\mathcal H$, we seek two latent variables: a \emph{test suite} $t \in \mathcal V^*$ (the executable specification) that formalizes the requirements of algorithm $a$ in environment $e$, and an \emph{implementation} $c \in \mathcal V^*$ valid under this test suite. We model $p(c,t \mid \mathcal H, d)$ with a large language model and shorten the notation by omitting the dependency on $d$. Figure~\ref{fig:probabilistic_model} represents the generative process as a structural causal model.

\paragraph{Environment feedback} Given an environment $e \in \mathcal E$, code $c \in \mathcal V^*$, and test suite $t \in \mathcal V^*$, we consider both scalar and textual environment feedback. Let $\mathbb O$ be a vocabulary of test outputs. Then, $O(c,t \,\vert\, e) \in \mathbb O^{|t|}$ is the direct output of the test harness. Typically, $\mathbb O = \{0,1\}$ with $0$ indicating failure and $1$ success. In this case, a scalar reward is recovered via $R(c,t \,\vert\, e) := \frac{1}{|t|}\sum_i O_i(c,t \,\vert\, e) \in [0,1]$. More generally, strategies such as functional consensus rely on generalized test suites with large $\mathbb O$, e.g., corresponding to arbitrary function output. Lastly, more informative textual feedback can be obtained from a test report $U(c,t \,\vert\, e) \in \mathcal V^*$, e.g., in the form of captured \textit{stderr} and assertion tracebacks. Next, we adopt an environment-centric view to first formalize popular post-generation selection mechanisms that leverage the environment to establish fine-grained consensus, before examining in-the-loop strategies based on backprompting of environment feedback.

\section{Selection via environment after generation}\label{sec:post_generation_selection_via_feedback}
Code generation should target the underlying \textit{algorithm} rather than a specific implementation. Yet, autoregressive next-token prediction leads to probability mass being fragmented across functionally identical implementations---an issue known as \textit{program aliasing} \citep{bunel2018leveraging, zhong2018semregex}. To remedy such modeling artifacts, selection strategies must operate at an algorithmic granularity that groups equally behaving codes. In this section, we first define key concepts such as functional similarity and functional equivalence. We then establish important properties such as their signal-to-noise ratio, and conclude that functional similarity is significantly more robust. Based on our theory, we then unify several prior works in Table~\ref{tab:types_of_self_consistency}. All proofs, including proof sketches, are in Appendix~\ref{sec:proofs}. Moreover, the appropriateness of each adopted assumption is discussed in Section~\ref{sec:discussion_assumptions}.

\begin{definition} [Functional code $s$-similarity]\label{def:functional_code_similarity} Given an environment $e \in \mathcal E$, a distribution over test suites $p(t)$, and a sharpness exponent $s\in\mathbb N$, we define similarity between implementations $c_1, c_2$ as
    \begin{equation*}
        \textstyle \mathrm{sim}_{p,e}^s(c_1, c_2) := \big(\mathbb E_{t\sim p}[ \frac{1}{|t|}\sum_{k=1}^{|t|} \mathds{1}_{O_k(c_1, t \vert e)=O_k(c_2, t \vert e)}]\big)^s,
    \end{equation*}
    where $O_k(c,t\vert e)$ denotes the output of the $k$-th test case in a test suite $t$ for the implementation $c$.
\end{definition}

Note that this similarity implicitly depends on the task description $d$ via $p(t)$. This dependency focuses the functional comparison on the intended input domain, diminishing fragmentation due to divergent handling of edge cases undefined by $d$, and reducing variance of Monte Carlo estimation.

\begin{definition}[Functional code equivalence]\label{def:functional_equivalence} Given an environment $e \in \mathcal E$ and a distribution over test suites $p(t)$, we define the equivalence of two implementations $c_1, c_2$ as  
$$\mathrm{sim}^\infty_{p,e} (c_1,c_2):= \mathds{1}_{O(c_1, t \vert e)=O(c_2, t \vert e)\ \forall t\in \mathcal V^* : p(t) > 0}.$$
\end{definition}

The notation here is deliberate. In addition to a kernel structure that will be central to proving Theorem~\ref{thm:measure_smoothing}, we can prove that functional $s$-similarity converges to functional equivalence as $s \to \infty$.

\begin{proposition}[Properties of $s$-similarity]\label{prop:properties_s_similarity}
    The similarity $\mathrm{sim}_{p,e}^s(c_1, c_2)$ is a positive semi-definite kernel with $\mathrm{sim}_{p,e}^s \geq 0$ and $\mathrm{sim}_{p,e}^s(c,c) = 1$. Moreover, $\mathrm{sim}^\infty_{p,e} (c_1,c_2) = \lim_{s \to \infty} \mathrm{sim}^s_{p,e} (c_1,c_2)$.
\end{proposition}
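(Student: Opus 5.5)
Write $k(c_1,c_2) := \mathbb E_{t\sim p}\big[\frac{1}{|t|}\sum_{k=1}^{|t|} \mathds{1}_{O_k(c_1,t\vert e)=O_k(c_2,t\vert e)}\big]$, so that $\mathrm{sim}^s_{p,e} = k^s$. The plan is to prove positive semi-definiteness for $k$ first. Since the Schur product of PSD kernels is PSD, this also gives it for every power $k^s$ with $s\in\mathbb N$. Nonnegativity is immediate because $k$ averages indicators, so $k\in[0,1]$. Likewise $k(c,c)=1$ because every indicator equals $1$ when $c_1=c_2$, and hence $k(c,c)^s=1$.

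For PSD-ness of $k$, I will exhibit an explicit feature map. For fixed $t$ and $k$, the indicator $\mathds{1}_{O_k(c_1,t)=O_k(c_2,t)}$ equals $\sum_{o\in\mathbb O}\mathds{1}_{O_k(c_1,t)=o}\,\mathds{1}_{O_k(c_2,t)=o}$. This is an inner product of one-hot vectors $\phi_{t,k}(c)\in\{0,1\}^{\mathbb O}$, so it is a PSD kernel. For any finite family $c_1,\dots,c_n$ and $\alpha\in\mathbb R^n$ we then get $\sum_{i,j}\alpha_i\alpha_j \mathds{1}_{\cdot}=\|\sum_i\alpha_i\phi_{t,k}(c_i)\|^2\ge 0$. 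Averaging over $k$ with weights $1/|t|$ and taking the expectation over $t\sim p$ keeps this nonnegative, since it is a nonnegative combination (a mixture) of PSD kernels. Because everything is bounded and the sum over $i,j$ is finite, linearity of expectation applies.

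For the limit, I will split into two cases. If $c_1,c_2$ agree on every test case of every $t$ with $p(t)>0$, then $k(c_1,c_2)=1$, so $k^s=1$ for all $s$ and the limit is $1=\mathrm{sim}^\infty$. Otherwise there is some $t_0$ with $p(t_0)>0$ and some index $k_0$ with $O_{k_0}(c_1,t_0)\neq O_{k_0}(c_2,t_0)$. The inner average for $t_0$ is then at most $1-1/|t_0|$, so
\[
k(c_1,c_2)\le 1-\frac{p(t_0)}{|t_0|}<1,
\]
and therefore $k^s\to 0=\mathrm{sim}^\infty$.

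The main obstacle is this strict inequality. It needs $p(t_0)>0$ to be a genuine point mass, which holds because $\mathcal V^*$ is countable and $p$ is a discrete distribution. I will also handle the convention for empty test suites ($|t|=0$) by assuming $|t|\ge1$ on the support. A second subtlety is that full equivalence, $O(c_1,t)=O(c_2,t)$ as vectors, coincides with agreement on all components, which holds by definition.
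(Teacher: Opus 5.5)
Your proposal is correct and follows the paper's proof essentially step for step: one-hot feature maps $\phi_{t,k}$ give the $s=1$ kernel, positive semi-definiteness is preserved under nonnegative mixtures over $k$ and $t$, the Schur product theorem handles the powers, and a two-case argument gives the limit. Your explicit bound $\mathrm{sim}^1_{p,e}(c_1,c_2)\le 1-p(t_0)/|t_0|$ makes rigorous the strict inequality $\mathrm{sim}^1_{p,e}<1$, which the paper only asserts; you should also rename your kernel so it does not clash with the test index $k$.
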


\subsection{Functional similarity neighborhoods induce measure smoothing}\label{sec:environment_induced_measure_smoothing}

Functional equivalence is the canonical structure we must impose on $\mathcal V^*$ to factor out redundant representations, but functional similarity offers a smoother objective that can be used to reduce approximation error (see Section~\ref{sec:similarity_vs_equivalence}). Functional code similarity defines similarity neighborhoods:

\begin{definition}[Fuzzy similarity neighborhood]
    For any code $c \in \mathcal{V}^*$, let $\mathcal{N}_c^s$ be the fuzzy set representing the neighborhood of implementations functionally similar to implementation $c$. It is defined by the membership function $\mu_{\mathcal{N}_c^s} : \mathcal{V}^* \to [0, 1]$ where $\mu_{\mathcal{N}_c^s}(x) := \mathrm{sim}^s_{p,e}(c, x)$.
\end{definition}
For functional equivalence ($s \to \infty$), the sets cease to be fuzzy and become strict equivalence classes.
\begin{proposition}[Functional equivalence classes]\label{prop:equivalence_classes}
  Define $\mathcal N_c^\infty = \{c^\prime \in \mathcal V^* : \mathrm{sim}^\infty_{p,e}(c, c^\prime) = 1 \}$. Then $\{\mathcal N_c^\infty : c \in \mathcal V^*\}$ is a set of equivalence classes partitioning $\mathcal V^*$.  
\end{proposition}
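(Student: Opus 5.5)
The plan is to show that the relation $c \sim c'$, defined by $\mathrm{sim}^\infty_{p,e}(c,c')=1$, is an equivalence relation on $\mathcal V^*$. It then follows that the sets $\mathcal N_c^\infty$ are exactly its equivalence classes, and these partition $\mathcal V^*$. The key observation is that $\mathrm{sim}^\infty_{p,e}(c,c')=1$ holds if and only if $O(c,t\vert e)=O(c',t\vert e)$ for every $t$ in the support $T:=\{t\in\mathcal V^* : p(t)>0\}$. Equivalently, $c\sim c'$ if and only if the maps $\phi_c, \phi_{c'} : T\to \mathbb O^*$, given by $\phi_c(t):=O(c,t\vert e)$, coincide. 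So $\sim$ is the kernel relation of the map $c\mapsto \phi_c$, i.e.\ the pullback of equality on functions $T\to\mathbb O^*$.

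With this reformulation the three axioms come directly from the corresponding properties of equality, checked pointwise in $t\in T$. Reflexivity holds because $\phi_c(t)=\phi_c(t)$; this also agrees with Proposition~\ref{prop:properties_s_similarity}, which gives $\mathrm{sim}^\infty_{p,e}(c,c)=1$. Symmetry holds because the defining condition is symmetric in $c_1,c_2$. Transitivity holds because if $\phi_{c_1}(t)=\phi_{c_2}(t)$ and $\phi_{c_2}(t)=\phi_{c_3}(t)$ for all $t\in T$, then $\phi_{c_1}(t)=\phi_{c_3}(t)$ for all $t\in T$.

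Next we check the partition properties. By reflexivity, $c\in\mathcal N_c^\infty$, so every class is nonempty and the classes cover $\mathcal V^*$. For disjointness, suppose $x\in\mathcal N_c^\infty\cap\mathcal N_{c'}^\infty$. Then symmetry and transitivity give $c\sim c'$. Transitivity then shows that every element of $\mathcal N_c^\infty$ lies in $\mathcal N_{c'}^\infty$, and the reverse inclusion follows the same way, so $\mathcal N_c^\infty=\mathcal N_{c'}^\infty$. Hence any two classes are either equal or disjoint.

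The only real subtlety is transitivity. If one works with $\mathrm{sim}^s$ for finite $s$, or with the limit characterization from Proposition~\ref{prop:properties_s_similarity}, transitivity is not automatic, because fuzzy similarity is not transitive. The proof avoids this by using the indicator form in Definition~\ref{def:functional_equivalence} directly. The point to state carefully is that the quantifier ranges over the fixed support $T$, which is the same set for all pairs. This is what makes the pointwise argument valid, even though $O$ is defined only for individual pairs $(c,t)$.
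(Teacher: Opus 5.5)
Your proposal is correct and follows essentially the same route as the paper: both reduce $\mathrm{sim}^\infty_{p,e}(c,c')=1$ to equality of the outputs $O(c,t\vert e)=O(c',t\vert e)$ on the fixed support $\{t : p(t)>0\}$, and both inherit reflexivity, symmetry, and transitivity from equality. You additionally spell out the standard argument that the resulting classes are nonempty, cover $\mathcal V^*$, and are pairwise equal or disjoint, which the paper leaves implicit.
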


We now lift the concept of functional $s$-similarity to the distribution over implementations $p(c)$.

\begin{definition}[Smoothed code distribution]\label{def:similarity_smoothed_score}
    Given a distribution over implementations $p(c)$, we define the probability measure of a fuzzy similarity neighborhood $\mathcal N_c^s$ as expected membership:
    \begin{equation*}
        \textstyle p(\mathcal N_c^s) := \mathbb E_{c^\prime \sim p}[\mu_{\mathcal{N}_c^s}(c^\prime)] = \sum_{c^\prime \in \mathcal V^*} p(c^\prime)\mathrm{sim}^s_{p,e}(c, c^\prime).
    \end{equation*}
\end{definition}

This measure aggregates the probability mass of all functionally similar implementations, effectively smoothing the distribution over codes. It allows us to consider the probability of program behavior rather than focusing on specific token strings. In the limit of strict equivalence, $p(\mathcal N_c^\infty)$ recovers the probability of strict behavioral equivalence classes. However, as we discuss next, Monte Carlo estimation of strict equivalence classes is biased and suffers from high variance, whereas smoothing reduces the former by linearity and the latter by aggregating information from similar implementations.

\subsection{Estimating functional similarity is more robust than estimating functional equivalence}\label{sec:similarity_vs_equivalence}
In practice, precisely computing $\mathrm{sim}_{p,e}^s$ and $p(\mathcal N_c^s)$ is prohibitively expensive due to requiring integration over the combinatorial $\mathcal V^*$. An effective fallback is provided by Monte Carlo estimation:
\begin{definition}
    For $c_i \sim p(c)$ and $t_j \sim p(t)$ i.i.d., we define the Monte Carlo estimators: 
    \begin{align*}
    \textstyle \widehat{\mathrm{sim}}_{p,e;m}^s(c_1, c_2) \!:=\! \big(\textstyle \frac{1}{m}\! \sum_{j=1}^m \!\!\frac{1}{|t_j|}\!\sum_k\! \mathds{1}_{O_k(c_1, t_j \vert e)=O_k(c_2, t_j \vert e)}\big)^s \!\text{, }
    \textstyle \widehat{p}_{n,m}( \mathcal N_c^s) \!:=\! \textstyle \frac{1}{n}\! \sum_{i=1}^n \!\widehat{\mathrm{sim}}_{p,e;m}^s(c, c_i).
\end{align*}
\end{definition}

Thus far, we have introduced Monte Carlo estimators of algorithmic code correctness $p(\mathcal N_c^s)$ for any sharpness coefficient $s \in \mathbb N$. But what is the benefit of smooth estimators? Since we assume $p$ to be parameterized by an LLM, measure smoothing is desirable for three key reasons: (1) it introduces a beneficial inductive bias that diminishes neural approximation artifacts, (2) it reduces the Monte Carlo estimator's bias, and (3) it significantly increases the estimator's signal-to-noise ratio (SNR).

\begin{theorem}[Measure smoothing inductive bias]\label{thm:measure_smoothing}
    Let $\mathrm{sim}_{p,e}^s(c,c^\prime) \geq 1-\varepsilon$ for $\varepsilon \in (0,1)$. Then 
    \begin{equation*}
    |p(\mathcal N_c^s) - p(\mathcal N_{c^\prime}^s)| \leq \sqrt{2\varepsilon}.
    \end{equation*}
\end{theorem}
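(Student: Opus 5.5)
We will use the kernel structure from Proposition~\ref{prop:properties_s_similarity}. Since $k := \mathrm{sim}^s_{p,e}$ is a positive semi-definite kernel, there is a feature map $\phi$ into a Hilbert space $\mathcal H_k$ with $k(x,y) = \langle \phi(x), \phi(y)\rangle$. Moreover $\|\phi(x)\|^2 = k(x,x) = 1$ for every $x$. The smoothed measure is then an inner product with the kernel mean embedding $\mu_p := \mathbb E_{c''\sim p}[\phi(c'')] = \sum_{c''} p(c'')\phi(c'')$:
\[
p(\mathcal N^s_c) = \sum_{c''} p(c'')\, k(c, c'') = \langle \phi(c), \mu_p\rangle .
\]
Interchanging sum and inner product is legitimate because the series converges absolutely in norm: $\sum_{c''} p(c'')\|\phi(c'')\| = 1$.

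The estimate then reduces to Cauchy--Schwarz:
\[
|p(\mathcal N^s_c) - p(\mathcal N^s_{c'})| = |\langle \phi(c) - \phi(c'), \mu_p\rangle| \le \|\phi(c)-\phi(c')\|\,\|\mu_p\| .
\]
The mean embedding satisfies $\|\mu_p\| \le \sum_{c''} p(c'')\|\phi(c'')\| = 1$ by the triangle inequality, or equivalently by Jensen's inequality. For the feature distance,
\[
\|\phi(c)-\phi(c')\|^2 = k(c,c) + k(c',c') - 2k(c,c') = 2 - 2k(c,c') \le 2\varepsilon,
\]
using $k(c,c)=1$ and the hypothesis $k(c,c')\ge 1-\varepsilon$. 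Taking square roots gives the bound $\sqrt{2\varepsilon}$.

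The only step needing care is the Hilbert-space realization of $\mathrm{sim}^s_{p,e}$ on the countable set $\mathcal V^*$. We take positive semi-definiteness from Proposition~\ref{prop:properties_s_similarity} and invoke the Moore--Aronszajn theorem to obtain the RKHS.

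Alternatively, the feature map can be built explicitly, which avoids the abstract machinery. For $s=1$, let $\phi(c)$ be the vector of indicators $\mathds{1}_{O_k(c,t\vert e)=o}$, indexed by $(t,k,o)$ and weighted by $\sqrt{p(t)/|t|}$. For general $s$, take the $s$-fold tensor power of this vector, so that inner products multiply. Both routes give the same inequality chain above.
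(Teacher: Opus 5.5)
Your proposal is correct and follows the paper's proof essentially step for step: realize $\mathrm{sim}^s_{p,e}$ through a unit-norm feature map, write $p(\mathcal N_c^s)=\langle\phi(c),\mu_p\rangle$ with the mean embedding $\mu_p$, and then combine Cauchy--Schwarz with $\|\phi(c)-\phi(c')\|^2 = 2-2\,\mathrm{sim}^s_{p,e}(c,c')\le 2\varepsilon$ and $\|\mu_p\|\le 1$. Two of your additions go beyond what the paper writes out: you justify swapping the sum and the inner product by absolute norm convergence, and you give an explicit tensor-power feature map; both fill in rigor the paper leaves implicit.
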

Here, the sharpness exponent $s$ determines the granularity at which two solutions are considered similar and thus tunes the strength of the environment-dependent inductive bias.

\begin{theorem}[Consistency and bias]\label{thm:consistency_and_bias}
    $\widehat{\mathrm{sim}}_{p,e;m}^s$ is a strongly consistent estimator of $\mathrm{sim}_{p,e}^s$, i.e., $\mathbb P[\lim_{m \to \infty} \widehat{\mathrm{sim}}_{p,e;m}^s(c_1, c_2) = \mathrm{sim}_{p,e}^s (c_1, c_2)] = 1\ \forall c_1, c_2 \in \mathcal V^*$. However, it is biased unless $s=1$.
\end{theorem}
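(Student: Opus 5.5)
Fix $c_1,c_2$ and write $X_j := \frac{1}{|t_j|}\sum_k \mathds{1}_{O_k(c_1,t_j\vert e)=O_k(c_2,t_j\vert e)} \in [0,1]$ for $t_j \sim p(t)$ i.i.d. The $X_j$ are i.i.d. bounded random variables with mean $\mu := \mathbb E_{t\sim p}[X_1]$. By Definition~\ref{def:functional_code_similarity}, $\mathrm{sim}^s_{p,e}(c_1,c_2)=\mu^s$, and by construction $\widehat{\mathrm{sim}}^s_{p,e;m}(c_1,c_2)=\bar X_m^s$, where $\bar X_m := \frac1m\sum_{j\le m} X_j$. Both claims therefore reduce to statements about a power of a sample mean. (We implicitly assume $|t|\ge 1$ almost surely, so that $X_j$ is well defined.)

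\textbf{Strong consistency.} Since $X_j$ is bounded, it is integrable. The strong law of large numbers then gives $\bar X_m \to \mu$ almost surely. The map $x\mapsto x^s$ is continuous on $[0,1]$, so the continuous mapping theorem yields $\bar X_m^s \to \mu^s$ almost surely, which is exactly the first claim.

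\textbf{Bias.} For $s=1$, linearity of expectation gives $\mathbb E[\bar X_m]=\mu$, so the estimator is unbiased. For $s\ge 2$, the map $x\mapsto x^s$ is strictly convex on $[0,1]$. Jensen's inequality then gives $\mathbb E[\bar X_m^s]\ge \mu^s$, with equality iff $\bar X_m$ is almost surely constant. So the estimator is biased (upward) precisely when $\mathrm{Var}(\bar X_m)=\mathrm{Var}(X_1)/m>0$. For $s=2$ this can be made explicit: $\mathbb E[\bar X_m^2]=\mu^2+\mathrm{Var}(X_1)/m$.

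\textbf{Main obstacle.} The delicate point is the word ``unless''. If $X_1$ is degenerate, the estimator is trivially unbiased for every $s$. This happens, for instance, when $c_1=c_2$, when $c_1,c_2$ are functionally equivalent, or when they disagree on every test. So the bias statement must be read as: for $s\ge2$ the estimator is biased in general, i.e.\ whenever $X_1$ is non-degenerate under $p(t)$. I would state this caveat explicitly. To show that bias really does occur, it suffices to exhibit one pair $(c_1,c_2)$ for which $X_1$ takes at least two values with positive probability, for example codes that agree on one test suite in the support of $p$ and disagree on another. Strict Jensen then gives strict positive bias, and the bias vanishes at rate $O(1/m)$ by a Taylor expansion around $\mu$.
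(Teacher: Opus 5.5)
Your proof is correct and follows essentially the same route as the paper: the strong law of large numbers plus the continuous mapping theorem for consistency, linearity of expectation for $s=1$, and strict Jensen for $s>1$, with the same non-degeneracy caveat (the paper also requires $Z_j$ not to be almost surely constant). Your explicit list of degenerate cases and the closed form $\mathbb E[\bar X_m^2]=\mu^2+\mathrm{Var}(X_1)/m$ state the caveat a bit more sharply than the paper does, but the argument is the same.
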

Here, the bias arises from Jensen's inequality combined with the unbiasedness of $\widehat{\mathrm{sim}}_{p,e;m}^1$.

\begin{theorem}[SNR dominance]\label{thm:snr_dominance}
    Let $\mathrm{sim}_{p,e}^1(c_1, c_2) = \mu \in (0,1)$ be the true functional similarity.
We define the signal-to-noise Ratio (SNR) of an estimator $\widehat{X}$ as $\mathrm{SNR}(\widehat{X}) := (\mathbb{E}[\widehat{X}])^2 / \mathrm{Var}(\widehat{X})$.
    For any number of test suites $m \ge 1$, the SNR of the smooth estimator dominates the sharp estimator by a factor of at least $m^2$:
    \begin{equation*}
        {\mathrm{SNR}(\widehat{\mathrm{sim}}_{p,e;m}^1(c_1, c_2))}/{\mathrm{SNR}(\widehat{\mathrm{sim}}_{p,e;m}^\infty(c_1, c_2))} \geq m (1/{\mu})^{m-1} \cdot \tfrac{1-\mu^m}{1-\mu} \ge m^2
    \end{equation*}
\end{theorem}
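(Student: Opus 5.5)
The plan is to compute both SNRs in closed form, or at least bound them in the right direction, using the i.i.d. structure over test suites. Throughout, $c_1,c_2$ are fixed and the only randomness is in $t_1,\dots,t_m \sim p(t)$ i.i.d. For each $j$, let $X_j := \frac{1}{|t_j|}\sum_k \mathds{1}_{O_k(c_1,t_j\vert e)=O_k(c_2,t_j\vert e)} \in [0,1]$, so that $\mathbb E[X_j]=\mu$. Let $Z_j := \mathds{1}_{O(c_1,t_j\vert e)=O(c_2,t_j\vert e)}$ be the indicator that the two codes agree on every test case of $t_j$.

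The sharp estimator $\widehat{\mathrm{sim}}^\infty_{p,e;m}$ should be read as the empirical analogue of Definition~\ref{def:functional_equivalence}, namely $\prod_{j=1}^m Z_j$: equivalence is declared only if all sampled suites agree. This is also the $s\to\infty$ limit of $\widehat{\mathrm{sim}}^s_{p,e;m}$, as in Proposition~\ref{prop:properties_s_similarity}.

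\textbf{Step 1 (smooth estimator).} We have $\widehat{\mathrm{sim}}^1_{p,e;m}=\frac1m\sum_j X_j$, whose mean is $\mu$ and whose variance is $\mathrm{Var}(X_1)/m$. Since $X_1\in[0,1]$, we have $X_1^2\le X_1$, so $\mathrm{Var}(X_1)\le \mu-\mu^2=\mu(1-\mu)$. Hence
\[
\mathrm{SNR}(\widehat{\mathrm{sim}}^1_{p,e;m}) \ge \frac{m\mu}{1-\mu}.
\]

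\textbf{Step 2 (sharp estimator).} Let $q:=\mathbb P[Z_1=1]$. Since $Z_j\le X_j$ pointwise (full agreement forces the agreement fraction to equal $1$), we get $q\le\mu$. The product $\prod_j Z_j$ is Bernoulli with parameter $q^m$, so its SNR is $q^{2m}/(q^m(1-q^m)) = q^m/(1-q^m)$. This is increasing in $q$, so it is at most $\mu^m/(1-\mu^m)$. If $q=0$ the SNR is degenerate (zero signal), and the bound holds trivially, interpreting the ratio as $+\infty$.

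\textbf{Step 3 (combine).} Dividing the two bounds gives
\[
\frac{\mathrm{SNR}(\widehat{\mathrm{sim}}^1_{p,e;m})}{\mathrm{SNR}(\widehat{\mathrm{sim}}^\infty_{p,e;m})} \ge \frac{m\mu}{1-\mu}\cdot\frac{1-\mu^m}{\mu^m} = m\,(1/\mu)^{m-1}\,\frac{1-\mu^m}{1-\mu},
\]
which is the first inequality. For the second, write $\frac{1-\mu^m}{1-\mu}=\sum_{k=0}^{m-1}\mu^k$. Each term satisfies $\mu^k \ge \mu^{m-1}$ because $\mu\in(0,1)$, so the sum is at least $m\mu^{m-1}$. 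Multiplying by $m\mu^{-(m-1)}$ yields $m^2$.

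\textbf{Main obstacle.} I expect the main obstacle to be pinning down the sharp estimator and its relation to $\mu$. The statement is phrased via $\mathrm{sim}^1=\mu$, while the sharp estimator depends on the full-agreement probability $q$. The key observations are that $q\le\mu$ and that the SNR of the sharp estimator is monotone in $q$; both must be checked carefully, together with the convention for the degenerate case $q=0$. The variance bound for $X_j$ in Step 1 is the only other inequality used. It is tight exactly when $X_j$ is $\{0,1\}$-valued, e.g.\ single-test suites, in which case $q=\mu$ and the first inequality becomes an equality.
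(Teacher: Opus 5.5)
Your proposal is correct and follows the paper's proof essentially step for step. Both use the bound $\mathrm{Var}(X_1)\le\mu(1-\mu)$ for the smooth estimator, write the sharp estimator as a Bernoulli product with full-agreement probability $q\le\mu$ (which bounds its SNR via monotonicity of $x\mapsto x^m/(1-x^m)$), and finish with the same geometric-series comparison; your explicit treatment of the degenerate case $q=0$, where the sharp SNR is $0/0$, is a small point the paper leaves implicit.
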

Theorem~\ref{thm:snr_dominance} contains a uniform lower bound of $m^2$, and the much stronger $\mu$-dependent lower bound in $\Omega(m(1/\mu)^{m})$. Thus, the SNR dominance is strongest for small $\mu$, i.e., for diverse generations.

\subsection{A dual expression for code correctness}
\label{sec:primal_dual}

Revisit the generative process of Figure~\ref{fig:probabilistic_model}. Given an environment $e \in \mathcal E$ and an executable specification $t \in \mathcal V^*$, a correct implementation $c \in \mathcal V^*$ is such that $R(c,t\,\vert\, e) = 1$. Thus, code correctness can be estimated via $p(\mathcal N_c^\infty)$, or, alternatively, via $\mathbb E_{t\sim p}[R(c,t\, \vert \, e)^\infty]$. Both are equally valid a priori.
\begin{definition}[Code-test calibration]\label{def:code_test_calibration}
    A model $p(c,t)$ is code-test calibrated in an environment $e \in \mathcal E$ if 
    \begin{equation*}
        p(\mathcal N_c^\infty) = \mathbb P_{t \sim p}[R(c,t\, \vert \, e) = 1].
    \end{equation*}
\end{definition}
Note that if a model is code-test calibrated, we expect equal downstream performance with both the primal and dual expressions. Yet, as our experiments in Section~\ref{sec:experiments_post_generation_selection_heuristics} demonstrate, current frontier models are miscalibrated in practice. In analogy to $p(\mathcal N_c^s)$, a smooth estimator $\mathbb E_{t\sim p}[R(c,t\,\vert\, e)^s]$ is also considered. The SNR dominance of Theorem~\ref{thm:snr_dominance} still applies, motivating the smooth variant.

\subsection{Popular selection 
heuristics as self-consistency}
Our framework allows us to identify several well-established selection heuristics as maximizing an estimate of the probability of correctness $p(\mathcal N_c^\infty)$. Table~\ref{tab:types_of_self_consistency} provides a summary of the code selection literature based on our formalism, with Appendix~\ref{sec:popular_heuristics} providing more information. In the experiments Section~\ref{sec:experiments}, we focus on boolean test outputs $\mathbb O$, i.e., the canonical case of test-driven development. 

\begin{table}
    \centering
    \vspace{-14pt}
    \caption{Selection heuristics maximize pass@1 using different estimators of code correctness. The test output $\mathbb O$ is binary $\mathbb B$ (pass/fail) or arbitrary $\mathbb A$. The asterisk marks that only single-test test suites were considered by \citet{lahiri2023interactivecodegenerationtestdriven, roziere2022leveraging, le2022coderl, chen2024b4}.}
    \label{tab:types_of_self_consistency}
    \setlength{\tabcolsep}{4pt}
    \resizebox{\linewidth}{!}{%
    \begin{tabular}{lcl|lcl}
         & $\mathbb O$ & selection criterion &  & $\mathbb O$ & selection criterion \\
         \hline
         \textsc{MBR-Exec Hard} \citep{shi2022mbr} & $\mathbb B$ & $\hat{p}_{n,m}(\mathcal N^\infty_c)$ & \textsc{MBR-Exec Soft} \citep{shi2022mbr} &  $\mathbb B$ & $\hat{p}_{n,m}(\mathcal N^1_c)$\\
         \textsc{AlphaCode} \citep{li2022competition} &  $\mathbb A$ & $\hat{p}_{n,m}(\mathcal N^\infty_c)$ & \textsc{FunCoder} \citep{chen2024divideandconquer} &  $\mathbb A$ & $\hat{p}_{n,m}(\mathcal N^1_c)$\\
         \textsc{MaxPass Hard} (various works$^\ast$) &  $\mathbb B$ & $\hat{\mathbb E}_{t_{1:m}}[R^\infty]$ & \textsc{MaxPass Soft} (ours) &  $\mathbb B$ & $\hat{\mathbb E}_{t_{1:m}}[R]$\\
         \textsc{CodeT Hard} \citep{chen2023codet} & $\mathbb B$ & $\hat{p}_{n,m}(\mathcal N^\infty_c) \hat{\mathbb E}_{t_{1:m}}[R]$ & \textsc{CodeT Soft} (ours) &  $\mathbb B$ & $\hat{p}_{n,m}(\mathcal N^1_c) \hat{\mathbb E}_{t_{1:m}}[R]$\\
         \hline
    \end{tabular}
    }
\end{table}

\subsection{Maximizing pass@k via probability of correctness}
Our analysis focuses on robust estimators of code correctness, i.e., pass@1. Nevertheless, as Proposition~\ref{prop:additivity_probability_of_correctness} shows, such estimators can also be utilized to directly maximize pass@k.
\begin{proposition}[Additivity of probability of correctness]\label{prop:additivity_probability_of_correctness}
    Suppose that the true test suite $t \in \mathcal V^*$ verifying an implementation $c \in \mathcal V^*$ in environment $e\in \mathcal E$ is almost surely such that $\exists !$ correct $\mathcal N_c^\infty$. Define $\mathrm{pass@k}(\{\mathcal N_{c_i}^\infty\}_{i=1}^k) := \mathbb P_{t\sim p} [\exists i \in \{1,\ldots,k\} : R(c_i,t\, \vert\, e)=1]$. Then, $\mathrm{pass@k}$ is maximized by greedily selecting the $k$ code classes $\{\mathcal N_{c_i}^\infty\}_{i=1}^k$ with largest $\mathrm{pass@1}$.
\end{proposition}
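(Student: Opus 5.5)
The plan is to turn pass@k into a sum of disjoint events. Once that is done, greedy selection by pass@1 is immediately optimal. Throughout, condition on the environment $e$ and work with the random true test suite $t \sim p$.

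The first step is to show that for a fixed $t$, correctness is constant on each equivalence class $\mathcal N_c^\infty$. By Proposition~\ref{prop:equivalence_classes} these classes partition $\mathcal V^*$. If $c' \in \mathcal N_c^\infty$, then Definition~\ref{def:functional_equivalence} gives $O(c,t\,\vert\,e)=O(c',t\,\vert\,e)$ for every $t$ in the support of $p$, and hence $R(c,t\,\vert\,e)=R(c',t\,\vert\,e)$. So the event $A_{\mathcal N} := \{t : R(c,t\,\vert\,e)=1\}$ is well defined for each class $\mathcal N=\mathcal N_c^\infty$, independent of the representative $c$. Its probability $\mathbb P_{t\sim p}[A_{\mathcal N}]$ is exactly the pass@1 of that class.

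The second step uses the uniqueness hypothesis. Almost surely there is exactly one correct class, so the events $A_{\mathcal N}$ for distinct classes are pairwise disjoint up to null sets. The $k$ selected classes should be taken distinct, since choosing the same class twice adds no new event. Then
\begin{equation*}
\mathrm{pass@k}(\{\mathcal N_{c_i}^\infty\}_{i=1}^k) = \mathbb P\Big[\textstyle\bigcup_{i=1}^k A_{\mathcal N_{c_i}^\infty}\Big] = \textstyle\sum_{i=1}^k \mathbb P[A_{\mathcal N_{c_i}^\infty}] = \sum_{i=1}^k \mathrm{pass@1}(\mathcal N_{c_i}^\infty),
\end{equation*}
where the middle equality is additivity over disjoint events.

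The third step is the optimization. Maximizing a sum of nonnegative per-class values over $k$-subsets of distinct classes is solved by taking the $k$ classes with the largest values. This is an exchange argument: if some chosen class has a smaller value than some unchosen one, swapping them does not decrease the sum. Ties may be broken arbitrarily. The set of classes may be countably infinite, but the pass@1 values are summable, with total at most $1$ by disjointness, so the top $k$ values exist and the supremum is attained.

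The only delicate step is the second: the disjointness and the "almost surely" qualifier must be handled so that the union bound holds with equality. Null sets do not affect probabilities, so this reduces to checking that the exceptional set on which uniqueness fails has measure zero.
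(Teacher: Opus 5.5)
Your proof is correct and follows essentially the same route as the paper. The uniqueness hypothesis makes the per-class success events $\{t : R(c,t\,\vert\,e)=1\}$ pairwise disjoint up to null sets, so pass@k splits into a sum of pass@1 values, and maximizing that sum over $k$ distinct classes is solved by taking the top $k$. Your extra checks are sound details the paper leaves implicit: that the success event depends only on the class and not the representative, and that the top $k$ values are attained even with countably many classes because the pass@1 values sum to at most $1$.
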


\section{Environment feedback during generation}\label{sec:feedback_during_generation}
Having established a unifying framework for environment-based post-generation selection, we now turn to environment feedback \textit{during} generation. Recall from Figure~\ref{fig:probabilistic_model} that the executable specification (test suite) and the algorithm implementation depend causally on the latent abstract algorithm $a \in \mathcal A$ and the execution environment $e \in \mathcal E$. Whereas the abstract algorithm remains unobservable beyond the task description $d \in \mathcal V^*$, the environment can be probed to retrieve properties like \textit{syntax rules} and \textit{library behaviors} via execution results $R(c, t\,\vert\, e) \in [0,1]$ or descriptive reports $U(c, t\,\vert\, e) \in \mathcal V^*$. 

\subsection{Backprompting approximates Thompson sampling}
To permit theoretical analysis of backprompting, we model the iterative refinement loop as a sequential bandit problem where the LLM acts as an in-context conditioned policy. We define an action $x := (c, t) \in \mathcal V^{\text{max output length}}$ as a pair consisting of an implementation and a test suite. Thompson sampling then selects an action $x$ according to the probability that it is the optimal solution $x^*$ given the task $d$ and observation history $\mathcal{H}_n$, i.e., its policy is $\pi(x\,\vert\,\mathcal{H}_n, d) := \mathbb P[x= x^* \,\vert\, \mathcal{H}_n, d]$.

We argue that LLMs naturally learn to approximate this policy. Pre-training data for code often contains description-attempt-solution triplets $(d, \mathcal{H}, x^*)$---for example, software repositories where commit histories document failed attempts followed by a verified solution. Following \citeauthor{de2025simplifying} \citeyearpar{de2025simplifying}, training a sequence model on such trajectories with a cross-entropy loss forces the model to minimize the Kullback-Leibler divergence to the true conditional distribution of the optimum:
\begin{align*}
    \textstyle \theta^* := & \arg\!\min_\theta \mathbb{E}_{(d, \mathcal{H}, x^*) \sim p_{\text{data}}} \left[ -\log p_\theta(x^* \,\vert\, \mathcal{H}, d) \right]\\
    = & \textstyle \arg\!\min_\theta \mathbb{E}_{(d, \mathcal{H}) \sim p_{\text{data}}}\! \left[ D_{KL}(p_{\text{data}}(x^*\vert \mathcal{H}, d) \vert\vert p_\theta(x^* \vert \mathcal{H}, d)) \right]\!.
\end{align*}
Hence, by backprompting failed attempts during generation, we sample $x_{n+1} \sim p_\theta(\, \cdot \,\vert\, \mathcal{H}_n, d)$, which, due to pre-training, results in drawing from the model's learned belief over the optimal solution $x^*$.

This equivalence allows us to lift regret bounds from Bayesian optimization to our setting. We establish a novel regret bound for Thompson sampling which accounts for unobservable reward components, reflecting irreducible under-specification of the abstract algorithm in the task description.

\subsection{A Bayesian regret bound for Thompson sampling with unobservable reward components}\label{sec:regret_bounds}
Suppose a reward function $r(x) = r_{obs}(x) + r_{hid}(x)$. For us, $r_{obs}(c,t) = R(c,t \, \vert \, e)$ is an observable consistency score between codes, tests, and environments. In contrast, $r_{hid}(c,t)$ may denote the task relevancy of $(c,t)$ to the unobservable latent algorithm $a \in \mathcal A$, partially described by $d \in \mathcal V^*$. Note that test pass rates $r_{obs}$ are bounded in $[0,1]$ and thus inherently subgaussian by Hoeffding's Lemma.

We consider the regret against the true optimal solution $x^* := \arg\max_x r(x)$ and against Thompson sampling with full knowledge of the environment, i.e., against $x^*_{obs} \sim \mathbb P[x^* \,\vert\, r_{obs}]$.

\begin{minipage}{\linewidth}
\begin{theorem}\label{thm:reward_bound}
Decompose the reward into independent components $r := r_{obs} + r_{hid}$, where $\mathbb E[r_{hid}]$ is finite, and $r_{obs}(x)$ is $\sigma_n(x)$-subgaussian given observation history $\mathcal H_n := (x_i, r_{obs}(x_i))_{i=1}^{n-1}$. Define the Thompson sampling policies $x_{obs}^* \sim \mathbb P[x^* \,\vert\, r_{obs}]$ and $x_n \sim \mathbb P[x^* \,\vert\, \mathcal H_n]$ for $x^* := \arg\max_x r(x)$. Then
\begin{equation*}
    \textstyle \mathbb E[\sum_{n=1}^T r(x^*_{obs}) - r(x_n)] \leq \beta\sqrt{T \cdot \Gamma_T}
\end{equation*}
for $\beta := 1 \!+\! \sqrt{2 \log(2 |r|) \!+\! 2}$ and cumulative uncertainty $\Gamma_T := \mathbb E[\sum_{n=1}^T\!\sigma_n^2(x_n)]$,
\end{theorem}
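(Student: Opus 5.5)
The argument follows the Russo--Van Roy analysis of Thompson sampling via upper confidence bounds, adapted so that the hidden component cancels. The key observation is that $x_n$ and $x^*_{obs}$ are posterior samples of $x^*$ under different information sets, namely $\mathcal H_n$ and the full function $r_{obs}$. Since $\mathcal H_n$ is a function of $r_{obs}$ (and the algorithm's internal randomness), the tower property gives $\mathbb P[x^*_{obs} = \cdot \mid \mathcal H_n] = \mathbb P[x^* = \cdot \mid \mathcal H_n]$. Hence, conditionally on $\mathcal H_n$, $x^*_{obs}$ and $x_n$ have the same distribution.

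For each round I introduce the confidence bound $U_n(x) := \mathbb E[r_{obs}(x)\mid \mathcal H_n] + \sqrt{2\log(2|r|)+2}\,\sigma_n(x) + \mathbb E[r_{hid}(x)]$, where $|r|$ denotes the number of actions, so that the union bound is finite. Because $U_n$ is $\mathcal H_n$-measurable and the two actions are conditionally equidistributed, $\mathbb E[U_n(x^*_{obs})] = \mathbb E[U_n(x_n)]$. The per-round regret then splits as
\begin{equation*}
\mathbb E[r(x^*_{obs}) - r(x_n)] = \mathbb E[r(x^*_{obs}) - U_n(x^*_{obs})] + \mathbb E[U_n(x_n) - r(x_n)].
\end{equation*}
The hidden component contributes only through its mean. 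It is independent of $r_{obs}$, so it is independent of $\mathcal H_n$, and by that independence $\mathbb E[r_{hid}(x)\mid \mathcal H_n, x] = \mathbb E[r_{hid}(x)]$ for $x\in\{x_n, x^*_{obs}\}$, since both actions are selected from information that does not depend on $r_{hid}$. This is why $U_n$ absorbs $\mathbb E[r_{hid}]$, and finiteness of that mean is all that is required.

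The second term is the easy one. Since $x_n$ is $\mathcal H_n$-measurable given the policy randomness, it equals $\sqrt{2\log(2|r|)+2}\;\mathbb E[\sigma_n(x_n)]$. The first term is $\mathbb E[r_{obs}(x^*_{obs}) - \mathbb E[r_{obs}\mid\mathcal H_n](x^*_{obs}) - \sqrt{\cdot}\,\sigma_n(x^*_{obs})]$. I bound it by $\mathbb E[\max_x (Z_n(x) - \sqrt{\cdot}\,\sigma_n(x))^+]$, where $Z_n(x)$ is the centered, $\sigma_n(x)$-subgaussian posterior deviation.

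The main obstacle is this step: showing that the maximal positive excess over $|r|$ subgaussian variables is at most a quantity that yields the extra additive "1" in $\beta$. I would first try the standard tail integration, using
\begin{equation*}
\mathbb E[(Z-\gamma\sigma)^+] \le \sigma\int_\gamma^\infty e^{-u^2/2}\,du \le \sigma e^{-\gamma^2/2}/\gamma,
\end{equation*}
with $\gamma^2 = 2\log(2|r|)+2$, summed over $|r|$ actions. This gives a total of at most $\max_x\sigma_n(x)/(2e\gamma)$ per round. That bound is in terms of $\max_x\sigma_n(x)$ rather than $\sigma_n(x_n)$, so to convert it I would use $\mathbb E[\sigma_n(x^*_{obs})] = \mathbb E[\sigma_n(x_n)]$ again, weighting each action's tail term by its posterior probability of being selected. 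If that weighting fails, I would fall back to a constant per-round slack, which would change the form of the bound. Ideally the first term is then at most $\mathbb E[\sigma_n(x_n)]$, which is the source of the $1$ in $\beta$.

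Summing over rounds gives regret at most $\beta\,\mathbb E[\sum_n \sigma_n(x_n)]$. Cauchy--Schwarz, applied both to the sum and to the expectation, then gives $\le \beta\sqrt{T\,\mathbb E[\sum_n\sigma_n^2(x_n)]} = \beta\sqrt{T\Gamma_T}$.
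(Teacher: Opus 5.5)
Your reduction is sound up to the first term. The conditional equidistribution of $x^*_{obs}$ and $x_n$ given $\mathcal H_n$ is correct, and so are the removal of $r_{hid}$ through its mean and the value $\gamma\,\mathbb E[\sigma_n(x_n)]$ of the second term, with $\gamma:=\sqrt{2\log(2|r|)+2}$. The gap is the step you leave open, and the route you sketch does not close it.

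Tail integration plus a union bound controls $\mathbb E[(Z_n(x^*_{obs})-\gamma\sigma_n(x^*_{obs}))^+\mid\mathcal H_n]$ only by $\frac{1}{2e\gamma|r|}\sum_x\sigma_n(x)$. That is an average over \emph{all} actions, including never-played ones whose $\sigma_n$ need not shrink. Summed over rounds it can grow linearly in $T$, and $\Gamma_T$ cannot control it, since $\Gamma_T$ only involves $\sigma_n$ at played actions.

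Reweighting by selection probabilities is not a valid change of measure here. The identity $\mathbb E[\sigma_n(x^*_{obs})]=\mathbb E[\sigma_n(x_n)]$ holds because $\sigma_n$ is $\mathcal H_n$-measurable. But $x^*_{obs}$ is a function of $r_{obs}$ and hence correlated with $Z_n$. So in general $\mathbb E[(Z_n(x^*_{obs})-\gamma\sigma_n(x^*_{obs}))^+\mid\mathcal H_n]\neq\sum_x p_n(x)\,\mathbb E[(Z_n(x)-\gamma\sigma_n(x))^+\mid\mathcal H_n]$, where $p_n(x):=\mathbb P[x_n=x\mid\mathcal H_n]$. Your fallback of a constant per-round slack would prove a different statement.

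The missing idea is to decouple the selected index from the noise by Cauchy--Schwarz, as the paper does in Lemma~\ref{lem:uniform_bound}. Write $Z_n(x)=\sigma_n(x)\tilde Z_n(x)$ with $\tilde Z_n(x)$ conditionally $1$-subgaussian. Then
\[
\mathbb E[Z_n(x^*_{obs})\mid\mathcal H_n]\le\sqrt{\mathbb E[\sigma_n^2(x^*_{obs})\mid\mathcal H_n]\;\mathbb E[\max_x\tilde Z_n(x)^2\mid\mathcal H_n]}\le\gamma\sqrt{\mathbb E[\sigma_n^2(x_n)\mid\mathcal H_n]}.
\]
The maximum is now over normalized variables, whose expected square is at most $\gamma^2$ by tail integration and a union bound. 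Meanwhile $\sigma_n^2$ stays attached to the selected action, where your equidistribution identity applies.

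With this decoupling the UCB bonus is unnecessary. Because $x_n$ uses fresh randomness, $\mathbb E[\mu_n(x_n)-r_{obs}(x_n)\mid\mathcal H_n]=0$ for $\mu_n:=\mathbb E[r_{obs}\mid\mathcal H_n]$. Alternatively, you can keep the bonus: apply Cauchy--Schwarz to each summand $\mathbb E[(Z_n(x)-\gamma\sigma_n(x))^+\mathds{1}\{x^*_{obs}=x\}\mid\mathcal H_n]$, which produces weights $\sqrt{p_n(x)}$, then Cauchy--Schwarz over $x$. This bounds your first term by $(\gamma\sqrt e)^{-1}\sqrt{\mathbb E[\sigma_n^2(x_n)\mid\mathcal H_n]}$.

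Either way, the per-round bound is in terms of $\sqrt{\mathbb E[\sigma_n^2(x_n)]}$, not $\mathbb E[\sigma_n(x_n)]$. So your last step must be Jensen followed by Cauchy--Schwarz over $n$.

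The paper reaches the same point via a coupling. It realizes $x^*_{obs}$ and $x_n$ as argmaxes of $r_{obs}+r''_{hid}$ and of an independent posterior copy $r'_{obs}+r'_{hid}$, which rewrites the per-round regret as $\mathbb E[r'_{obs}(x_n)-r_{obs}(x_n)]$. It then bounds the deviation of the selecting copy by $\gamma\sqrt{\cdot}$ and that of the independent copy by $\sqrt{\cdot}$; the latter is the $1$ in $\beta$.
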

\end{minipage}

\paragraph{Interpretation} If $r_{obs}$ has independent entries, then the cumulative uncertainty $\Gamma_T$ is upper bounded by $|r| \cdot \max_x \sigma_1(x)^2$. Conversely, if $r_{obs}$ has correlated entries, we expect faster convergence. Indeed, for correlated strictly subgaussian rewards, observing $r_{obs}(x_n)$ reduces uncertainty at $x \not = x_n$ in expectation (see Lemma~\ref{lem:strict_expected_subgaussian_variance}). For Gaussian $r_{obs}$ with covariance matrix $\Sigma$ we can compute the benefit of correlation information in closed form: $\Gamma_T \leq \text{tr}(\Sigma)$, see Lemma~\ref{lem:information_gain_for_gaussians_noiseless}. Typically, $\text{tr}(\Sigma) \ll |r| \cdot \max_x \sigma_1(x)^2$, since it represents all degrees of freedom of the unknown function $r_{obs}$.

The regret in Theorem~\ref{thm:reward_bound} is stated with respect to a Thompson sampling policy with full knowledge on $r_{obs}$. The regret with respect to the ground-truth optimal solution follows:

\begin{corollary}\label{cor:reward_bound}
    Consider the setting of Theorem~\ref{thm:reward_bound}. Then for $\Delta := \mathbb E[r(x^*) - r(x^*_{obs})]$ it holds that
    \begin{equation*}
        \textstyle \mathbb E[\sum_{n=1}^T r(x^*) - r(x_n)]\leq \beta\sqrt{T \cdot \Gamma_T} + T \Delta.
    \end{equation*}
\end{corollary}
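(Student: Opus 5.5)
The plan is to split the regret against the ground-truth optimum $x^*$ into the regret against the oracle Thompson policy $x^*_{obs}$, plus the gap between $x^*_{obs}$ and $x^*$. For every round $n$ we write, pointwise,
\begin{equation*}
    r(x^*) - r(x_n) = \big(r(x^*) - r(x^*_{obs})\big) + \big(r(x^*_{obs}) - r(x_n)\big).
\end{equation*}
We then sum this over $n = 1,\ldots,T$ and take expectations.

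By linearity of expectation, the total regret equals
\begin{equation*}
    \textstyle \sum_{n=1}^T \mathbb E[r(x^*) - r(x^*_{obs})] + \mathbb E[\sum_{n=1}^T r(x^*_{obs}) - r(x_n)].
\end{equation*}
The second term is bounded by $\beta\sqrt{T\cdot\Gamma_T}$, directly by Theorem~\ref{thm:reward_bound}.

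For the first term, the key observation is that neither $x^*_{obs}$ nor $x^*$ depends on $n$. The policy $x^*_{obs}\sim\mathbb P[x^*\,\vert\, r_{obs}]$ has full knowledge of $r_{obs}$ and does not evolve with the observation history. Hence each summand equals the same constant $\Delta$, and the sum is $T\Delta$.

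The only point needing care is that these expectations are well defined, so that linearity may be applied. This requires $\mathbb E[r(x^*)]$, $\mathbb E[r(x^*_{obs})]$ and $\mathbb E[r(x_n)]$ to be finite. This follows from the assumptions of Theorem~\ref{thm:reward_bound}: $r_{obs}$ is bounded (or subgaussian, hence integrable even when maximized over the finite index set of size $|r|$), and $\mathbb E[r_{hid}]$ is finite. The same maximal-inequality argument gives integrability of $\max_x r(x)$. Combining the two bounds yields the corollary.
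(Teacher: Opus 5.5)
Your proposal is correct and matches the argument the paper intends. The paper gives no separate proof and only says the corollary ``follows'' from Theorem~\ref{thm:reward_bound}: add and subtract $r(x^*_{obs})$ in each round, bound the second sum by the theorem, and note that each of the $T$ gap terms has expectation $\Delta$, because the joint law of $(r, x^*, x^*_{obs})$ does not depend on $n$. Your integrability remark (finite index set, integrable $r_{obs}$ and $r_{hid}$, so $\max_x r(x)$ is integrable) fills in a detail the paper leaves implicit.
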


\paragraph{Interpretation} The bound decomposes cumulative regret into two distinct terms. The first term, ${\mathcal{O}}(\sqrt{T})$, is sub-linear and represents the \textit{learnable} part of the problem: by interacting with the environment (running tests on implementations), the assistant reduces its uncertainty about $r_{obs}$ (syntax, library behavior, etc.). The second term, $T\Delta$, is linear. It represents the \textit{irreducible regret}: even if the assistant perfectly understands the execution environment ($r_{obs}$), the informal description $d$ may still be ambiguous regarding the true latent algorithm $a$. No amount of execution feedback can resolve this specific ambiguity. As we uncover in the experiments, $\Delta$ can be directly estimated in practice by measuring the performance gap at convergence between an oracle setting (where an unambiguous external test suite reduces $\Delta$ to zero) and a self-test setting.

\paragraph{Predicting reward} Given tight regret bounds, a saturated $\Gamma_T \approx \Gamma$, and with $\sqrt{T} \text{-} \sqrt{T\text{-}1} \approx T^{\, \text{-}1/2}/2$, we get the behavior
\begin{equation}\label{eq:predicting_reward_curves}
    \mathbb E[r(x_T)] \approx \mathbb E[r(x^*)] -  \Delta - T^{-1/2}\cdot \beta \cdot \sqrt{\Gamma}/ 2.
\end{equation}
We thus expect in-context Thompson sampling to follow a reward curve that approaches a ceiling determined by $\Delta$ at a rate of $\theta(1/\sqrt{T})$. As Figure~\ref{fig:in_context_thompson_sampling_line_plot} demonstrates, this indeed occurs if the irreducible regret $\Delta$ is small and does not dominate the learnable part (e.g., purple lines across all three datasets).

\section{Experiments}\label{sec:experiments}
To corroborate our theory, we empirically compare the selection heuristics unified in Section~\ref{sec:post_generation_selection_via_feedback}, and confirm the fundamental limits of in-the-loop backprompting of environment feedback, established theoretically in Section~\ref{sec:feedback_during_generation} via identification of backprompting with approximate in-context Thompson sampling. Of particular interest is the impact of the sharpness factor $s$ and irreducible regret $\Delta$.

\subsection{Setup}
\paragraph{Datasets}
We consider three challenging python competitive coding datasets. \textit{BigCodeBenchHard} is the partition of the 148/1140 most difficult tasks in BigCodeBench~\citep{zhuo2024bigcodebench}. It adopts the format of HumanEval~\citep{chen2021evaluating}, but covers interaction with 139 external libraries.
\textit{QiskitHumanEvalSim} is a novel subset of the 143/151 QiskitHumanEval~\citep{vishwakarma2024qiskit} questions that can be solved without access to a real-world quantum computer, democratizing the dataset. Finally, \textit{LeetCodeDataset} consists of 228 diverse puzzle-style questions from LeetCode released after 2024-07-01 and collected by \citet{xia2025leetcodedatasettemporaldatasetrobust}. In contrast to the other datasets, LeetCodeDataset typically includes example input-output pairs in natural language, significantly reducing the \text{irreducible regret} $\Delta$ identified in Corollary~\ref{cor:reward_bound}. For additional information, see Appendix~\ref{sec:datasets}.

\paragraph{Models}
We consider five open-weight LLMs: Qwen3-235B-A22B-Instruct-2507~\citep{yang2025qwen3technicalreport}, the hybrid Transformer-SSM models Qwen3-Next and Qwen3-Coder-Next~\citep{qwen3codernexttechnicalreport}, GPT-OSS-120B~\citep{agarwal2025gpt}, and MiniMax-M2.1~\citep{chen2025minimax}. Whereas the Qwen3 models respond immediately, the others are chain-of-thought reasoning models.
% that spend significant compute effort on chain-of-thoughts. 
Only MiniMax-M2.1 and Qwen3-Coder-Next are dedicated coding models. More information is in Appendix~\ref{sec:extensive_model_description}. 
% Results on Qwen3-Next and Qwen3-Coder-Next, two Transformer-SSM models, are in Appendix~\ref{sec:additional_experiments}.

\subsection{Post-generation selection heuristics}\label{sec:experiments_post_generation_selection_heuristics}
Consider Table~\ref{tab:post_generation_selection_methods_results} (and its more detailed version in Figure~\ref{fig:post_generation_selection_heuristics_line_plot} of Appendix~\ref{sec:post_selection_appendix}), reporting on the accuracy of the three main model families (Qwen3-235B-A22B-Instruct-2507, GPT-OSS-120B, MiniMax-M2.1) achieved through selection heuristics applied after $10$ independent code and test suite generations. The accuracy improvements over \textsc{Random Selection} are highly sensitive to the post-generation selection method, with significant gains achieved by \textsc{MaxPass} and \textsc{CodeT} strategies. In particular, note that soft estimators outperform hard estimators. This corroborates the theoretical insights of Section~\ref{sec:similarity_vs_equivalence}: soft estimators of probability of correctness benefit from a smoothness inductive bias and from an improved SNR. Table~\ref{tab:post_generation_selection_methods_results} also reveals that \textsc{MaxPass} consistently outperforms \textsc{MBR-Exec} across datasets and models. This performance gap provides empirical evidence that current frontier models are not perfectly code-test calibrated. Because the theoretical duality between code-based and test-based estimation breaks down in practice, our results indicate that the test-based dual estimation of code correctness (\textsc{MaxPass}) is the superior, more robust heuristic for current uncalibrated models given binary test outputs of test-driven development.

\begin{table*}[h]
    \centering
    \caption{Pass@1 (\%) of post-generation selection heuristics (see Table~\ref{tab:types_of_self_consistency}) after 10 rounds of independent code and test generation ($m=n=10$).
    Within a family of heuristics, we bold the more accurate one. The soft variants outperform the hard ones, corroborating the insights of Theorem~\ref{thm:measure_smoothing}, Theorem~\ref{thm:consistency_and_bias}, and Theorem~\ref{thm:snr_dominance}. We report mean ($\pm$ standard error) over 5 seeds.}
    \label{tab:post_generation_selection_methods_results}
    \small
    \resizebox{\linewidth}{!}{
    \setlength{\tabcolsep}{2pt}
    \begin{tabular}{lccccccccc}
        & \multicolumn{3}{c}{BigCodeBenchHard (BigCode)} & \multicolumn{3}{c}{QiskitHumanEvalSim (Qiskit)} & \multicolumn{3}{c}{LeetCodeDataset (LeetCode)}\\
        \cmidrule(lr){2-4}\cmidrule(lr){5-7}\cmidrule(lr){8-10}
        & Qwen3 & GPT-OSS & MM-M2.1 & Qwen3 & GPT-OSS & MM-M2.1 & Qwen3 & GPT-OSS & MM-M2.1\\
        \cmidrule(r){0-0}\cmidrule(lr){2-4}\cmidrule(lr){5-7}\cmidrule(lr){8-10}
\textsc{Random Selection} & $27.43^{\scriptscriptstyle\pm 0.31}$ & $37.16^{\scriptscriptstyle\pm 0.63}$ & $34.46^{\scriptscriptstyle\pm 0.45}$ & $52.73^{\scriptscriptstyle\pm 0.19}$ & $41.96^{\scriptscriptstyle\pm 0.36}$ & $43.78^{\scriptscriptstyle\pm 0.47}$ & $18.60^{\scriptscriptstyle\pm 0.22}$ & $69.12^{\scriptscriptstyle\pm 0.31}$ & $62.54^{\scriptscriptstyle\pm 0.58}$ \\
\cmidrule(r){0-0}\cmidrule(lr){2-4}\cmidrule(lr){5-7}\cmidrule(lr){8-10}
\textsc{MBR-Exec Hard} & $29.32^{\scriptscriptstyle\pm 0.43}$ & $37.30^{\scriptscriptstyle\pm 0.42}$ & $35.95^{\scriptscriptstyle\pm 0.55}$ & $54.97^{\scriptscriptstyle\pm 0.13}$ & $\mathbf{47.69}^{\scriptscriptstyle\pm 0.40}$ & $45.73^{\scriptscriptstyle\pm 0.34}$ & $18.33^{\scriptscriptstyle\pm 0.18}$ & $\mathbf{72.81}^{\scriptscriptstyle\pm 0.41}$ & $73.07^{\scriptscriptstyle\pm 0.33}$ \\
\textsc{MBR-Exec Soft} & $\mathbf{29.73}^{\scriptscriptstyle\pm 0.42}$ & $37.30^{\scriptscriptstyle\pm 0.54}$ & $\mathbf{36.76}^{\scriptscriptstyle\pm 0.31}$ & $54.97^{\scriptscriptstyle\pm 0.16}$ & $47.13^{\scriptscriptstyle\pm 0.35}$ & $\mathbf{46.71}^{\scriptscriptstyle\pm 0.23}$ & $\mathbf{18.60}^{\scriptscriptstyle\pm 0.16}$ & $72.19^{\scriptscriptstyle\pm 0.32}$ & $\mathbf{73.33}^{\scriptscriptstyle\pm 0.45}$ \\
\cmidrule(r){0-0}\cmidrule(lr){2-4}\cmidrule(lr){5-7}\cmidrule(lr){8-10}
\textsc{CodeT Hard} & $31.22^{\scriptscriptstyle\pm 0.32}$ & $37.30^{\scriptscriptstyle\pm 0.24}$ & $36.08^{\scriptscriptstyle\pm 0.59}$ & $59.02^{\scriptscriptstyle\pm 0.19}$ & $51.33^{\scriptscriptstyle\pm 0.38}$ & $50.49^{\scriptscriptstyle\pm 0.23}$ & $21.23^{\scriptscriptstyle\pm 0.12}$ & $73.25^{\scriptscriptstyle\pm 0.36}$ & $78.60^{\scriptscriptstyle\pm 0.24}$ \\
\textsc{CodeT Soft} (ours) & $\mathbf{35.00}^{\scriptscriptstyle\pm 0.36}$ & $\mathbf{38.24}^{\scriptscriptstyle\pm 0.34}$ & $\mathbf{42.03}^{\scriptscriptstyle\pm 0.42}$ & $\mathbf{62.24}^{\scriptscriptstyle\pm 0.28}$ & $\mathbf{54.13}^{\scriptscriptstyle\pm 0.27}$ & $\mathbf{53.85}^{\scriptscriptstyle\pm 0.33}$ & $\mathbf{24.30}^{\scriptscriptstyle\pm 0.23}$ & $\mathbf{74.56}^{\scriptscriptstyle\pm 0.49}$ & $\underline{\mathbf{79.56}}^{\scriptscriptstyle\pm 0.24}$ \\
\cmidrule(r){0-0}\cmidrule(lr){2-4}\cmidrule(lr){5-7}\cmidrule(lr){8-10}
\textsc{MaxPass Hard} & $33.51^{\scriptscriptstyle\pm 0.45}$ & $37.30^{\scriptscriptstyle\pm 0.46}$ & $39.59^{\scriptscriptstyle\pm 0.39}$ & $61.40^{\scriptscriptstyle\pm 0.15}$ & $48.53^{\scriptscriptstyle\pm 0.40}$ & $51.75^{\scriptscriptstyle\pm 0.22}$ & $24.82^{\scriptscriptstyle\pm 0.26}$ & $70.09^{\scriptscriptstyle\pm 0.54}$ & $76.32^{\scriptscriptstyle\pm 0.50}$ \\
\textsc{MaxPass Soft} (ours) & $\mathbf{35.41}^{\scriptscriptstyle\pm 0.38}$ & $\mathbf{38.24}^{\scriptscriptstyle\pm 0.34}$ & $\underline{\mathbf{42.30}}^{\scriptscriptstyle\pm 0.31}$ & $\underline{\mathbf{63.22}}^{\scriptscriptstyle\pm 0.19}$ & $\mathbf{54.13}^{\scriptscriptstyle\pm 0.27}$ & $\underline{\mathbf{54.13}}^{\scriptscriptstyle\pm 0.34}$ & $\mathbf{25.09}^{\scriptscriptstyle\pm 0.29}$ & $\mathbf{74.56}^{\scriptscriptstyle\pm 0.49}$ & $\mathbf{79.21}^{\scriptscriptstyle\pm 0.30}$ \\
\bottomrule
\end{tabular}}
    
\end{table*}

\subsection{In-context Thompson sampling via iterative backprompting}\label{sec:in_context_thompson_sampling}
Next, let us consider conditioning the LLM during generation on execution results from running a potential implementation $c$ on a potential test suite $t$. Based on the development in Section~\ref{sec:feedback_during_generation}, we provide the LLM in-context with an informative test report $U(c,t\,\vert\, e) \in \mathcal V^*$. Since test reports can become very long, direct backprompting of all information from multiple rounds quickly becomes computationally infeasible. Thus, we compress $U(c,t\,\vert\, e)$ into a much shorter token sequence using the LLM. We ablate on the compression technique and the factorization of $p(c,t)$ in Appendix~\ref{sec:ablation}. This section conducts the main experiments using Qwen3-235B-A22B-Instruct-2507, with findings on the other models, such as their low test quality (GPT-OSS-120B and MiniMax-M2.1) as well as positive generalization across architectures within the Qwen family (Qwen3-Next and Qwen3-Coder-Next), discussed at the end. 
Now, examine Figure~\ref{fig:in_context_thompson_sampling_line_plot}, where we compare three settings:

\paragraph{Pass@1 with \textsc{Oracle-Test Feedback}} First, we consider the simplified setting where the test suite $t_{oracle}$ is given and the model only has to find a code $c$ such that $R(c,t_{oracle}\,\vert\, e) = 1$. In the formalism of Theorem~\ref{thm:reward_bound}, this corresponds to the setting where $r_{hid}$ is known, which according to Corollary~\ref{cor:reward_bound} permits global convergence to $x^*$. We observe that under these circumstances, the LLM acts as a highly effective approximate Thompson sampler, finding solutions that are much better than what can be achieved with post-selection heuristics. Indeed, in-context Thompson sampling achieves a pass@1 of $63.78 \%$, $72.31\%$, and $72.11\%$ across BigCodeBenchHard, QiskitHumanEvalSim, and LeetCodeDataset, notably beating pass@10 of unconditioned generation: 
%$46.35^{\pm 0.54} \%$, $67.83^{\pm 0.31} \%$, $28.60^{\pm 0.17 \%}$
$46.35\%$, $67.83\%$, $28.60 \%$. To tie into our theory, we may use Equation~\eqref{eq:predicting_reward_curves} (with $T=1,10$) to estimate the rate of information gain $\beta \cdot \sqrt{\Gamma} / 2$. We obtain the values $42.69\%$ (BigCode), $32.12\%$ (Qiskit), and $64.66\%$ (LeetCode).

\begin{figure*}[t]
    \centering
    \includegraphics[width=\linewidth]{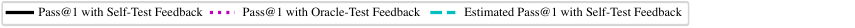}
    
    \begin{subfigure}[b]{0.33\textwidth}
        \centering
        \captionsetup{justification=centering}\includegraphics[width=\linewidth]{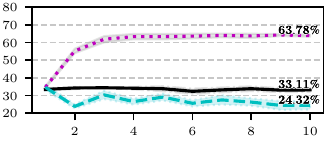}
        \caption{BigCodeBenchHard}
    \end{subfigure}%
    \hfill
    \begin{subfigure}[b]{0.33\textwidth}
        \centering
        \captionsetup{justification=centering}\includegraphics[width=\linewidth]{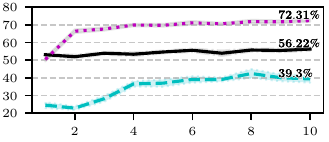}
        \caption{QiskitHumanEvalSim}
    \end{subfigure}%
    \hfill
    \begin{subfigure}[b]{0.33\textwidth}
        \centering
        \captionsetup{justification=centering}\includegraphics[width=\linewidth]{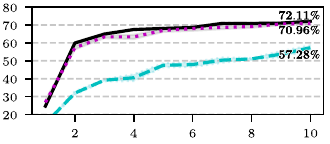}
        \caption{LeetCodeDataset}
    \end{subfigure}
    
    \caption{Pass@1 (\%) during $10$ rounds of in-context Thompson sampling using Qwen3-235B-A22B-Instruct-2507. We report similar trends across other members of the Qwen3 family, including hybrids, in Appendix~\ref{sec:hybrid}. W.r.t. Theorem~\ref{thm:reward_bound}, the black solid line denotes the true reward $r$, and the cyan dashed line $r_{obs}$. The purple dotted line is the simplified setting of known $r_{hid}$, which, according to Corollary~\ref{cor:reward_bound}, permits global convergence to $x^*$. We report mean ($\pm$ standard error) over 5 seeds.}
    \label{fig:in_context_thompson_sampling_line_plot}
\end{figure*}

\paragraph{Estimated pass@1 with \textsc{Self-Test Feedback}} Next, we let the language model generate both codes $c$ and test suites $t$, but report on $R(c, t\, \vert\, e)$ instead of $R(c, t_{oracle}\, \vert\, e)$. Note that in the formalism of Theorem~\ref{thm:reward_bound}, the language model is asked to optimize $r = r_{obs} + r_{hid}$, but we only report on the trajectory of $r_{obs}(x_n)$. Although the LLM successfully optimizes $r_{obs}$ on QiskitHumanEvalSim and LeetCodeDataset, it struggles to do so for BigCodeBenchHard, demonstrating that joint optimization of code and tests proves more challenging than only optimizing code, even for observable objectives.

\paragraph{Pass@1 with \textsc{Self-Test Feedback}} Finally, we consider the most challenging setting of optimizing $R(c, t_{oracle}\, \vert\, e)$ with only access to $R(c, t\, \vert\, e)$. Recall Corollary~\ref{cor:reward_bound}, which predicts a small uncertainty on $r_{hid}$ required for success, i.e., the task description must contain sufficient information on the algorithmic behavior. Whereas on LeetCodeDataset the assistant recovers almost the same trajectory as if \textsc{Oracle-Test Feedback} was present, the transfer on QiskitHumanEvalSim is more modest and completely missing on BigCodeBenchHard. Note that our theory actually predicts these findings: the task description on LeetCodeDataset includes input-output examples, decreasing the uncertainty of $r_{hid}$ and thus the irreducible regret $\Delta$. In fact, the difference in final performance between \textsc{Self-Test Feedback} and \textsc{Oracle-Test Feedback} can be used as an estimator for the irreducible regret $\Delta$ of each dataset, giving 30.67\%, 16.09\%, and 1.15\%, for BigCode, Qiskit, and LeetCode, respectively. \looseness -1 %In Section~\ref{sec:qiskit_human_eval_sim_X}, we build on these insights to suggest dataset adjustments for improved backprompting. \looseness -1

\paragraph{The importance of capturing the test manifold}
% In Figure~\ref{fig:in_context_thompson_sampling_line_plot_complete} of 
Appendix~\ref{sec:backprompt_gpt_mm} evaluates GPT-OSS-120B and MiniMax-M2.1 as in-context Thompson samplers. While they are highly effective optimizers under \textsc{Oracle-Test Feedback}, both models show almost no improvement when backprompting \textsc{Self-Test Feedback}. Moreover, the gap between self-estimated pass@1 and true pass@1 is substantially larger for these two models, indicating poorer modeling of the test manifold. 

\paragraph{Generalization within Qwen family} 
Motivated by Qwen3-235B-A22B's strong performance, we evaluated the hybrid 80B-A3B parameter models Qwen3-Next and Qwen3-Coder-Next, finding similar behavior (Appendix~\ref{sec:hybrid}). The strong self-test backprompting capabilities within the architecturally diverse Qwen3 model family may suggest that neither model architecture nor size substantially limit the effectiveness of backprompting and instead hints at training as a possible limiting factor. This highlights a critical lesson for autonomous software engineering: to unlock the benefits of in-the-loop environment interaction, writing tests is just as important as writing code. \looseness -1

\subsection{QiskitHumanEvalSimX}\label{sec:qiskit_human_eval_sim_X}
The central lesson of Section~\ref{sec:in_context_thompson_sampling} is that the task description $d$ must contain sufficient information on the abstract algorithm $a$ such that the irreducible regret $\Delta$ does not dominate the total regret of in-context Thompson sampling. An effective strategy to improve algorithmic specification is to add input-output pairs to the task description. While LeetCodeDataset already contains such behavioral examples in the prompt, explaining the success of in-context Thompson sampling thereon, BigCodeBenchHard and QiskitHumanEvalSim do not. To further
\begin{wrapfigure}{r}{0.5\textwidth}
    \centering
    \includegraphics[width=\linewidth]{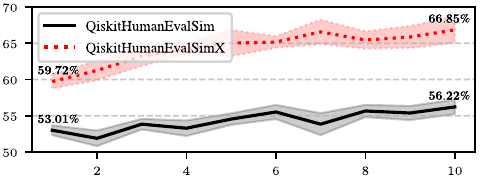}
    \caption{Pass@1 (\%) during $10$ rounds of backprompting on QiskitHumanEvalSim(X). We use a Qwen3-235B-A22B-Instruct-2507. Results are reported as mean ($\pm$ standard error) over 5 seeds.}
    \label{fig:qiskit_human_eval_X}
    \vspace{-15pt}
\end{wrapfigure}
corroborate our theory, we introduce a new dataset: 
\textsc{QiskitHumanEvalSim with e\textbf{X}amples}. This new dataset builds on QiskitHumanEvalSim by querying Gemini 3 Pro to add a natural language description of the oracle test to the task description, including input-output examples. The exact prompt is listed in Appendix~\ref{sec:creating_qiskit_human_eval_simX}. As can be observed in Figure~\ref{fig:qiskit_human_eval_X}, in-context Thompson sampling indeed becomes more effective given the additional test description: instead of an accuracy delta over the first round of $3.21 \%$ we now report the delta $7.13\%$.

\begin{wraptable}{r}{0.55\textwidth}
    \centering
    \vspace{-13.5pt}
    \caption{Pass@1 (\%) comparison of backprompting and post-generation selection. Each solution produced 10 code-test pairs. We use a Qwen3-235B-A22B-Instruct-2507 and report mean ($\pm$ standard error) over 5 seeds.}
    \label{tab:post_generation_selection_methods_with_in_context_thompson_sampling_results}
    \small
    \resizebox{0.95\linewidth}{!}{
    \setlength{\tabcolsep}{2pt}
    \begin{tabular}{llccc}
 Selection & Backprompting  & BigCode & Qiskit & LeetCode\\
        \cmidrule(r){1-2} \cmidrule(lr){3-3}\cmidrule(lr){4-4}\cmidrule(lr){5-5}
\textsc{Random} & No & $27.43^{\scriptscriptstyle\pm 0.31}$ & $52.73^{\scriptscriptstyle\pm 0.19}$ & $18.60^{\scriptscriptstyle\pm 0.22}$ \\[0.3ex] 
\textsc{Last} & \textsc{Self-Test FB} & $\mathbf{33.11}^{\scriptscriptstyle\pm 0.51}$ & $\mathbf{56.22}^{\scriptscriptstyle\pm 0.45}$ & $\mathbf{72.11}^{\scriptscriptstyle\pm 0.08}$ \\
\cmidrule(r){1-2} \cmidrule(lr){3-3}\cmidrule(lr){4-4}\cmidrule(lr){5-5}
\textsc{MP Soft} & No & $35.41^{\scriptscriptstyle\pm 0.38}$ & $\underline{\mathbf{63.22}}^{\scriptscriptstyle\pm 0.19}$ & $25.09^{\scriptscriptstyle\pm 0.29}$ \\[0.3ex] 
\textsc{MP Soft} & \textsc{Self-Test FB} & $\underline{\mathbf{36.22}}^{\scriptscriptstyle\pm 0.47}$ & $60.84^{\scriptscriptstyle\pm 0.36}$ & $\underline{\mathbf{73.42}}^{\scriptscriptstyle\pm 0.17}$ \\
\bottomrule
    \end{tabular}}
    \vspace{-10pt}
\end{wraptable}

\subsection{Combining backprompting with post-generation selection}
Consider Table~\ref{tab:post_generation_selection_methods_with_in_context_thompson_sampling_results}, which also considers the \textsc{MaxPass} selection heuristic applied on top of in-context Thompson sampling. The $10$ rounds of code and test suite generations are integrated into post-generation selection as if independently generated. The pass@1 of Thompson sampling still benefits from post-generation selection, but the improvements are less pronounced than with independent generation. \looseness -1

\section{Discussion}
Our investigation deliberately isolates core components of coding agents~\citep{yang2024swe}---post-generation selection and iterative revision via environment feedback---within the scope of competitive programming. This focus ensures controlled experiments with well-defined environment interactions, avoiding the confounding variables inherent in multi-file editing workflows. However, as we argue next, the theoretical lessons and findings presented here naturally extend to fully agentic systems.

For instance, agentic benchmarks such as SWT-bench~\citep{mundler2024swt} have successfully integrated post-generation selection heuristics based on CodeT to boost performance on repository-level tasks. Likewise, the concept of an unobservable reward component ($r_{hid}$) is also highly relevant in agentic backprompting. When agentic systems are evaluated on benchmarks like SWE-bench \citep{jimenez2024swebench}, providing oracle information by applying test patches to the project (a "reproduction" setting) effectively reveals $r_{hid}$ to the agent. As predicted by our theory, eliminating the irreducible regret $\Delta$ has a profound impact; notably, concurrent work by~\citet{li2026oracle} shows that under such oracle information the performance of GPT-5 on SWE-bench Live improves from 30\% to 75\%. \looseness -1

\section{Conclusion}
In this work, we provide a probabilistic framework for LLM-based test-driven code generation. We theoretically formalize two dominant paradigms: post-generation selection as environment-aware pass@k maximization, and iterative backprompting as in-context Thompson sampling. Our analysis proves that ``soft" estimators strictly dominate ``hard" ones in terms of SNR. Moreover, we identify an ``irreducible regret" in the regret bound of Thompson sampling, where task ambiguity limits performance regardless of feedback quality. These theoretical insights are confirmed experimentally across three diverse benchmarks using five different frontier open-weight large language models. Finally, a novel benchmark, \textsc{QiskitHumanEvalSimX}, is introduced to corroborate our findings. As LLMs are increasingly integrated into critical software development loops, understanding the theoretical mechanisms of their environment interactions becomes paramount. Our results suggest that progress lies not just in improving zero-shot model performance, but also in reducing the ambiguity of user intent and designing architectures that can efficiently process long execution histories.

% In the unusual situation where you want a paper to appear in the
% references without citing it in the main text, use \nocite

\section*{Acknowledgment}
This work is supported by the Swiss National Science Foundation (SNSF), grant 10002666.

\bibliographystyle{plainnat}
\bibliography{references}

\newpage
\appendix

\setcounter{figure}{0}
\renewcommand{\thefigure}{S\arabic{figure}}
\setcounter{table}{0}
\renewcommand{\thetable}{S\arabic{table}}

\section*{Limitations}
Our investigation isolates two core components of coding agents~\citep{yang2024swe}: post-generation selection and iterative revision via environment feedback. To ensure controlled experiments with well-defined environment interaction, we emphasize competitive coding tasks rather than multi-file editing workflows. We encourage future work to extend our framework also to such settings. While our framework defines $\Delta$ and identifies it as a key limiting factor, future work could actively proxy $\Delta$ during inference to trigger user clarification, preventing wasted compute on unresolvable ambiguity.

\section*{Impact Statement}
This paper advances the theoretical understanding of test-driven code generation via LLMs. As a theoretical work, its societal impact will be mostly indirect. More generally, a positive societal impact of research on LLM-assisted coding is improved developer productivity and software reliability. However, there are potential negative societal impacts to consider. Highly capable coding assistants may lead to shifts in the software engineering labor market, e.g., reducing entry-level job opportunities.

\section{Related work}

\paragraph{LLMs for code generation} Since the introduction of Codex \citep{chen2021evaluating}, a model fine-tuned on GitHub, code generation has become a central application of LLMs. Modern general-purpose models now routinely include substantial code data during training \citep{yang2025qwen3technicalreport}.

\paragraph{Post-generation selection strategies} Selection heuristics have become dominant for improving performance without retraining~\citep{wang2023selfconsistency}. \textsc{AlphaCode} \citep{li2022competition} popularized selecting generations based on pass rates on ground-truth tests and behavioral similarity under self-generated inputs. In scenarios where external tests are unavailable, heuristics like \textsc{MBR-Exec} \citep{shi2022mbr} and \textsc{CodeT} \citep{chen2023codet} select solutions based on agreement groups---sets of codes that pass the same self-generated tests. Many heuristics, such as \textsc{CodeT}, rely on ``hard" agreement (counting identical outputs). Our work theoretically formalizes these as environment-aware estimators of program correctness and investigates ``soft" alternatives that use fuzzy functional similarity to improve accuracy.

\paragraph{Iterative refinement via backprompting} Beyond selection, environment feedback is used to iteratively refine code. Frameworks like \textsc{Reflexion} \citep{shinn2023reflexion} use verbal reinforcement learning, asking LLMs to reflect on textual feedback to guide repairs. Similarly, \textsc{SelfRepair} \citep{olausson2024is} executes a repair step based on external test failures. While these methods assume valid, discriminative tests provided, our framework addresses the more common setting where assistants rely on self-generated tests. In this setup, we frame backprompting as approximate Thompson sampling~\citep{thompson1933likelihood, russo2014learning}. This allows us to theoretically identify an irreducible regret: a term in the regret bound caused by the inherent task ambiguity.

\paragraph{Bayesian perspectives on in-context learning} Recent theoretical work by \citeauthor{xie2022implicit} \citeyearpar{xie2022implicit} and \citeauthor{chlon2025llmsbayesianexpectationrealization} \citeyearpar{chlon2025llmsbayesianexpectationrealization} demonstrates that in-context learning conducts approximate Bayesian inference. \citeauthor{de2025simplifying} \citeyearpar{de2025simplifying} argue that pre-training on sequence data with cross-entropy loss naturally yields models capable of Thompson sampling. While \citeauthor{menet2026thompson} \citeyearpar{menet2026thompson} explore Thompson sampling via fine-tuning of LLMs, we study the emergent Thompson sampling capabilities of off-the-shelf LLMs via in-context conditioning. This allows us to apply regret bounds from Bayesian optimization to explain the limitations of backprompting strategies.

\subsection{Post-generation selection strategies as estimators of code correctness}\label{sec:popular_heuristics}
Our framework allows us to identify several well-established selection heuristics as maximizing an estimate of the probability of correctness $p(\mathcal N_c^\infty)$, see the developments in Section~\ref{sec:post_generation_selection_via_feedback}. Table~\ref{tab:types_of_self_consistency} provides a summary of the code selection literature based on our formalism.
In the experimental Section~\ref{sec:experiments}, we focus on boolean test outputs $\mathbb O$, i.e., the canonical case of test-driven development.

\paragraph{\textsc{MBR-Exec}}
Under the minimum Bayes risk decoding strategy \citep{shi2022mbr}, a set of codes $c_1, \ldots, c_n \in \mathcal V^*$ and test suites $t_1, \ldots, t_m \in \mathcal V^*$ are sampled according to the LLM posteriors $p(c)$ and $p(t)$ given a task description $d \in \mathcal V^*$. Then,
a code $\hat c = \arg\min_{c_j \in c_1, \ldots, c_n} \sum_{i\not=j} \ell(c_i, c_j)$ is selected to minimize an empirical estimate of the Bayes risk $\mathbb E_{c^\prime \sim p}[\ell(c, c^\prime)]$ with risk function $\ell$. The risk $\ell$ is chosen either as $-\widehat{\mathrm{sim}}_{p,e;m}^{\infty}(c, c^\prime)$ or $-\widehat{\mathrm{sim}}_{p,e;m}^1(c, c^\prime)$, resulting in \textsc{MBR-Exec Hard} and \textsc{MBR-Exec Soft}, respectively.  

\paragraph{\textsc{AlphaCode} and \textsc{FunCoder}}
\citeauthor{li2022competition} \citeyearpar{li2022competition} and \citeauthor{chen2024divideandconquer} \citeyearpar{chen2024divideandconquer} do not sample boolean-valued test cases from the language model, but rather execution scripts that invoke the code at arbitrary inputs. In our formalism, they apply generalized test suites $t$ that return a vector of non-boolean outputs $\mathbb A^{|t|}$, which is then compared to estimate functional similarity. Whereas \textsc{AlphaCode} relies on functional equivalence by estimating $\widehat{\mathrm{sim}}_{p,e;m}^{\infty}(c, c^\prime)$, \textsc{FunCoder} relaxes this to functional similarity $\widehat{\mathrm{sim}}_{p,e;m}^1(c, c^\prime)$.

\paragraph{\textsc{MaxPass}} \citet{lahiri2023interactivecodegenerationtestdriven, roziere2022leveraging, le2022coderl, chen2024b4} consider Monte Carlo estimators of the dual expression of code correctness $\mathbb P_{t \sim p}[R(c, t\,\vert\, e) = 1]$, albeit only in the setting of single-test test-suites, where the hard and soft estimators coincide. Theoretically, if a model were perfectly code-test calibrated (Definition~\ref{def:code_test_calibration}), code-based probability of correctness (\textsc{MBR-Exec}) and test-based probability of correctness (\textsc{MaxPass}) would yield the exact same result in expectation. However, as we demonstrate in Section~\ref{sec:experiments_post_generation_selection_heuristics}, \textsc{MaxPass} consistently outperforms \textsc{MBR-Exec}, indicating that current LLMs are miscalibrated. Since Monte Carlo estimators of probability of maximality have high variance \citep{menet2025lite}, one may instead consider a soft measure of code correctness such as $\hat{\mathbb E}_{t_{1:m}}[R(c, t\,\vert\, e)]$.

\paragraph{\textsc{CodeT}}
\citeauthor{chen2023codet} \citeyearpar{chen2023codet} multiply the objective in the \textsc{MBR-Exec Hard} strategy with the number of passed test cases. According to our formalism, they take the geometric mean of two dual estimators of self-consistency: $\hat{p}_{n,m}(\mathcal N^\infty_c)$ and $\hat{\mathbb E}_{t_{1:m}}[R(c,t\vert e)]$. With their method, they argue for increased robustness. Based on the insights from Section~\ref{sec:similarity_vs_equivalence}, we extend \textsc{CodeT} with a soft variant (\textsc{CodeT Soft}) that adopts the smooth estimator $\hat{p}_{n,m}(\mathcal N_c^1)$. As demonstrated in the experimental section, our smooth variant of \textsc{CodeT} performs consistently better than the original hard variant.

\newpage
\section{Additional experiments}\label{sec:additional_experiments}

\subsection{Improvements of post-generation selection as a function of the number of rounds}\label{sec:post_selection_appendix}
We complement Table~\ref{tab:post_generation_selection_methods_results} with Figure~\ref{fig:post_generation_selection_heuristics_line_plot}, which details the accuracy improvements achieved after generation through the considered selection heuristics. We plot the improvement as a function of the number of rounds of code and test suite generations. In several settings the accuracy improvements do not yet saturate after $10$ rounds, but the compute cost of post-generation selection starts to dominate the generation cost using a Qwen3-235B-A22B-Instruct, even for massive parallelism with $64$ CPU cores (AMD EPYC 7763) for selection and only a single node with $8$ GPUs (NVIDIA A100 80GB) for generation. See Section~\ref{sec:runtimes} for runtime measurements. This runtime bottleneck stems from the quadratic scaling of post-generation selection, as every code must be executed on every test suite.

\begin{figure}[h]
    \centering
    \includegraphics[width=\linewidth]{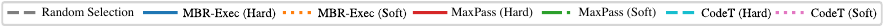}
    
    \begin{subfigure}[b]{0.333\textwidth}
        \centering
        \captionsetup{justification=centering}\includegraphics[width=\linewidth]{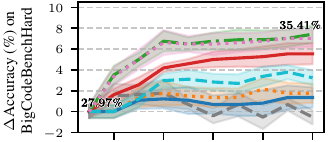}
        %\caption{BigCodeBenchHard\\ Qwen3-235B-A22B-Instruct-2507}
    \end{subfigure}%
    \begin{subfigure}[b]{0.333\textwidth}
        \centering
        \captionsetup{justification=centering}\includegraphics[width=\linewidth]{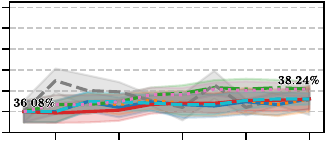}
        %\caption{BigCodeBenchHard\\ GPT-OSS-120b}
    \end{subfigure}%
    \begin{subfigure}[b]{0.333\textwidth}
        \centering
        \captionsetup{justification=centering}\includegraphics[width=\linewidth]{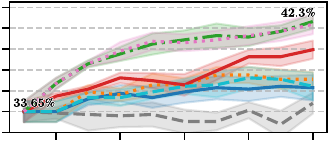}
        %\caption{BigCodeBenchHard\\ MiniMax M2.1}
    \end{subfigure}

\begin{subfigure}[b]{0.333\textwidth}
        \centering
        \captionsetup{justification=centering}\includegraphics[width=\linewidth]{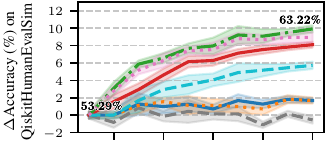}
        %\caption{QiskitHumanEval\\ Qwen3-235B-A22B-Instruct-2507}
    \end{subfigure}%
    \begin{subfigure}[b]{0.333\textwidth}
        \centering
        \captionsetup{justification=centering}\includegraphics[width=\linewidth]{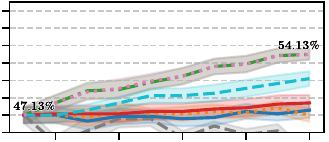}
        %\caption{QiskitHumanEval\\ GPT-OSS-120b}
    \end{subfigure}%
    \begin{subfigure}[b]{0.333\textwidth}
        \centering
        \captionsetup{justification=centering}\includegraphics[width=\linewidth]{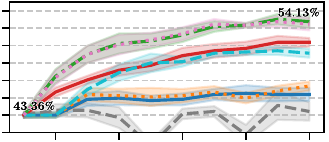}
        %\caption{QiskitHumanEval\\ MiniMax M2.1}
    \end{subfigure}

    \begin{subfigure}[b]{0.333\textwidth}
        \centering
        \captionsetup{justification=centering}\includegraphics[width=\linewidth]{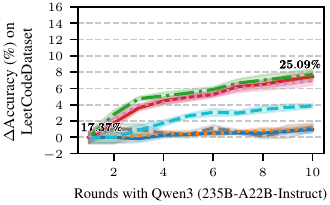}
        %\caption{LeetCodeDataset\\ Qwen3-235B-A22B-Instruct-2507}
    \end{subfigure}%
    \begin{subfigure}[b]{0.333\textwidth}
        \centering
        \captionsetup{justification=centering}\includegraphics[width=\linewidth]{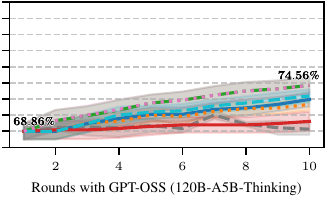}
        %\caption{LeetCodeDataset\\ GPT-OSS-120b}
    \end{subfigure}%
    \begin{subfigure}[b]{0.333\textwidth}
        \centering
        \captionsetup{justification=centering}\includegraphics[width=\linewidth]{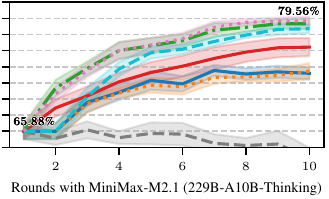}
        %\caption{LeetCodeDataset\\ MiniMax M2.1}
    \end{subfigure}
    
    \caption{Pass@1 (\%) improvements of common post-generation selection heuristics (see Table~\ref{tab:types_of_self_consistency} and Table~\ref{tab:post_generation_selection_methods_results}) across BigCodeBenchHard, QiskitHumanEvalSim, and LeetCodeDataset. Qwen3-235B-A22B-Instruct-2507, GPT-OSS-120B, or MiniMax-M2.1 are used to independently generate code and tests across multiple rounds. Results are reported as mean ($\pm$ standard error) over 5 seeds.}
    \label{fig:post_generation_selection_heuristics_line_plot}
\end{figure}

\subsection{In-context Thompson sampling with GPT-OSS-120B and MiniMax-M2.1}\label{sec:backprompt_gpt_mm}
Figure~\ref{fig:in_context_thompson_sampling_line_plot_complete} reports additional results on GPT-OSS-120B and MiniMax-M2.1. We find that Qwen3-235B-A22B-Instruct is a more effective approximate Thompson sampler. Indeed, given \textsc{Oracle-Test Feedback} (purple dotted line), Qwen3-235B-A22B-Instruct yields significantly stronger accuracy improvements than MiniMax-M2.1, whose gains saturate after just one round. Moreover, both GPT-OSS-120B and MiniMax-M2.1 generate test suites that are much worse. In fact, given \textsc{Self-Test Feedback}, the \textsc{Self-Test Accuracy} (cyan dashed line) is much lower than the \textsc{Oracle-Test Accuracy} 
\begin{wrapfigure}{r}{0.4\linewidth}
    \centering
    \vspace{-7pt}
    \includegraphics[width=\linewidth]{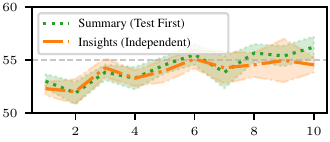}
    \caption{Pass@1 of in-context Thompson sampling via backprompting of self-test feedback using Qwen3-235B-A22B-Instruct-2507 on QiskitHumanEvalSim.}
    \label{fig:ind_vs_tfs_on_qiskit_human_eval_qwen3_235B}
    \vspace{-20pt}
\end{wrapfigure}
(black solid line). As a result, these two models receive highly noisy environment feedback and thus no accuracy gains over the zero-shot performance can be achieved.

Note that, whereas Figure~\ref{fig:in_context_thompson_sampling_line_plot} reports on \textit{summary concatenation} and \textit{test-first} factorization ($p(c,t) = p(c \vert t) \cdot p(t)$), Figure~\ref{fig:in_context_thompson_sampling_line_plot_complete} reports our initial experiments using \textit{insight reformulation} and \textit{independent} factorization ($p(c,t) = p(c) \cdot p(t)$). This avoids weeks of compute effort. Note that while \textit{summary concatenation} and \textit{test-first} factorization result in superior performance, the optimization trajectory behaves rather similarly, see also Figure~\ref{fig:ind_vs_tfs_on_qiskit_human_eval_qwen3_235B}.

\begin{figure}[h]
    \centering
    \includegraphics[width=\linewidth]{graphics/in_context_oracle_legend.pdf}
    \begin{subfigure}[b]{0.33\textwidth}
        \centering
        \captionsetup{justification=centering}\includegraphics[width=\linewidth]{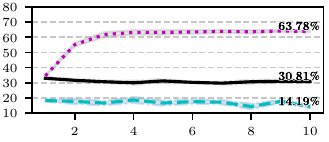}
        \caption{\footnotesize Qwen3 on BigCodeBenchHard}
    \end{subfigure}%
    \hfill
    \begin{subfigure}[b]{0.33\textwidth}
        \centering
        \captionsetup{justification=centering}\includegraphics[width=\linewidth]{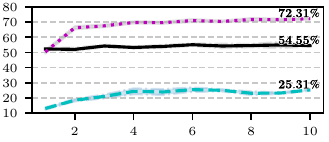}
        \caption{\footnotesize Qwen3 on QiskitHumanEvalSim}
    \end{subfigure}%
    \hfill
    \begin{subfigure}[b]{0.33\textwidth}
        \centering
        \captionsetup{justification=centering}\includegraphics[width=\linewidth]{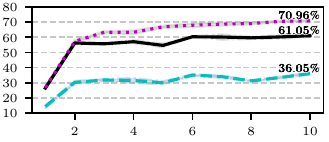}
        \caption{\footnotesize Qwen3 on LeetCodeDataset}
    \end{subfigure}

    \begin{subfigure}[b]{0.33\textwidth}
        \centering
        \captionsetup{justification=centering}\includegraphics[width=\linewidth]{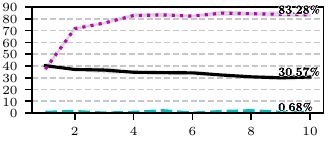}
        \caption{\footnotesize GPT-OSS on BigCodeBenchHard}
    \end{subfigure}%
    \hfill
    \begin{subfigure}[b]{0.33\textwidth}
        \centering
        \captionsetup{justification=centering}\includegraphics[width=\linewidth]{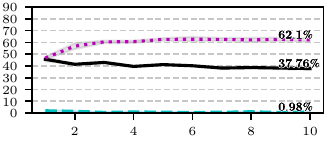}
        \caption{\footnotesize GPT-OSS on QiskitHumanEvalSim}
    \end{subfigure}%
    \hfill
    \begin{subfigure}[b]{0.33\textwidth}
        \centering
        \captionsetup{justification=centering}\includegraphics[width=\linewidth]{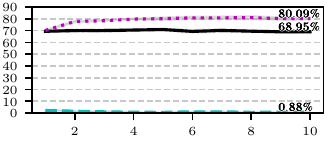}
        \caption{\footnotesize GPT OSS on LeetCodeDataset}
    \end{subfigure}

    \begin{subfigure}[b]{0.33\textwidth}
        \centering
        \captionsetup{justification=centering}\includegraphics[width=\linewidth]{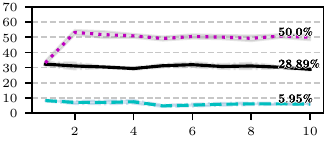}
        \caption{\footnotesize MiniMax on BigCodeBenchHard}
    \end{subfigure}%
    \hfill
    \begin{subfigure}[b]{0.33\textwidth}
        \centering
        \captionsetup{justification=centering}\includegraphics[width=\linewidth]{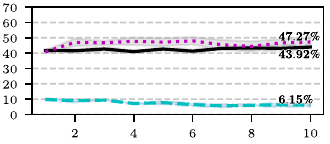}
        \caption{\footnotesize MiniMax on QiskitHumanEvalSim}
    \end{subfigure}%
    \hfill
    \begin{subfigure}[b]{0.33\textwidth}
        \centering
        \captionsetup{justification=centering}\includegraphics[width=\linewidth]{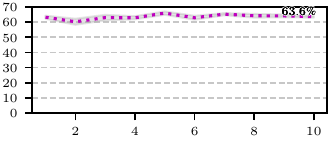}
        \caption{\footnotesize MiniMax on LeetCodeDataset}
    \end{subfigure}
    
    \caption{Pass@1 (\%) of in-context Thompson sampling via backprompting with Qwen3-235B-A22B-Instruct-2507, GPT-OSS-120B, and MiniMax-M2.1 across BigCodeBenchHard, QiskitHumanEvalSim, and LeetCodeDataset. In the language of Theorem~\ref{thm:reward_bound}, the black solid line corresponds to the true reward $r$, and the cyan dashed line to $r_{obs}$. The purple dotted line corresponds to the case of known $r_{hid}$, which according to Corollary~\ref{cor:reward_bound} permits global convergence to $x^*$. These experiments were conducted with \textit{insight reformulation} and \textit{independent} factorization ($p(c,t) = p(c) \cdot p(t)$). Results are reported as mean ($\pm$ standard error) over 5 random seeds.}
    \label{fig:in_context_thompson_sampling_line_plot_complete}
\end{figure}

\subsection{In-context Thompson sampling with hybrid Qwen3-Next and Qwen3-Coder-Next models}\label{sec:hybrid} 
\begin{wraptable}{r}{0.7\linewidth}
\vspace{-10pt}
    \begin{tabular}{cccc}
         &  BigCode & Qiskit & LeetCode\\
         \hline
         235B-A22B & 33.38 -> 33.11 & 53.0\phantom{0} -> 56.22 & 24.65 -> 72.11\\
         Next & 35.95 -> 32.7\phantom{0} & 50.35 -> 56.08 & 28.07 -> 67.28\\ 
         Coder-Next & 36.08 -> 30.27 & 43.5\phantom{0} -> 52.31 & 39.3\phantom{0} -> 64.39\\
    \end{tabular}
    \caption{Initial and final pass@1 (\%) of members of the Qwen3 model family during 10 rounds of backprompting self-test feedback. We report the mean across 5 seeds. For detailed trajectories, see Figure~\ref{fig:backprompting_qwen_family}.}
    \label{tab:backprompting_qwen_family}
\end{wraptable}
We further evaluate Qwen3-Next and Qwen3-Coder-Next~\citep{qwen3codernexttechnicalreport}, both 80B parameter models of which only 3B are active, built on a hybrid DeltaNet-Transformer architecture~\citep{vaswani2017, yang2024parallelizing}. 
These models achieve results competitive with the Transformer-based Qwen3-235B-A22B-Instruct under backprompting with \textsc{Self-Test Feedback}, as shown in Table~\ref{tab:backprompting_qwen_family} and Figure~\ref{fig:backprompting_qwen_family}. Moreover, hybrid models integrate well with \textit{summary-concatenation} history compression, since the cost of the state-space-model part does not grow with the rounds. Overall, these results indicate that model size and architectural design are likely not what hold GPT-OSS and MiniMax-M2.1 back.

\begin{figure}[h]
    \centering
    \includegraphics[width=\linewidth]{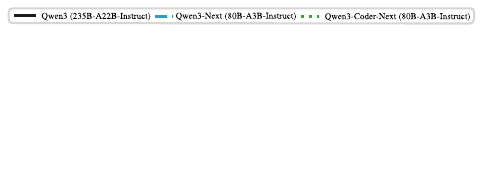}
    \begin{subfigure}[b]{0.33\textwidth}
        \centering
        \captionsetup{justification=centering}\includegraphics[width=\linewidth]{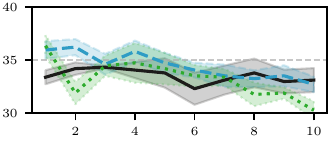}
        \caption{BigCodeBenchHard}
    \end{subfigure}%
    \hfill
    \begin{subfigure}[b]{0.32\textwidth}
        \centering
        \captionsetup{justification=centering}\includegraphics[width=\linewidth]{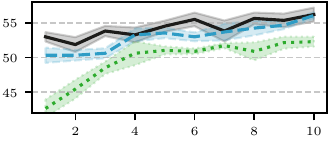}
        \caption{QiskitHumanEvalSim}
    \end{subfigure}%
    \hfill
    \begin{subfigure}[b]{0.333\textwidth}
        \centering
        \captionsetup{justification=centering}\includegraphics[width=\linewidth]{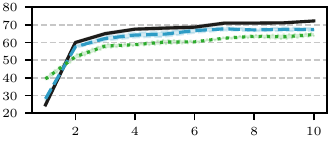}
        \caption{LeetCodeDataset}
    \end{subfigure}
    
    \caption{Pass@1 (\%) of backprompting concatenated summaries of self-test feedback using the Qwen3 model family. Results are reported as mean ($\pm$ standard error) over 5 seeds.}
    \label{fig:backprompting_qwen_family}
\end{figure}

\subsection{Ablating model factorization and feedback compression}\label{sec:ablation}
In Table~\ref{tab:in_context_thompson_sampling_feedback_ablation_results}, we investigate two degrees of freedom in the implementation of in-context Thompson sampling. To start, we consider different factorizations of $p(c,t)$: \textit{independent}, \textit{code first}, and \textit{test first}. First generating the test suite, and then a matching implementation conditioned on the suite works best. Next, we consider different LLM-based compressors of test reports $U(c,t\,\vert\, e)$. Compression is necessary, since concatenation of the test reports themselves quickly exceeds the available memory. We ablate two compressors: iterative reformulation of past insights in light of a new test report, or concatenation of summaries of each test report. A detailed description of each is provided in Appendix~\ref{sec:compressor_description}. As our results demonstrate, \textit{summary concatenation} outperforms \textit{insight reformulation}, although, at the cost of additional memory and compute. Unless otherwise mentioned, we always report on \textit{summary concatenation} with the factorization $p(c\mid t)\cdot p(t)$.

\begin{table*}[h]
    \centering
    \caption{Pass@1 (\%) after 10 rounds of in-context Thompson sampling (with improvements during generation in parenthesis) for different factorization of $p(c,t)$ and different test report compression methods. Results reported as mean ($\pm$ standard error) over 5 random seeds using Qwen3-235B-A22B-Instruct. \looseness -1}
    \label{tab:in_context_thompson_sampling_feedback_ablation_results}
    \resizebox{\linewidth}{!}{
    \begin{tabular}{ccccc}
    \toprule
        Compression & Factorization & BigCodeBenchHard & QiskitHumanEvalSim & LeetCodeDataset\\
        \cmidrule(r){0-1}\cmidrule(lr){3-3}\cmidrule(lr){4-4}\cmidrule(lr){5-5}
Insights Reformulation & $p(c) \cdot p(t)$ & $30.81^{\pm 0.28}$ ($-2.16^{\pm 0.32}$) & $54.55^{\pm 0.33}$ ($2.24^{\pm 0.33}$) & $61.05^{\pm 0.32}$ ($34.56^{\pm 0.48}$) \\
Insights Reformulation & $p(t\mid c) \cdot p(c)$ & $31.89^{\pm 0.45}$ ($-0.68^{\pm 0.51}$) & $54.27^{\pm 0.39}$ ($2.94^{\pm 0.90}$) & $58.60^{\pm 0.47}$ ($34.65^{\pm 0.72}$) \\
Insights Reformulation & $p(c\mid t) \cdot p(t)$ & $\textbf{34.05}^{\pm 0.31}$ ($-0.81^{\pm 0.46}$) & $54.97^{\pm 0.19}$ ($3.08^{\pm 0.40}$) & $60.61^{\pm 0.48}$ ($36.40^{\pm 0.30}$) \\
\cmidrule(r){0-1}\cmidrule(lr){3-3}\cmidrule(lr){4-4}\cmidrule(lr){5-5}
Summary Concatenation & $p(c) \cdot p(t)$ & $32.84^{\pm 0.45}$ ($-0.27^{\pm 0.49}$) & $55.52^{\pm 0.36}$ ($3.92^{\pm 0.46}$) & $\underline{71.14}^{\pm 0.24}$ ($42.02^{\pm 0.32}$) \\
Summary Concatenation & $p(t\mid c) \cdot p(c)$ & $32.03^{\pm 0.44}$ ($-0.41^{\pm 0.58}$) & $\underline{56.08}^{\pm 0.55}$ ($5.45^{\pm 0.92}$) & $68.42^{\pm 0.31}$ ($44.74^{\pm 1.00}$) \\
Summary Concatenation & $p(c\mid t) \cdot p(t)$ & $\underline{33.11}^{\pm 0.51}$ ($-0.27^{\pm 0.75}$) & $\textbf{56.22}^{\pm 0.45}$ ($3.22^{\pm 0.70}$) & $\textbf{72.11}^{\pm 0.08}$ ($47.46^{\pm 0.39}$) \\
        \bottomrule
    \end{tabular}
    }
\end{table*}

\subsection{Computational cost of environment interaction and in-context Thompson sampling}\label{sec:runtimes}

While environment-aware selection and in-context Thompson sampling yield large accuracy improvements, they add substantial compute, impacting their practical deployment.

\textbf{Post-Generation Selection:} Heuristics like \textsc{MBR-Exec}, \textsc{MaxPass}, and \textsc{CodeT} require cross-executing $n$ generated implementations against $m$ generated test suites, resulting in a quadratic scaling of runtime execution cost, $\mathcal{O}(n \times m)$. In practice, this execution bottleneck becomes quickly apparent: e.g., on QiskitHumanEvalSimX, using a single node with an AMD EPYC 7763 64-Core CPU and 8 NVIDIA A100 (80GB) GPUs, running all implementations on all test suites across the $143$ dataset entries for $10$ rounds with $5$ seeds takes $21$ hours and $32$ minutes, dominating the $12$ hours and $42$ minutes required for LLM based code-test generation.

\textbf{Iterative Backprompting:} Conversely, in-context Thompson sampling avoids the $\mathcal{O}(n \times m)$ runtime bottleneck by only evaluating the current code and test suite, but introduces additional demand on the LLM-based code-test generation. Given the more performant \textit{summary concatenation} compressor, see Appendix~\ref{sec:ablation}, extending the prompt with historical feedback linearly increases its length with every round. Given the quadratic complexity of attention in standard Transformers, prolonged environment interaction quickly hits a compute and memory ceiling. For instance, across our 10-round Qwen3-235B-A22B-Instruct-2507 experiments, the model processed 119 million total tokens, of which only 17 million were newly generated. This context overhead strongly suggests that model architectures with linear complexity, such as State-Space Models (SSMs) or Linear Transformers, will be critical for scaling iterative coding agents capable of dynamically retaining long execution histories.

\newpage
\section{Experimental details}\label{sec:experimental_details}
To ensure reproducibility, we complement Section~\ref{sec:experiments} with a more detailed description of the experimental setup. 
All experiments were conducted on compute nodes with an AMD EPYC 7763 64-Core CPU, 2TB RAM, and eight NVIDIA A100-SXM4 GPUs (80GB), running on Red Hat Enterprise Linux 9.4 with CUDA 12.4. Our code is provided in the supplementary materials, including experimental configuration files and environment files specifying the necessary Python packages and their versions, and will be released publicly under a permissive license upon acceptance.

\subsection{Datasets}\label{sec:datasets}

\subsubsection{BigCodeBenchHard}
The partition of the 148/1140 most challenging tasks in BigCodeBench~\citep{zhuo2024bigcodebench}. BigCodeBench, released under the Apache 2.0 License, is designed for function-level code generation, adopting the format of HumanEval~\citep{chen2021evaluating}. The tasks in BigCodeBench require multiple external libraries across 7 domains including networking, data analysis, and visualization. The benchmark tests compositional reasoning and the ability to implement complex instructions that span 139 libraries with an average of 2.8 libraries per task.

\subsubsection{QiskitHumanEvalSim}
A novel subset of QiskitHumanEval~\citep{vishwakarma2024qiskit}, that we introduce. It consists of the subset of the 143/151 questions (excluding task ids 43, 97, 98, 122, 129, 133, 134, 146 of which 6 are classified as basic, 2 as intermediate, and none as difficult) that can be solved without access to a real-world quantum computer (i.e., simulated), democratizing the dataset. Due to the small fraction of training data being dedicated to Qiskit during typical LLM training, and since Qiskit's API has undergone multiple backward-incompatible changes, QiskitHumanEvalSim is a great testbed for runtime environment interaction. Note that QiskitHumanEval adopts the Apache 2.0 License.

\subsubsection{LeetCodeDataset v0.3.1}
228 diverse puzzle-style questions from LeetCode released after 2024-07-01 and collected by \citet{xia2025leetcodedatasettemporaldatasetrobust}. LeetCode typically includes example input-output pairs in natural language, significantly reducing the \text{irreducible regret} $\Delta$ identified in Corollary~\ref{cor:reward_bound} and, as we show, enabling more effective in-context Thompson sampling. The dataset was released under the MIT license.

\subsubsection{\textsc{QiskitHumanEvalSimX}}\label{sec:creating_qiskit_human_eval_simX}
\textsc{QiskitHumanEvalSimX} is a novel benchmark that we introduce in order to corroborate our theory, in particular the insight that effective in-context Thompson sampling requires a task description $d \in \mathcal V^*$ with small uncertainty on the latent algorithm $a \in \mathcal A$ (see Theorem~\ref{thm:reward_bound} and its Corollary~\ref{cor:reward_bound}). We improve the specification of the algorithm in $d$ by describing its target behavior in more detail, including input-output examples. More precisely, to create \textsc{QiskitHumanEvalSimX} from QiskitHumanEvalSim, we query Google Gemini 3 Pro (January 10, 2026) using the following prompt template:

\begin{minipage}{\linewidth}
\begin{lstlisting}
Consider the following dataset entry:
```
{"task_id": "qiskitHumanEval/35", "prompt": "from qiskit.circuit.library [...]
```
Improve the docstring in the prompt by appending a natural language description of the test without leaking the canonical solution. The natural language description must be detailed, but must not contain code. If the test contains an input-output example, provide it. Don't explicitly mention the presence of tests. Don't change anything else. Return the improved dataset entry in the jsonl format.
\end{lstlisting}
\end{minipage}

In case the instructions were not followed or invalid symbols were returned, the query was repeated.
The resulting dataset was verified by running all canonical solutions against the oracle test suite.

\subsection{Models}\label{sec:extensive_model_description}
We consider three frontier open-weight LLMs and two smaller hybrid LLMs, with an emphasis on qualitative differences such as reasoning vs instruct and general-purpose vs coding. We sample directly from the model with temperature $1.0$ and top-p set to $0.95$. After exceeding $32\ 768$ output tokens generation is stopped. In case invalid output is generated (e.g., if no code or observations can be extracted) we repeat up to $9$ times the same query to the LLM. The adopted models are as follows:

\paragraph{Qwen3-235B-A22B-Instruct-2507} is the largest publicly available instruction-tuned model in the general-purpose Qwen3 model family~\citep{yang2025qwen3technicalreport}. Adopting the mixture of experts (MoE) architecture~\citep{shazeer2017}, only 22 out of the 235 billion parameters are active at inference. The model is \textbf{not trained for chain-of-thought}~\citep{wei2022chain}. It was released under Apache 2.0.

\paragraph{GPT-OSS-120B} is the largest publicly available model released by OpenAI \citep{agarwal2025gpt}. It is a general-purpose reasoning model trained for long \textbf{chain-of-thoughts}. Tallying a total of 120 billion parameters, but with only 5B active, it is optimized for high throughput. It follows Apache 2.0.

\paragraph{MiniMax-M2.1} is a novel state-of-the-art \textbf{coding} model released on the 23rd of December 2025, building on the MiniMax-M1 architecture~\citep{chen2025minimax}. It is an MoE with 229 billion total parameters, of which only 10B are active. It is trained for and \textbf{utilizes chain-of-thought} for reasoning. The model is released under a modified MIT license.

\paragraph{Qwen3-Next and Qwen3-Coder-Next} are MoE models introduced by \citet{qwen3codernexttechnicalreport}. Of a total of 80B parameters only 3B are active. Building on a hybrid DeltaNet-Transformer architecture~\citep{vaswani2017, yang2024parallelizing}, they are particularly efficient when handling long contexts. We consider the instruct variants \textbf{without chain-of-thought}. They follow Apache 2.0.

\subsection{Implementation and test suite generation prompts}
In the following, we list the system prompts used for implementation and test suite generation, adopting the factorization $p(c,t) = p(c \mid t) p(t)$. The prompts for the alternative factorizations are adjusted accordingly and listed in the code base.

\textbf{Test Suite Generation}

\begin{lstlisting}
You are an expert Python programmer.
Your task is to write pytests for an implementation that obeys the provided task
description.  You are provided observations from previous rounds of comparing potential implementations against potential test suites.

**1. Analyze and Plan**
 - Carefully reflect on the provided observations.
 - Your new test suite must address these observations while still covering the program's main functionality and common edge cases.

**2. Write the test suite**
 - Include all necessary imports for external libraries. The candidate implementation is available without imports.
 - Each pytest must be represented by a separate function starting with `test_`. In particular, do not define test classes.
 - Add a brief docstring to each pytest function explaining its purpose.
 - Do not number the pytests.
 - Provide all of the pytests in a single Python code block marked with triple backticks (```).
\end{lstlisting}

\textbf{Implementation Generation}

\begin{lstlisting}
You are an expert Python programmer.
Your task is to write an implementation that obeys the provided task description.
You are provided observations from previous rounds of comparing potential implementations against potential test suites, as well as a novel test suite.

**1. Analyze and Plan**
 - Carefully reflect on the provided observations.
 - Review the provided test suite.
 - Your new implementation must address the observations, pass all test cases, and adhere to the original task description.

**2. Write the implementation**
 - Include all necessary imports for external libraries.
 - Do not include a docstring.
 - Write the program in a Python code block marked with triple backticks (```).
\end{lstlisting}

\subsection{Feedback compression}\label{sec:compressor_description}
We consider two strategies for compressing the test reports. With \textit{insight reformulation}, we not only compress each report into actionable insights, but also reformulate old insights to avoid insight duplication and to ensure the context size only increases if new information is acquired. In contrast, with \textit{summary concatenation}, we simply ask the LLM to compress the test report of each round into a summary, which are then concatenated in the prompt. This ensures no information is accidentally lost, but results in a context that grows linearly with the number of rounds of in-context Thompson sampling. \looseness -1

\textbf{Insight Reformulation}
\begin{lstlisting}
You are an expert Python programmer.
You are provided a task description, a candidate program implementation, a candidate test suite with feedback on the implementation, and additional past observations.
Your task is to first analyze the test feedback and then state an exhaustive list of insights that acts as a compressed history of all observations.

**1. Analyze Feedback**
 - Carefully review the program, tests, and feedback. This is your new data.
 - Identify errors in the program implementation, as well as weaknesses in the tests themselves.
 - Summarize all your findings. Be precise and do not speculate.

**2. State Insights**
 - Restate all insights from past observations as well as additional insights from the feedback analysis.
 - Mark each insight with `<observation> </observation>` tags.
 - Ensure each insight is self-contained, i.e., informative without considering the provided program and test suite.
\end{lstlisting}

\textbf{Summary Concatenation}
\begin{lstlisting}
You are an expert Python programmer.
You are provided a task description, a candidate program implementation, and a candidate test suite with feedback on the implementation.
Your task is to first analyze the test feedback and then summarize the implementation, tests, and feedback.

**1. Analyze Feedback**
 - Carefully review the program, tests, and feedback. This is your data.
 - Identify errors in the program implementation, as well as weaknesses in the tests themselves.

**2. State Summary**
 - Summarize the provided implementation, test suite, and all your findings.
 - Mark the summary with `<observation> </observation>` tags.
\end{lstlisting}

\newpage
\section{Proofs}\label{sec:proofs}

\subsection{Discussion on assumptions}\label{sec:discussion_assumptions}
Across this work, several assumptions are adopted. Here, we reassure the reader of their validity.

\paragraph{Non-empty test suites} In Definition~\ref{def:functional_code_similarity}, we tacitly assume all test suites from the test manifold to contain at least one test. This is without loss of generality, since if no test can be extracted from the language model output, we just consider the test suite to consist of a single test which always fails.

\paragraph{Almost surely unique correct behavior} In Proposition~\ref{prop:additivity_probability_of_correctness}, we assume that the true test suite is such that only one equivalence class of codes passes it. This is a desired property and necessary for program well-specification, and can thus be assumed true for the underlying generative process of Figure~\ref{fig:probabilistic_model}. However, it may not hold for the approximate LLM generated tests. As such, Proposition~\ref{prop:additivity_probability_of_correctness} should be understood as holding for the true joint code-test manifold, not for the LLM approximation. Note that Proposition~\ref{prop:additivity_probability_of_correctness} stands by itself and does not affect any of our other results.

\paragraph{LLM training approximately learns in-context Thompson sampling} Since pre-training teaches the LLM to approximate the Thompson sampling policy, the regret bound of Theorem~\ref{thm:reward_bound} directly maps to in-context backprompting of execution feedback. We remark, however, that post-training reshapes the learned code-test manifold and may result in behavior that deviates from the theory. Consequently, we suggest future work to investigate and contrast the in-context learning capabilities and black-box optimization capabilities of pre-trained models with those of post-trained models.

\paragraph{Subgaussian rewards and finite LLM response length} Theorem~\ref{thm:reward_bound} further assumes subgaussian $r_{obs}$ and finite $|r|$. The former is trivially satisfied since test suite pass-rates are within $[0,1]$ and since all bounded random variables are subgaussian. The latter holds since, due to memory limits in the Transformer architecture~\cite{vaswani2017}, the length of LLM output responses is always limited.

\paragraph{Strong assumptions for predicting the reward curves} At the end of Section~\ref{sec:regret_bounds}, we estimate the shape of empirical reward curves under backprompting. We stress that this is not a formal statement and the predicted curves require strong assumptions, in particular, tight regret bounds, a saturated cumulative uncertainty $\Gamma_T \approx \Gamma$, and an accurate truncated Taylor expansion $\sqrt{T} \text{-} \sqrt{T\text{-}1} \approx T^{\, \text{-}1/2}/2$.

\subsection{Proof sketches}
\label{sec:proof_sketches}

To improve readability, we provide brief sketches of the intuition behind our main theoretical results before presenting all formal proofs.

\textbf{Proof sketch of Proposition~\ref{prop:properties_s_similarity} (Properties of $s$-similarity).}
To prove that $s$-similarity is a valid kernel, we first show it holds for $s=1$ by representing the test outputs as one-hot vectors; the proportion of passed tests is then just the expectation of their inner products. Because the class of positive semi-definite kernels is closed under point-wise multiplication, multiplying the $s=1$ kernel by itself $s$ times ensures it remains a valid kernel for any $s \in \mathbb{N}$. Finally, the limit as $s \to \infty$ drives any fractional similarity strictly to $0$, leaving only perfect matches (value $1$), which aligns exactly with the definition of functional equivalence.

\textbf{Proof sketch of Proposition~\ref{prop:equivalence_classes} (Functional equivalence classes).}
Strict functional equivalence ($\mathrm{sim}_{p,e}^\infty(c_1, c_2) = 1$) requires two implementations $c_1, c_2$ to yield identical outputs on all test cases of non-zero measure. Because equality is an equivalence relation (it is reflexive, symmetric, and transitive), the $\mathrm{sim}_{p,e}^\infty$ kernel inherits these properties. Consequently, it partitions the space of all possible programs $\mathcal{V}^*$ into disjoint equivalence classes.

\textbf{Proof sketch of Theorem~\ref{thm:measure_smoothing} (Measure smoothing inductive bias).}
Because $s$-similarity is a positive definite kernel, we can embed the implementations into a Hilbert space using a feature map. The probability of a fuzzy neighborhood can then be expressed as the inner product between the code's feature representation and the mean embedding of the distribution. By applying the Cauchy-Schwarz inequality, the difference in probability mass between two highly similar codes is bounded by their geometric distance in the feature space.

\textbf{Proof sketch of Theorem~\ref{thm:consistency_and_bias} (Consistency and bias).}
For strong consistency, we invoke Kolmogorov's strong law of large numbers on the bounded, independent test suite evaluations, followed by the continuous mapping theorem to handle the exponent $s$. For bias, we rely on the linearity of expectation to show the estimator is perfectly unbiased for $s=1$. However, because $g(x) = x^s$ is strictly convex for $s > 1$, Jensen's inequality guarantees the estimator is biased upwards for all $s > 1$.

\textbf{Proof sketch of Theorem~\ref{thm:snr_dominance} (SNR dominance).}
We compare the variances of the two estimators. The smooth estimator ($s=1$) is an average of independent bounded variables, so its variance shrinks inversely with the number of test suites $m$. The sharp estimator ($s \to \infty$), however, acts as a strict boolean AND across all suites, meaning its probability of success decays exponentially with $m$. Taking the ratio of their signal-to-noise ratios and applying a geometric series bound yields the $m^2$ dominance factor.

\textbf{Proof sketch of Proposition~\ref{prop:additivity_probability_of_correctness} (Additivity of probability of correctness).}
Under the assumption that there is almost surely only one functionally correct equivalence class, the success events for distinct classes are mutually exclusive. Therefore, the probability of the union of their success events (pass@$k$) is simply the sum of their individual success probabilities (pass@$1$). Maximizing this sum over $k$ selections trivially reduces to greedily picking the $k$ classes with the highest pass@1.

\textbf{Proof sketch of Theorem~\ref{thm:reward_bound} (Bayesian regret bound).}
We first decompose the reward into an observable component (execution results) and an unobservable component (task relevancy). By taking conditional expectations, we map the regret on the total reward function to a regret purely on the observable component. We then apply standard subgaussian tail bounds and union bounds (via Lemma~\ref{lem:uniform_bound}) to limit the expected maximum deviation. Finally, applying Cauchy-Schwarz over the $T$ steps bounds the cumulative regret by the sum of the squared subgaussian parameters.

\textbf{Proof sketch of Lemma~\ref{lem:uniform_bound}} relies on standard subgaussian tail integration and union bounds to limit the expected maximum of a set of random variables, producing the necessary $\beta$ constant for the regret bound.

\textbf{Proof sketch of Lemma~\ref{lem:strict_expected_subgaussian_variance}} utilizes the law of total variance. Strict subgaussianity combined with the assumption of non-zero correlation forces the variance of the conditional mean to be strictly positive, proving the marginal subgaussian parameter strictly bounds the expected conditional one.

\textbf{Proof sketch of Lemma~\ref{lem:information_gain_for_gaussians_noiseless}} frames the predictive variance as the decrease in the trace of the posterior covariance matrix after an observation. Summing these decreases over time creates a telescoping sum that is naturally upper-bounded by the trace of the prior covariance matrix.

\subsection{Main results}

\begin{proposition*}[\ref{prop:properties_s_similarity}]
    The similarity $\mathrm{sim}_{p,e}^s(c_1, c_2)$ is a positive semi-definite kernel with $\mathrm{sim}_{p,e}^s \geq 0$ and $\mathrm{sim}_{p,e}^s(c,c) = 1$. Moreover, $\mathrm{sim}^\infty_{p,e} (c_1,c_2) = \lim_{s \to \infty} \mathrm{sim}^s_{p,e} (c_1,c_2)$.
\end{proposition*}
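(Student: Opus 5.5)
We start with the base case $s=1$ and write the similarity as an inner product of explicit feature vectors. For a fixed test suite $t$ with $|t|=n_t$ tests, each test case $k$ gets the one-hot vector $\phi_{t,k}(c) := e_{O_k(c,t\vert e)} \in \mathbb R^{\mathbb O}$, which indicates the output of $c$ on test $k$. Then $\mathds{1}_{O_k(c_1,t\vert e)=O_k(c_2,t\vert e)} = \langle \phi_{t,k}(c_1), \phi_{t,k}(c_2)\rangle$. We stack these into the feature map $\Phi(c) := \big(\sqrt{p(t)/|t|}\,\phi_{t,k}(c)\big)_{t,k}$, indexed by all $t$ with $p(t)>0$ and all $k\le|t|$. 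This gives
\[
\mathrm{sim}^1_{p,e}(c_1,c_2) = \sum_t p(t)\tfrac{1}{|t|}\sum_k \langle \phi_{t,k}(c_1),\phi_{t,k}(c_2)\rangle = \langle \Phi(c_1),\Phi(c_2)\rangle .
\]
This is an inner product in $\ell^2$, and the series converges absolutely because $\|\Phi(c)\|^2=\sum_t p(t)=1$. Equivalently, for any finite family $c_1,\dots,c_N$ and coefficients $\alpha\in\mathbb R^N$, we have $\sum_{i,j}\alpha_i\alpha_j\mathrm{sim}^1(c_i,c_j)=\mathbb E_t\big[\tfrac1{|t|}\sum_k\|\sum_i\alpha_i\phi_{t,k}(c_i)\|^2\big]\ge 0$. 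So $\mathrm{sim}^1_{p,e}$ is PSD. Here we use the standing assumption from Section~\ref{sec:discussion_assumptions} that every test suite has $|t|\ge1$.

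For general $s\in\mathbb N$, $\mathrm{sim}^s_{p,e}=(\mathrm{sim}^1_{p,e})^s$ is the $s$-fold pointwise (Schur) product of a PSD kernel with itself. By the Schur product theorem it is therefore PSD; the feature map is $\Phi^{\otimes s}$. Nonnegativity is immediate, since $\mathrm{sim}^1$ is an expectation of averages of indicators and so lies in $[0,1]$, and its powers stay in $[0,1]$. For $\mathrm{sim}^s(c,c)=1$, every indicator $\mathds{1}_{O_k(c,t\vert e)=O_k(c,t\vert e)}$ equals $1$, so $\mathrm{sim}^1(c,c)=1$ and hence $1^s=1$.

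For the limit, write $\mu:=\mathrm{sim}^1_{p,e}(c_1,c_2)\in[0,1]$. Then $\mu^s\to 1$ if $\mu=1$ and $\mu^s\to0$ if $\mu<1$, so $\lim_{s\to\infty}\mathrm{sim}^s = \mathds{1}_{\mu=1}$. It remains to show that $\mu=1$ holds exactly when $O(c_1,t\vert e)=O(c_2,t\vert e)$ for all $t$ with $p(t)>0$.

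\begin{itemize}
\item[($\Leftarrow$)] If the outputs agree on every such $t$, every indicator is $1$, so $\mu=1$.
\item[($\Rightarrow$)] Suppose $\mu=1$. Since $\mu$ is a $p$-weighted average of the quantities $f(t):=\tfrac{1}{|t|}\sum_k\mathds 1_{\cdots}\le 1$, we have $\sum_t p(t)(1-f(t))=0$ with every term nonnegative. Hence $f(t)=1$ whenever $p(t)>0$. Because $f(t)$ is an average of finitely many indicators, this forces every indicator to equal $1$, i.e.\ $O(c_1,t\vert e)=O(c_2,t\vert e)$.
\end{itemize}

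The main point needing care is this equivalence $\mu=1 \iff$ agreement on the support of $p$. The argument above relies on $\mathcal V^*$ being countable, so that $p$ is a discrete distribution and ``$p(t)>0$'' is exactly the support. This is also where the definition of $\mathrm{sim}^\infty_{p,e}$ via the support of $p$ is used.
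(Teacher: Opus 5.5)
Your proof is correct and follows essentially the same route as the paper: one-hot feature maps express $\mathrm{sim}^1_{p,e}$ as an inner product, the Schur product theorem handles general $s$, and the limit comes from a case split on whether $\mathrm{sim}^1_{p,e}(c_1,c_2)=1$. You are somewhat more careful than the paper in two places: you make the $\ell^2$ feature map over the countable index set explicit, and you justify that a single disagreement on a test suite with $p(t)>0$ forces $\mathrm{sim}^1_{p,e}(c_1,c_2)<1$, which the paper only asserts.
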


\begin{proof}
    We first show that $\mathrm{sim}_{p,e}^1$ is a positive semi-definite kernel. Recall Definition~\ref{def:functional_code_similarity}:
    \[
    \textstyle \mathrm{sim}_{p,e}^1(c_1, c_2) := \mathbb E_{t\sim p}[ \frac{1}{|t|}\sum_{k=1}^{|t|} \mathds{1}_{O_k(c_1, t \vert e)=O_k(c_2, t \vert e)}]
    \]
    Let $\mathbb{O}$ be the set of possible test outputs, i.e., $O_k(c,t|e) \in \mathbb O\ \forall c,t,k$ under the specified environment $e \in \mathcal E$. For a fixed test $t$ and index $k$, define the feature map $\phi_{t,k}: \mathcal{V}^* \to \{0,1\}^{|\mathbb{O}|}$ such that $\phi_{t,k}(c)$ is a one-hot vector corresponding to the output $O_k(c, t | e)$. The indicator function can then be written as an inner product:
    \[
    \mathds{1}_{O_k(c_1, t | e) = O_k(c_2, t | e)} = \langle \phi_{t,k}(c_1), \phi_{t,k}(c_2) \rangle.
    \]
    Since the inner product is a valid kernel, and the class of kernels is closed under addition and scaling by non-negative constants (linearity of expectation), $\mathrm{sim}_{p,e}^1(c_1, c_2)$ is a valid kernel. Concerning non-negativity: The indicator function is non-negative, and the probability measure $p(t)$ is non-negative, thus $\mathrm{sim}_{p,e}^1(c_1, c_2) \ge 0$. Finally, the kernel is normalized since
    \begin{equation*}
    \textstyle \mathrm{sim}_{p,e}^1(c, c) = \mathbb{E}_{t \sim p} [ \frac{1}{|t|} \sum_{k=1}^{|t|} \mathds{1}_{O_k(c, t | e) = O_k(c, t | e)} ] = \mathds{E}_{t \sim p} [1] = 1.
    \end{equation*}
    To extend these results to $\mathrm{sim}_{p,e}^s$ for $s \not = 1$, recall that the class of positive semi-definite kernels is closed under point-wise multiplication (as a consequence of the Schur Product Theorem). Since $\mathrm{sim}_{p,e}^s(c_1, c_2) = (\mathrm{sim}_{p,e}^1(c_1, c_2))^s$ is the point-wise product of the kernel $\mathrm{sim}_{p,e}^1$ with itself $s$ times, $\mathrm{sim}_{p,e}^s$ is a kernel for any $s \in \mathbb{N}$. 
    
    Finally, to prove convergence to functional equivalence, 
    consider two cases: If $\mathrm{sim}_{p,e}^\infty(c_1, c_2) = 1$ then $O_k(c_1, t \vert e) = O_k(c_2, t \vert e)\ \forall k \in [|t|], t \in \mathcal V^* : p(t) > 0$. Therefore, $\mathrm{sim}_{p,e}^1(c,c) = 1$ and thus $\mathrm{sim}_{p,e}^s(c,c) = 1\ \forall s \in \mathbb N$, including in the limit. In contrast, if $\mathrm{sim}_{p,e}^\infty(c_1, c_2) \not= 1$, then  $\mathrm{sim}_{p,e}^\infty(c_1, c_2) = 0$ and $\exists k \in [|t|], t \in \mathcal V^*: p(t) > 0$ and $O_k(c_1, t \vert e) \not = O_k(c_2, t \vert e)$. But then $\mathrm{sim}_{p,e}^1(c_1,c_2) \in [0,1)$ and thus $\lim_{s\to\infty}\mathrm{sim}_{p,e}^s(c_1,c_2) = \lim_{s\to\infty} (\mathrm{sim}_{p,e}^1(c_1,c_2))^s  = 0$. In either case, functional $s$-similarity converges to functional equivalence as $s \to \infty$.
\end{proof}

\begin{proposition*}[\ref{prop:equivalence_classes}]
  Define $\mathcal N_c^\infty = \{c^\prime \in \mathcal V^* : \mathrm{sim}_{p,e}^\infty(c, c^\prime) = 1 \}$. Then $\{\mathcal N_c^\infty : c \in \mathcal V^*\}$ is a set of equivalence classes partitioning $\mathcal V^*$.  
\end{proposition*}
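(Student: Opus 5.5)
The plan is to define the relation $c \sim c^\prime :\Leftrightarrow \mathrm{sim}_{p,e}^\infty(c,c^\prime)=1$ and show that it is an equivalence relation on $\mathcal V^*$. Once that is done, the sets $\mathcal N_c^\infty$ are exactly its equivalence classes $[c]_\sim$. The standard fact that the equivalence classes of an equivalence relation partition the underlying set then gives the claim.

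The key step is to rewrite $\sim$ via Definition~\ref{def:functional_equivalence}. We have $c\sim c^\prime$ if and only if $O(c,t\,\vert\, e)=O(c^\prime,t\,\vert\, e)$ for every $t$ in the support $S:=\{t\in\mathcal V^*: p(t)>0\}$. Equivalently, the maps $F_c:S\to \mathbb O^*$ with $t\mapsto O(c,t\,\vert\, e)$ and $F_{c^\prime}$ coincide. Thus $\sim$ is the pullback of equality under the map $c\mapsto F_c$, and pullbacks of equality are always equivalence relations. Checking the three properties explicitly is straightforward:
\begin{itemize}
\item \emph{Reflexivity}: $O(c,t\,\vert\, e)=O(c,t\,\vert\, e)$ for all $t\in S$; this also follows from $\mathrm{sim}^s_{p,e}(c,c)=1$ in Proposition~\ref{prop:properties_s_similarity} by taking the limit.
\item \emph{Symmetry}: this is immediate from the symmetry of equality.
\item \emph{Transitivity}: if $F_{c_1}=F_{c_2}$ and $F_{c_2}=F_{c_3}$ pointwise on $S$, then $F_{c_1}=F_{c_3}$.
\end{itemize}

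It then remains to verify the partition properties directly. Coverage holds because $c\in\mathcal N_c^\infty$ by reflexivity. For disjointness-or-equality, suppose $x\in\mathcal N_{c}^\infty\cap\mathcal N_{c^\prime}^\infty$. Then symmetry and transitivity give $c\sim c^\prime$, and a further application of transitivity shows $\mathcal N_c^\infty=\mathcal N_{c^\prime}^\infty$.

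There is no real obstacle. The only point needing care is well-definedness: the indicator in Definition~\ref{def:functional_equivalence} must take only the values $0$ and $1$. It does, so $\mathrm{sim}^\infty_{p,e}(c,c^\prime)\neq 1$ means it equals $0$, and the relation is crisp. The comparison is also restricted to the support of $p$, which is fixed and independent of $c$, so transitivity is not disturbed by the quantifier over $t$.
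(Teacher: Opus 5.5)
Your proposal is correct and follows essentially the same argument as the paper: both rewrite $\mathrm{sim}^\infty_{p,e}(c,c^\prime)=1$ as equality of test outputs on the support $\{t : p(t)>0\}$, and inherit reflexivity, symmetry, and transitivity from equality. Your version additionally phrases this as a pullback of equality and spells out the standard check that the resulting classes $\mathcal N_c^\infty$ cover $\mathcal V^*$ and are pairwise disjoint or equal, a step the paper leaves implicit.
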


\begin{proof}
    By Definition~\ref{def:functional_equivalence}, $\mathrm{sim}_{p,e}^\infty(c, c^\prime) = 1$ if and only if the codes $c, c^\prime$ produce identical outputs on all test cases with non-zero support. Since the equality of outputs is reflexive, symmetric, and transitive, $\mathrm{sim}_{p,e}^\infty$ inherits these properties, partitioning the code space into disjoint equivalence classes $\mathcal N_c^\infty$.
\end{proof}

\begin{theorem*}[\ref{thm:measure_smoothing}]
    Let $\mathrm{sim}_{p,e}^s(c,c^\prime) \geq 1-\varepsilon$ for $\varepsilon \in (0,1)$. Then 
    \begin{equation*}
    |p(\mathcal N_c^s) - p(\mathcal N_{c^\prime}^s)| \leq \sqrt{2\varepsilon}.
    \end{equation*}
\end{theorem*}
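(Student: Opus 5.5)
The plan is to use the kernel structure from Proposition~\ref{prop:properties_s_similarity}. Since $k := \mathrm{sim}^s_{p,e}$ is a positive semi-definite kernel on $\mathcal V^*$, the Moore--Aronszajn theorem gives a reproducing kernel Hilbert space $\mathcal H_k$ with feature map $\phi(c) := k(c,\cdot)$ and $k(c_1,c_2) = \langle \phi(c_1), \phi(c_2)\rangle_{\mathcal H_k}$. Normalization $k(c,c)=1$ then gives $\|\phi(c)\|=1$ for every $c$.

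Next, express the smoothed measure of Definition~\ref{def:similarity_smoothed_score} through the kernel mean embedding $m_p := \sum_{c'} p(c')\phi(c') = \mathbb E_{c'\sim p}[\phi(c')]$. This series converges in $\mathcal H_k$ (Bochner sense), because $\sum_{c'} p(c')\|\phi(c')\| = 1$. One also has $\|m_p\| \le 1$ by the triangle inequality, or equivalently by Jensen. Continuity of the inner product then yields
\[
p(\mathcal N^s_c) = \sum_{c'} p(c')\,k(c,c') = \langle \phi(c), m_p\rangle .
\]

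The bound then follows from Cauchy--Schwarz:
\[
|p(\mathcal N^s_c) - p(\mathcal N^s_{c'})| = |\langle \phi(c)-\phi(c'), m_p\rangle| \le \|\phi(c)-\phi(c')\|\,\|m_p\| \le \|\phi(c)-\phi(c')\| .
\]
Expanding the norm gives $\|\phi(c)-\phi(c')\|^2 = k(c,c) + k(c',c') - 2k(c,c') = 2 - 2\,\mathrm{sim}^s_{p,e}(c,c') \le 2\varepsilon$. Taking square roots completes the argument.

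The only step that needs care is justifying the mean embedding. $\mathcal V^*$ is countably infinite, so one must confirm that the infinite sum lies in $\mathcal H_k$ and that interchanging sum and inner product is legitimate. Absolute convergence with bounded features settles this. Alternatively, the Hilbert space can be avoided entirely by taking the explicit feature map from the proof of Proposition~\ref{prop:properties_s_similarity}: one-hot vectors $\phi_{t,k}$ weighted by $\sqrt{p(t)/|t|}$, tensored $s$ times, in an $\ell^2$ space. The same Cauchy--Schwarz computation goes through there.
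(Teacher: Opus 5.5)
Your proposal is correct and follows essentially the same route as the paper's proof. Both use a unit-norm feature map for the normalized PSD kernel $\mathrm{sim}^s_{p,e}$, write $p(\mathcal N_c^s) = \langle \phi(c), \mu_p\rangle$ via the mean embedding, apply Cauchy--Schwarz with $\|\mu_p\| \le 1$, and bound $\|\phi(c)-\phi(c')\|^2 = 2 - 2\,\mathrm{sim}^s_{p,e}(c,c') \le 2\varepsilon$. Your extra justification that the mean embedding converges absolutely over the countable $\mathcal V^*$, and your explicit $\ell^2$ tensor-feature alternative, fill in a detail the paper leaves implicit.
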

\begin{proof}
    Since $\mathrm{sim}_{p,e}^s$ is a positive definite kernel with $\mathrm{sim}_{p,e}^s(c,c)=1$ (from Proposition~\ref{prop:properties_s_similarity}), there exists a feature map $\phi: \mathcal V^* \to \mathcal H$ into a Hilbert space such that $\mathrm{sim}_{p,e}^s(x,y) = \langle \phi(x), \phi(y) \rangle$ and $\|\phi(x)\|=1$.
    
    We can express the probability of the fuzzy neighborhood $p(\mathcal N_c^s)$ as an inner product with the \textit{mean embedding} of the distribution $p$. Let $\mu_p := \mathbb{E}_{x \sim p}[\phi(x)]$. Then:
    \begin{align*}
    p(\mathcal N_c^s) & = \mathbb{E}_{x \sim p}[\mathrm{sim}_{p,e}^s(c, x)]\\ 
    & = \mathbb{E}_{x \sim p}[\langle \phi(c), \phi(x) \rangle] = \langle \phi(c), \mu_p \rangle.
    \end{align*}
    
    The difference in probability mass is then governed by the distance in feature space:
    \begin{align*}
        |p(\mathcal N_c^s) - p(\mathcal N_{c^\prime}^s)| 
        &= |\langle \phi(c), \mu_p \rangle - \langle \phi(c^\prime), \mu_p \rangle| \\
        &= |\langle \phi(c) - \phi(c^\prime), \mu_p \rangle|.
    \end{align*}
    Applying the Cauchy-Schwarz inequality yields:
    $$|\langle \phi(c) - \phi(c^\prime), \mu_p \rangle| \leq \|\phi(c) - \phi(c^\prime)\| \cdot \|\mu_p\|.$$
    
    We bound the first term using the similarity condition:
    \begin{align*}
    \|\phi(c) - \phi(c^\prime)\|^2 &= \langle \phi(c)-\phi(c^\prime), \phi(c)-\phi(c^\prime) \rangle \\
    &= \|\phi(c)\|^2 + \|\phi(c^\prime)\|^2 - 2\langle \phi(c), \phi(c^\prime) \rangle \\
    &= 2 - 2 \mathrm{sim}_{p,e}^s(c, c^\prime) \leq 2 - 2(1-\varepsilon) = 2\varepsilon.
    \end{align*}
    Thus, $\|\phi(c) - \phi(c^\prime)\| \leq \sqrt{2\varepsilon}$. For the second term, by Jensen's inequality, the norm of the mean embedding is bounded by the expected norm:
    $$\|\mu_p\| = \|\mathbb{E}_{x}[\phi(x)]\| \leq \mathbb{E}_{x}[\|\phi(x)\|] = 1.$$
    
    Multiplying these bounds gives $|p(\mathcal N_c^s) - p(\mathcal N_{c^\prime}^s)| \leq \sqrt{2\varepsilon} \cdot 1$.
\end{proof}

\begin{theorem*}[\ref{thm:consistency_and_bias}] 
    $\widehat{\mathrm{sim}}_{p,e;m}^s$ is a strongly consistent estimator of $\mathrm{sim}_{p,e}^s$, i.e., 
    \begin{equation*}
        \mathbb P[\lim_{m \to \infty} \widehat{\mathrm{sim}}_{p,e;m}^s(c_1, c_2) = \mathrm{sim}_{p,e}^s (c_1, c_2)] = 1\ \forall c_1, c_2 \in \mathcal V^*.
    \end{equation*}
    However, it is biased unless $s=1$.
\end{theorem*}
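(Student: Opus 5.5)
Fix $c_1, c_2$ and define, for each i.i.d.\ test suite $t_j \sim p(t)$, the per-suite agreement $Y_j := \frac{1}{|t_j|}\sum_k \mathds{1}_{O_k(c_1,t_j\vert e)=O_k(c_2,t_j\vert e)} \in [0,1]$. Then $\widehat{\mathrm{sim}}^s_{p,e;m}(c_1,c_2) = (\bar Y_m)^s$ with $\bar Y_m := \frac1m\sum_{j=1}^m Y_j$, and $\mathbb E[Y_j] = \mathrm{sim}^1_{p,e}(c_1,c_2) =: \mu$. Here we use the standing assumption of non-empty test suites, so $Y_j$ is well defined. The plan is to reduce everything to properties of the scalar sample mean $\bar Y_m$.

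\emph{Strong consistency.} The $Y_j$ are i.i.d.\ and bounded, hence integrable. Kolmogorov's strong law of large numbers therefore gives $\bar Y_m \to \mu$ almost surely. The map $g(x)=x^s$ is continuous on $[0,1]$, so by the continuous mapping theorem $(\bar Y_m)^s \to \mu^s = \mathrm{sim}^s_{p,e}(c_1,c_2)$ almost surely. This is the claimed statement for every pair $c_1,c_2$.

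\emph{Unbiasedness at $s=1$.} By linearity of expectation, $\mathbb E[\bar Y_m]=\mu$.

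\emph{Bias for $s\geq 2$.} Here $g(x)=x^s$ is strictly convex on $[0,1]$. Jensen's inequality gives $\mathbb E[\bar Y_m^s] \ge \mu^s$, with equality if and only if $\bar Y_m$ is almost surely constant.

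\emph{Main obstacle.} The delicate point is exactly this equality case. If $Y_1$ is degenerate, the estimator is trivially unbiased. This happens, for instance, when $c_1,c_2$ are functionally equivalent ($\mu=1$), or more generally whenever every suite with positive mass yields the same agreement fraction. So ``biased unless $s=1$'' must be read as holding whenever $\mathrm{Var}(Y_1)>0$, or equivalently as failing uniformly over all pairs $(c_1,c_2)$ and all $p$.

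I would make this precise as follows. Write $\mathbb E[\bar Y_m^s]-\mu^s$, which for $s=2$ equals $\mathrm{Var}(Y_1)/m>0$. For general $s\ge2$, I would invoke the strict form of Jensen: since $\mathrm{Var}(\bar Y_m)=\mathrm{Var}(Y_1)/m>0$, $\bar Y_m$ is non-degenerate, so the inequality is strict. To conclude that the estimator is biased as a procedure, it then suffices to exhibit one non-degenerate instance, e.g.\ two codes that agree on one suite in the support of $p$ and disagree on another.
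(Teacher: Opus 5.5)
Your proposal is correct and follows essentially the same route as the paper: you rewrite the estimator as $(\bar Y_m)^s$ of an i.i.d.\ bounded per-suite agreement fraction, obtain strong consistency from Kolmogorov's strong law plus the continuous mapping theorem, and obtain unbiasedness at $s=1$ from linearity. For $s>1$ you derive upward bias from strict Jensen under the same non-degeneracy caveat the paper states, and your explicit observation that $\mathrm{Var}(\bar Y_m)=\mathrm{Var}(Y_1)/m>0$ carries non-degeneracy from $Y_1$ to $\bar Y_m$ is a small step the paper leaves implicit.
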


\begin{proof}
    Let us first define the proportion of passed test cases on a single sampled test suite $t_j$ as a random variable $Z_j$: 
    \begin{equation*}     
        \textstyle Z_j := \frac{1}{|t_j|} \sum_k \mathds{1}_{O_k(c_1, t_j \vert e) = O_k(c_2, t_j \vert e)} 
    \end{equation*}
    Because the test suites $t_j \sim p(t)$ are sampled i.i.d., the terms $Z_1, \dots, Z_m$ are i.i.d. bounded random variables on $[0, 1]$. By definition, their expected value is the true $1$-similarity:
    \begin{equation*}     
        \mu := \mathbb{E}[Z_j] = \mathrm{sim}_{p,e}^1(c_1, c_2) \end{equation*}  We can rewrite the true $s$-similarity and its Monte Carlo estimator in terms of $Z$: 
    \begin{itemize}
        \item True value: $\mathrm{sim}_{p,e}^s(c_1, c_2) = \mu^s$
        \item Estimator: $\widehat{\mathrm{sim}}_{p,e;m}^s(c_1, c_2) = (\bar{Z}_m)^s$, where $\bar{Z}_m = \frac{1}{m} \sum_{j=1}^m Z_j$ is the sample mean.
    \end{itemize}
    
    \underline{Part 1: Strong Consistency}  To prove strong consistency, we must show that the estimator converges almost surely to the true value as the sample size $m \to \infty$.  \begin{enumerate}
        \item Since $Z_j$ are i.i.d. with finite expectation $\mu$, we can apply Kolmogorov's strong law of large numbers, which dictates that the sample mean converges almost surely to the true mean:     \begin{equation*}
            \mathbb{P} \left[ \lim_{m \to \infty} \bar{Z}_m = \mu \right] = 1
        \end{equation*}
        \item The function $g(x) = x^s$ is continuous for all $x \in [0, 1]$ and $s \in \mathbb{N}$.
        \item By the continuous mapping theorem, if a sequence of random variables $\bar{Z}_m \xrightarrow{a.s.} \mu$, then $g(\bar{Z}_m) \xrightarrow{a.s.} g(\mu)$. Therefore:
    \begin{equation*}         \mathbb{P} \left[ \lim_{m \to \infty} (\bar{Z}_m)^s = \mu^s \right] = 1
    \end{equation*}
    \end{enumerate}
    Substituting back our definitions yields: \begin{equation*}
        \mathbb{P} \left[ \lim_{m \to \infty} \widehat{\mathrm{sim}}_{p,e;m}^s(c_1, c_2) = \mathrm{sim}_{p,e}^s (c_1, c_2) \right] = 1
    \end{equation*} 
    This concludes the proof of strong consistency. 
    
    \underline{Part 2: Bias}  To evaluate bias, we must check if the expected value of the estimator equals the true value, i.e., for which $s \in \mathbb N$ it holds that $\mathbb{E}[(\bar{Z}_m)^s] = \mu^s$.
    
    \textit{Case 1: $s = 1$}
    
    By the linearity of expectation, the expected value of the sample mean is exactly the population mean: \begin{equation*}
        \mathbb{E}[(\bar{Z}_m)^1] = \mathbb{E}[\bar{Z}_m] = \mu = \mu^1
    \end{equation*}
    Thus, for $s = 1$, the estimator is perfectly unbiased.
    
    \textit{Case 2: $s > 1$} 
    
    For any $s > 1$, the function $g(x) = x^s$ is strictly convex on $[0, 1]$. Now, if $Z_j$ is not almost surely constant (i.e., the codes $c_1$ and $c_2$ pass some test cases and fail others), we can apply Jensen's inequality for strictly convex functions, which states that the expectation of a convex function is strictly greater than the convex function of the expectation:
    \begin{equation*}
        \mathbb{E}[(\bar{Z}_m)^s] > (\mathbb{E}[\bar{Z}_m])^s = \mu^s
    \end{equation*}
    Because the expected value of the estimator is strictly greater than the true parameter for all non-degenerate test distributions, the estimator is biased upwards for all $s > 1$. 
\end{proof}

\begin{theorem*}[~\ref{thm:snr_dominance}]
    Let $\mathrm{sim}_{p,e}^1(c_1, c_2) = \mu \in (0,1)$ be the true functional similarity.
    We define the Signal-to-Noise Ratio (SNR) of an estimator $\widehat{X}$ as $\mathrm{SNR}(\widehat{X}) := (\mathbb{E}[\widehat{X}])^2 / \mathrm{Var}(\widehat{X})$.
    For $m \ge 1$, the SNR of the smooth estimator dominates the sharp estimator by a factor of at least $m^2$:
    \begin{equation*}
    \mathrm{SNR}(\widehat{\mathrm{sim}}_{p,e;m}^1(c_1, c_2))\,/\,\mathrm{SNR}(\widehat{\mathrm{sim}}_{p,e;m}^\infty(c_1, c_2)) \geq m (1/{\mu})^{m-1} \cdot \tfrac{1-\mu^m}{1-\mu} \ge m^2
    \end{equation*}
\end{theorem*}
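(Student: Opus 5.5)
We reduce both estimators to the i.i.d.\ per-suite agreement proportions used in the proof of Theorem~\ref{thm:consistency_and_bias}. Let
\[
\textstyle Z_j := \frac{1}{|t_j|}\sum_k \mathds{1}_{O_k(c_1,t_j\vert e)=O_k(c_2,t_j\vert e)} \in [0,1], \qquad j=1,\dots,m.
\]
These are i.i.d.\ with $\mathbb E[Z_j]=\mu$.

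\emph{Smooth estimator.} We have $\widehat{\mathrm{sim}}^1_{p,e;m}=\bar Z_m$, so $\mathbb E[\bar Z_m]=\mu$ and $\mathrm{Var}(\bar Z_m)=\mathrm{Var}(Z_1)/m$. Since $Z_1\in[0,1]$, $Z_1^2\le Z_1$, and hence $\mathrm{Var}(Z_1)\le \mu-\mu^2=\mu(1-\mu)$. This gives
\[
\mathrm{SNR}(\widehat{\mathrm{sim}}^1_{p,e;m}) \ge \frac{m\mu}{1-\mu}.
\]
If $\mathrm{Var}(Z_1)=0$, this SNR is infinite and the claim is trivial.

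\emph{Sharp estimator.} We interpret $\widehat{\mathrm{sim}}^\infty_{p,e;m}$ as $\lim_{s\to\infty}(\bar Z_m)^s$, in line with Proposition~\ref{prop:properties_s_similarity}. This limit equals $\mathds 1_{\bar Z_m=1}=\prod_{j}\mathds 1_{Z_j=1}$, a Bernoulli variable with success probability $\pi^m$, where $\pi:=\mathbb P[Z_1=1]$. A Bernoulli($q$) variable has SNR $q^2/(q(1-q))=q/(1-q)$, which is increasing in $q$.

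The key step is to bound $\pi$ by $\mu$. Markov's inequality gives $\pi=\mathbb P[Z_1\ge 1]\le \mathbb E[Z_1]=\mu$. Hence $\pi^m\le\mu^m$, and
\[
\mathrm{SNR}(\widehat{\mathrm{sim}}^\infty_{p,e;m})\le \frac{\mu^m}{1-\mu^m}.
\]
This is finite because $\mu<1$. If $\pi=0$, the sharp SNR is $0$ (or undefined), and the dominance claim holds trivially or vacuously; we will remark on this degenerate case.

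\emph{Ratio and geometric sum.} Dividing the two bounds gives
\[
\frac{\mathrm{SNR}(\widehat{\mathrm{sim}}^1_{p,e;m})}{\mathrm{SNR}(\widehat{\mathrm{sim}}^\infty_{p,e;m})} \ge \frac{m\mu}{1-\mu}\cdot\frac{1-\mu^m}{\mu^m} = m\,(1/\mu)^{m-1}\,\frac{1-\mu^m}{1-\mu}.
\]
For the final inequality, expand $\frac{1-\mu^m}{1-\mu}=\sum_{k=0}^{m-1}\mu^k$. Then
\[
(1/\mu)^{m-1}\sum_{k=0}^{m-1}\mu^k=\sum_{k=0}^{m-1}\mu^{-k}\ge m,
\]
since each term is at least $1$ when $\mu\in(0,1)$. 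This yields the lower bound $m^2$.

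\emph{Main obstacle.} The main difficulty is conceptual rather than computational: identifying the sharp Monte Carlo estimator with the indicator that all sampled suites agree completely, and controlling its success probability by $\mu$. The Markov bound $\pi\le\mu$ handles this cleanly. Combined with the monotonicity of $q\mapsto q/(1-q)$, it turns the comparison into the explicit geometric-series computation above.
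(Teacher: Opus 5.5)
Your proof is correct and follows the paper's argument essentially step for step: the bound $\mathrm{Var}(Z_1)\le\mu(1-\mu)$ for the smooth estimator, the Bernoulli$(\pi^m)$ form of the sharp estimator with the Markov bound $\pi\le\mu$ and monotonicity of $q/(1-q)$, and the geometric-series expansion giving $m^2$. Your explicit treatment of the degenerate cases ($\mathrm{Var}(Z_1)=0$ or $\pi=0$, where the sharp SNR is $0/0$) is a small refinement that the paper leaves implicit.
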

\begin{proof}
    Let $Z := \widehat{\mathrm{sim}}_{p,e;1}^1(c_1, c_2 \, \vert \, e)$ be the random variable representing the similarity score on a single test suite, i.e., $Z \in [0,1]$ and $\mathbb{E}[Z] = \mu \in (0,1)$.
    The smooth estimator is the sample mean of $m$ i.i.d. draws of $Z$. Its SNR depends on the variance of $Z$:
    $$\mathrm{SNR}(\widehat{\mathrm{sim}}_{p,e;m}^1) = \frac{\mathbb{E}[\widehat{\mathrm{sim}}_{p,e;m}^1]^2}{\mathrm{Var}(\widehat{\mathrm{sim}}_{p,e;m}^1)} = \frac{\mu^2}{\mathrm{Var}(Z)/m} = \frac{m\mu^2}{\mathrm{Var}(Z)}.$$
    Since $Z$ is bounded in $[0,1]$, $\mathrm{Var}(Z) = \mathbb E[Z^2] - \mu^2 \leq \mathbb E[Z] - \mu^2 = \mu(1-\mu)$. Thus:
    $$\mathrm{SNR}(\widehat{\mathrm{sim}}_{p,e;m}^1) \ge \frac{m\mu^2}{\mu(1-\mu)} = \frac{m\mu}{1-\mu}.$$
    
    The sharp estimator is an indicator that is $1$ if and only if all sampled suites are perfect matches, i.e., $\widehat{\mathrm{sim}}_{p,e;m}^\infty = \prod_{j=1}^m \mathds{1}_{Z_j=1}$.
    Consequently, it is a Bernoulli variable with parameter $p_1^m$, where $p_1 = \mathbb P[Z=1]$. As a result,
    $$\mathrm{SNR}(\widehat{\mathrm{sim}}_{p,e;m}^\infty) = \frac{(p_1^m)^2}{p_1^m(1-p_1^m)} = \frac{p_1^m}{1-p_1^m}.$$
    Now, since $Z \geq 0$, $p_1 \le \mathbb{E}[Z] = \mu$.
    Moreover, for $m \geq 1$, $f(x) = \frac{x}{1-x}$ and $g(x) = x^m$ are both monotonously increasing between $0$ and $1$. As a result, 
    $$\mathrm{SNR}(\widehat{\mathrm{sim}}_{p,e;m}^\infty) \le \frac{\mu^m}{1-\mu^m}.$$
    
    Taking the ratio of the lower bound of the smooth SNR to the upper bound of the sharp SNR and applying the formula for a finite geometric series in reverse yields
    $$ \frac{\mathrm{SNR}(\widehat{\mathrm{sim}}_{p,e;m}^1)}{\mathrm{SNR}(\widehat{\mathrm{sim}}_{p,e;m}^\infty)} \geq \frac{\frac{m\mu}{1-\mu}}{\frac{\mu^m}{1-\mu^m}} = \frac{m}{\mu^{m-1}} \frac{1-\mu^m}{1-\mu} = \frac{m}{\mu^{m-1}} \sum_{k=0}^{m-1} \mu^{k} = m \sum_{k=0}^{m-1} \mu^{k-(m-1)}.$$
    Finally, we observe that since $\mu \in (0,1)$, every term $\mu^{k-(m-1)} \ge 1$. As there are $m$ such terms:
    $$ \frac{\mathrm{SNR}(\widehat{\mathrm{sim}}_{p,e;m}^1)}{\mathrm{SNR}(\widehat{\mathrm{sim}}_{p,e;m}^\infty)} \ge m \cdot m = m^2. $$
\end{proof}

\begin{proposition*}[\ref{prop:additivity_probability_of_correctness}]
    Suppose that the true test suite $t \in \mathcal V^*$ verifying an implementation $c \in \mathcal V^*$ in environment $e\in \mathcal E$ is almost surely such that $\exists !$ correct $\mathcal N_c^\infty$. Define $\mathrm{pass@k}(\{\mathcal N_{c_i}^\infty\}_{i=1}^k) := \mathbb P_{t\sim p} [\exists i \in \{1,\ldots,k\} : R(c_i,t\, \vert\, e)=1]$. Then, $\mathrm{pass@k}$ is maximized by greedily selecting the $k$ code classes $\{\mathcal N_{c_i}^\infty\}_{i=1}^k$ with largest $\mathrm{pass@1}$.
\end{proposition*}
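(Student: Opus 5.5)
The plan is to turn pass@k into a sum of pass@1 values by showing that the success events of distinct equivalence classes are almost surely disjoint. The proof has three steps.

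\textbf{Step 1: success depends only on the class.} For each class $\mathcal N_{c_i}^\infty$, define the event $A_i := \{t : R(c_i,t\,\vert\, e)=1\}$. This event is well defined on classes. If $c_i' \in \mathcal N_{c_i}^\infty$, then by Definition~\ref{def:functional_equivalence} we have $O(c_i,t\,\vert\,e)=O(c_i',t\,\vert\,e)$ for every $t$ with $p(t)>0$. Hence $R(c_i,t\,\vert\,e)=R(c_i',t\,\vert\,e)$ on the support of $p$. Proposition~\ref{prop:equivalence_classes} guarantees that the classes partition $\mathcal V^*$. So any $k$ distinct classes can be selected and compared unambiguously, and $\mathrm{pass@1}(\mathcal N_{c}^\infty)=\mathbb P_{t\sim p}[R(c,t\,\vert\,e)=1]$ is a function of the class alone.

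\textbf{Step 2: disjointness and additivity.} Take two distinct classes, $i\neq j$. The event $A_i\cap A_j$ means that two different equivalence classes both pass $t$. By the hypothesis that almost surely exactly one class is correct, $\mathbb P[A_i\cap A_j]=0$. The events $A_1,\dots,A_k$ are therefore pairwise almost surely disjoint. Inclusion–exclusion, with all higher-order intersections null, then gives
\[
\textstyle \mathrm{pass@k}(\{\mathcal N_{c_i}^\infty\}_{i=1}^k)=\mathbb P[\cup_i A_i]=\sum_{i=1}^k \mathbb P[A_i]=\sum_{i=1}^k \mathrm{pass@1}(\mathcal N_{c_i}^\infty).
\]
Equivalently, one can write $\mathbb E[\max_i \mathds 1_{A_i}]=\mathbb E[\sum_i \mathds 1_{A_i}]$ almost surely. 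Selecting the same class twice adds nothing, so the selection must consist of $k$ distinct classes. Otherwise the union collapses to fewer terms and is dominated by adding a new class, whose pass@1 is nonnegative.

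\textbf{Step 3: greedy is optimal.} pass@k is now an additive set function with nonnegative weights $w(\mathcal N)=\mathrm{pass@1}(\mathcal N)$ over subsets of size $k$ of the partition. Maximizing a sum of nonnegative weights over $k$-subsets is achieved by taking the $k$ largest weights. This is a short exchange argument: any selection containing a class with a weight smaller than some unselected class's weight can be improved, or left unchanged, by swapping them. Ties do not matter. If fewer than $k$ classes have positive mass, the remaining picks contribute zero.

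\textbf{Main obstacle.} The delicate point is Step 2. The precise measure-theoretic reading of "almost surely $\exists!$ correct class" must be exactly $\mathbb P[A_i\cap A_j]=0$ for distinct classes. I would state explicitly that the probability in pass@k is taken over the true test distribution for which the hypothesis holds, as the assumption discussion in Section~\ref{sec:discussion_assumptions} emphasizes. Countably many classes cause no issue, since only the $k$ chosen ones enter the union.
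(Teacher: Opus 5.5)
Your proposal is correct and follows the paper's proof. Both define per-class success events, use the almost-sure uniqueness hypothesis to get $\mathbb P[E_{c_i}\cap E_{c_j}]=0$ for distinct classes, rewrite pass@k as the sum of the pass@1 values, and conclude that choosing the $k$ largest terms is optimal; your Step 1 (the success event depends only on the class) and your explicit exchange argument fill in details the paper leaves implicit without changing the route.
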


\begin{proof}
    Let $S = \{\mathcal N_{c_1}^\infty, \ldots, \mathcal N_{c_k}^\infty\}$ be the set of $k$ selected equivalence classes. We aim to maximize pass@k:
    \begin{equation*}
        \mathrm{pass@k}(S) := \mathbb P_{t \sim p} \big[ \exists\, \mathcal N_c^\infty \in S : R(c, t \, \vert \, e) = 1 \big].
    \end{equation*}
    Define the success event for a class as $E_c := \{t \in \mathcal V^* : R(c, t \, \vert \, e) = 1\}$. 
    Since by assumption there exists almost surely one correct equivalence class, for any two distinct classes $\mathcal N_{c_i}^\infty, \mathcal N_{c_j}^\infty \in S$, the intersection of their success events has zero measure ($\mathbb P[E_{c_i} \cap E_{c_j}] = 0$).
    
    This mutual exclusivity allows us to decompose the probability of the union into a sum:
    \begin{equation*}
        \mathrm{pass@k}(S) = \mathbb P_{t\sim p} \big[ \bigcup_{\mathcal N_c^\infty \in S} E_c \big] = \sum_{\mathcal N_c^\infty \in S} \mathbb P_{t\sim p} [E_c] = \sum_{\mathcal N_c^\infty \in S} \mathrm{pass@1}(\{\mathcal N_c^\infty\}).
    \end{equation*}
    Maximizing this sum subject to the constraint $|S|=k$ is a standard selection problem, for which the optimal solution is the greedy strategy: selecting the $k$ classes with the largest $\mathrm{pass@1}$.
\end{proof}

\begin{theorem*}[\ref{thm:reward_bound}]
Decompose the reward into independent components $r := r_{obs} + r_{hid}$, where $\mathbb E[r_{hid}]$ is finite, and $r_{obs}(x)$ is $\sigma_n(x)$-subgaussian given observation history $\mathcal H_n := (x_i, r_{obs}(x_i))_{i=1}^{n-1}$. Define the Thompson sampling policies $x_{obs}^* \sim \mathbb P[x^* \,\vert\, r_{obs}]$ and $x_n \sim \mathbb P[x^* \,\vert\, \mathcal H_n]$ for $x^* := \arg\max_x r(x)$. Then
\begin{equation*}
    \textstyle \mathbb E[\sum_{n=1}^T r(x^*_{obs}) - r(x_n)] \leq \beta\sqrt{T \cdot \Gamma_T}
\end{equation*}
for $\beta := 1 \!+\! \sqrt{2 \log(2 |r|) \!+\! 2}$ and cumulative uncertainty $\Gamma_T := \mathbb E[\sum_{n=1}^T\!\sigma_n^2(x_n)]$,
\end{theorem*}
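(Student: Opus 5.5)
The plan follows the standard Russo--Van Roy style analysis of Thompson sampling, with one preliminary reduction to strip off the hidden component.

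\textbf{Step 1: reduce the regret to the observable component.} Fix $n$. The policy $x_n$ is a posterior sample of $x^*$ given $\mathcal H_n$, while $x^*_{obs}$ is a posterior sample of $x^*$ given all of $r_{obs}$. By the tower property, conditionally on $\mathcal H_n$ the sample $x^*_{obs}$ is also distributed as $\mathbb P[x^*\mid\mathcal H_n]$, because $\mathcal H_n$ is a function of $r_{obs}$. Hence $x_n$ and $x^*_{obs}$ share the same law given $\mathcal H_n$.

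Since $r_{hid}$ is independent of $r_{obs}$ (and thus of $\mathcal H_n$), and $\mathbb E[r_{hid}]$ is finite, I would argue that $\mathbb E[r_{hid}(x^*_{obs})] = \mathbb E[r_{hid}(x_n)]$. The point is that both actions are drawn from the same $\mathcal H_n$-measurable distribution. For $x^*_{obs}$, the independent randomization means it depends only on $r_{obs}$ and fresh noise, which are independent of $r_{hid}$.

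The hidden terms therefore cancel. It remains to bound $\mathbb E[r_{obs}(x^*_{obs}) - r_{obs}(x_n)]$.

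\textbf{Step 2: per-step decomposition.} Let $\mu_n(x) := \mathbb E[r_{obs}(x)\mid\mathcal H_n]$, and set $U_n(x) := \mu_n(x) + \gamma\sigma_n(x)$ for a constant $\gamma$ chosen later. Because $x_n$ and $x^*_{obs}$ have the same conditional law and $U_n$ is $\mathcal H_n$-measurable, $\mathbb E[U_n(x^*_{obs})\mid\mathcal H_n] = \mathbb E[U_n(x_n)\mid\mathcal H_n]$. This gives
\[
\mathbb E[r_{obs}(x^*_{obs}) - r_{obs}(x_n)] = \mathbb E[r_{obs}(x^*_{obs}) - U_n(x^*_{obs})] + \mathbb E[U_n(x_n) - r_{obs}(x_n)].
\]
The second term equals $\gamma\,\mathbb E[\sigma_n(x_n)]$, using the tower property and the fact that $x_n$ is conditionally independent of $r_{obs}$ given $\mathcal H_n$.

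\textbf{Step 3: the first term (the main obstacle).} I bound it by
\[
\mathbb E\Big[\max_x \big(r_{obs}(x) - \mu_n(x) - \gamma\sigma_n(x)\big)^+\Big].
\]
This is where the auxiliary Lemma~\ref{lem:uniform_bound} enters: a union bound over the $|r|$ actions combined with subgaussian tail integration. Each normalized deviation $(r_{obs}(x) - \mu_n(x))/\sigma_n(x)$ is $1$-subgaussian given $\mathcal H_n$, so I would integrate its tail beyond level $\gamma$. The aim is a bound that is $\mathcal O(\sigma_n)$-scaled and absorbed into a constant, giving a total per-step regret of at most $\beta\,\mathbb E[\sigma_n(x_n)]$ with $\beta = 1 + \sqrt{2\log(2|r|) + 2}$.

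The delicate point is that this term involves $\sigma_n(x)$ at the maximizing $x$, not at $x_n$. I would first try to write the maximal deviation as at most $\mathbb E[\sigma_n(x^*_{obs})\cdot(\cdot)]$ and then reuse the equal-law argument to trade $\sigma_n(x^*_{obs})$ for $\sigma_n(x_n)$. If that fails, I would accept the looser Russo--Van Roy form with $\gamma = \sqrt{2\log(2|r|)}$, where the residual tail sum over $x$ is small and yields the additive $1$ in $\beta$.

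\textbf{Step 4: sum over rounds.} Summing over $n = 1,\dots,T$ and applying Cauchy--Schwarz gives
\[
\sum_n \mathbb E[\sigma_n(x_n)] \le \sqrt{T\,\mathbb E\big[\textstyle\sum_n \sigma_n^2(x_n)\big]} = \sqrt{T\,\Gamma_T},
\]
which is the claimed bound $\beta\sqrt{T\,\Gamma_T}$.
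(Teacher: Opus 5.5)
Steps 1--2 of your proposal are sound. Step 3, which carries the whole difficulty, is left unresolved, and neither route you sketch yields the theorem.

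\textbf{A counterexample to your Step 3.} Take $T=1$, two actions $y,z$, $r_{hid}\equiv 0$, $r_{obs}(y)\sim\mathcal N(-\lambda S,S^2)$ and $r_{obs}(z)\equiv 0$. Write $Z_x:=(r_{obs}(x)-\mu_n(x))/\sigma_n(x)$, and let $\varphi,\bar\Phi$ be the standard normal density and upper tail. Then $x^*_{obs}=x^*$, and:
\begin{itemize}
\item the regret is $S\varphi(\lambda)$;
\item $\mathbb E[\sigma_1(x_1)]=S\bar\Phi(\lambda)$;
\item $\Gamma_1=S^2\bar\Phi(\lambda)$.
\end{itemize}
Your very first inequality in Step 3 already loses too much. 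Here $\mathbb E[\max_x(r_{obs}(x)-\mu_1(x)-\gamma\sigma_1(x))^+]=S(\varphi(\gamma)-\gamma\bar\Phi(\gamma))$ does not depend on $\lambda$, whereas $\beta\sqrt{T\Gamma_T}=\beta S\sqrt{\bar\Phi(\lambda)}\to 0$ as $\lambda\to\infty$. The max weighs the widths of all $|r|$ actions, including ones that are almost never played, and $\Gamma_T$ does not see those widths. So for any fixed $\gamma$, your Russo--Van Roy fallback yields a bound of a different form, and the claim that its residual ``yields the additive $1$ in $\beta$'' is unfounded.

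\textbf{Your intermediate target is false.} A per-step bound $\beta\,\mathbb E[\sigma_n(x_n)]$ would require $S\varphi(\lambda)\le\beta S\bar\Phi(\lambda)$. This fails for large $\lambda$, since $\varphi(\lambda)/\bar\Phi(\lambda)\ge\lambda$ while $|r|=2$ is fixed. The underlying reason is that $\sigma_n(x^*_{obs})$ and $Z_{x^*_{obs}}$ are correlated through the argmax. Your equal-law argument is valid only for $\mathcal H_n$-measurable functions of the action alone, so it cannot swap $\sigma_n(x^*_{obs})$ for $\sigma_n(x_n)$ inside their product.

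\textbf{The missing idea.} Decouple the two factors by Cauchy--Schwarz before invoking equal laws. This forces the per-step bound into the form $\sqrt{\mathbb E[\sigma_n^2(x_n)]}$ rather than $\mathbb E[\sigma_n(x_n)]$. Concretely, the first term is $\mathbb E[\sigma_n(x^*_{obs})(Z_{x^*_{obs}}-\gamma)]\le\mathbb E[\sigma_n(x^*_{obs})\max_x(Z_x-\gamma)^+]$. Then combine:
\begin{itemize}
\item conditional Cauchy--Schwarz given $\mathcal H_n$;
\item the union-bound tail estimate $\mathbb E[\max_x((Z_x-\gamma)^+)^2\mid\mathcal H_n]\le 2|r|e^{-\gamma^2/2}=1$ for $\gamma=\sqrt{2\log(2|r|)}$;
\item the identity $\mathbb E[\sigma_n^2(x^*_{obs})\mid\mathcal H_n]=\mathbb E[\sigma_n^2(x_n)\mid\mathcal H_n]$;
\item Jensen's inequality.
\end{itemize}
Together these give per-step regret at most $(1+\sqrt{2\log(2|r|)})\sqrt{\mathbb E[\sigma_n^2(x_n)]}\le\beta\sqrt{\mathbb E[\sigma_n^2(x_n)]}$. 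Your Step 4 then applies to $\sum_n\sqrt{\mathbb E[\sigma_n^2(x_n)]}$.

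The UCB is in fact unnecessary. Your own decomposition (with $\gamma=0$) shows the per-step regret equals $\mathbb E[r_{obs}(x^*_{obs})-\mu_n(x^*_{obs})]$. Cauchy--Schwarz with $\mathbb E[\max_x Z_x^2\mid\mathcal H_n]\le 2\log(2|r|)+2$ bounds this directly. That is the content of the paper's Lemma~\ref{lem:uniform_bound}. The paper applies it after coupling $x_n$ as the argmax of an independent posterior copy $r'_{obs}+r'_{hid}$, and bounds $\mathbb E|r'_{obs}(x_n)-r_{obs}(x_n)|$ through $\mu_n$. The maximum over actions is taken on the copy, giving $\sqrt{2\log(2|r|)+2}$. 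The true $r_{obs}$ is conditionally independent of $x_n$, giving the $1$.
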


\begin{proof}
Without loss of generality $\mathbb E[r_{hid}] = 0$, since nonzero expectations can be absorbed into a shifted $r_{obs}$. Thus 
\begin{equation*}
    \mathbb E[r(x^*_{obs}) - r(x_n) \mid \mathcal H_n] = \mathbb E[r_{obs}(x^*_{obs}) - r_{obs}(x_n) \mid \mathcal H_n].
\end{equation*}
Conditioned on $\mathcal H_n$, define $(r_{obs}^\prime, r_{hid}^\prime) \overset{d}{:=} (r_{obs}, r_{hid})$ as an independent copy of the true reward tuple and $r_{hid}^{\prime\prime}\mid r_{obs} \overset{d}{:=}r_{hid} \mid r_{obs}$ as an independent copy of $r_{hid}$ given full knowledge of $r_{obs}$. Then it holds that $x^*_{obs} := \arg\max_x r_{obs}(x) + r_{hid}^{\prime\prime}(x)$ and $x_n := \arg\max_x r_{obs}^\prime(x) + r_{hid}^\prime(x)$, and therefore $\mathbb E[r_{obs}(x_{obs}^*) \mid \mathcal H_n] = \mathbb E[r_{obs}^\prime(x_n) \mid \mathcal H_n]$. As a result,
\begin{equation*}
    \mathbb E[r_{obs}(x^*_{obs}) - r_{obs}(x_n) \mid \mathcal H_n] = \mathbb E[r_{obs}^\prime(x_n)-r_{obs}(x_n) \mid \mathcal H_n].
\end{equation*}
Next, we take expectations with respect to $\mathcal H_n$ and apply Lemma~\ref{lem:uniform_bound} and Jensen's inequality to obtain
\begin{align*}
    \mathbb E[r(x^*_{obs}) - r(x_n)] = \mathbb E[r_{obs}^\prime(x_n)-r_{obs}(x_n)] & = \mathbb E[\mathbb E[r_{obs}^\prime(x_n)-r_{obs}(x_n) | \mathcal H_n]]\\ & \leq \mathbb E[\beta\sqrt{\mathbb E[\sigma_{n}^2(x_n) | \mathcal H_n]}] \leq \beta \sqrt{\mathbb E[\sigma_{n}^2(x_n)]}.
\end{align*}
Finally, we apply Cauchy-Schwarz on the full regret sequence
\begin{equation*}
    \textstyle \mathbb E[\sum_{n=1}^T r(x^*_{obs}) - r(x_n)]
    \leq \beta \sum_{n=1}^T \sqrt{\mathbb E[\sigma_{n}^2(x_n)]}
     \leq \textstyle \beta \sqrt{T \sum_{n=1}^T \mathbb E[\sigma_{n}^2(x_n)]}.
\end{equation*}  
\end{proof}

\subsection{Lemmas}

\begin{lemma}[]\label{lem:uniform_bound}
    Let $r, r^\prime$ be i.i.d., element-wise $\sigma_x$-subgaussian, and let $x = f(r) \in \{1, \ldots, |r|\}$ for any function $f$. Then 
    \begin{align*}
        \mathbb E[|r_x-\mu_x|] & \leq \sqrt{\mathbb E[\sigma_{x}^2] \cdot (2 \log (2|r|)+2)} \text{ and }\\
        \mathbb E[|r_x-r^\prime_x|] & \leq \beta \cdot \sqrt{\mathbb E[\sigma_{x}^2]}.
    \end{align*}
    for $\beta := 1 + \sqrt{2 \log(2\cdot|r|) + 2}$.
\end{lemma}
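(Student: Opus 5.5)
The plan is to reduce both claims to a single maximal inequality for normalized subgaussian variables, using Cauchy--Schwarz to separate the random scale $\sigma_x$ from the random deviation. I read ``element-wise $\sigma_x$-subgaussian'' as: for each fixed index $i \in \{1,\ldots,|r|\}$, the centered variable $r_i - \mu_i$ is $\sigma_i$-subgaussian. The only source of randomness in $\sigma_x$ is then the selected index $x = f(r)$.

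For the first inequality, I set $Y_i := (r_i - \mu_i)/\sigma_i$, so that each $Y_i$ is $1$-subgaussian. I then write $|r_x - \mu_x| = \sigma_x |Y_x|$. Since $x$ depends on $r$, I cannot treat $Y_x$ as a fixed coordinate. Instead I bound $|Y_x| \le \max_i |Y_i|$ and apply Cauchy--Schwarz:
\begin{equation*}
\mathbb E[|r_x - \mu_x|] \le \sqrt{\mathbb E[\sigma_x^2]} \cdot \sqrt{\mathbb E[\max_i Y_i^2]}.
\end{equation*}
It remains to show $\mathbb E[\max_i Y_i^2] \le 2\log(2|r|) + 2$. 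The subgaussian tail gives $\mathbb P[Y_i^2 > u] \le 2e^{-u/2}$, and a union bound gives $\mathbb P[\max_i Y_i^2 > u] \le 2|r| e^{-u/2}$. Integrating the tail with a cutoff $\lambda$ yields
\begin{equation*}
\mathbb E[\max_i Y_i^2] \le \lambda + \int_\lambda^\infty 2|r| e^{-u/2}\,du = \lambda + 4|r|e^{-\lambda/2}.
\end{equation*}
Choosing $\lambda = 2\log(2|r|)$ makes the second term equal to $2$, which gives exactly $2\log(2|r|)+2$.

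For the second inequality, I use the triangle inequality $|r_x - r'_x| \le |r_x - \mu_x| + |r'_x - \mu_x|$. The first term is handled by the first part of the lemma. For the second term, $x = f(r)$ is independent of $r'$, so I condition on $x$:
\begin{equation*}
\mathbb E[|r'_x - \mu_x| \mid x] \le \sqrt{\mathrm{Var}(r'_x \mid x)} \le \sigma_x.
\end{equation*}
This uses the fact that a $\sigma$-subgaussian variable has variance at most $\sigma^2$. Taking expectations and applying Jensen gives $\mathbb E[\sigma_x] \le \sqrt{\mathbb E[\sigma_x^2]}$. Adding the two terms produces the constant $\beta = 1 + \sqrt{2\log(2|r|)+2}$.

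The main obstacle is the dependence of the selected index $x$ on $r$. It is what forces the passage through $\max_i$ and the union bound, and it is where the $\log|r|$ factor enters. The key step is the Cauchy--Schwarz split, which lets $\mathbb E[\sigma_x^2]$ remain a weighted average rather than becoming $\max_i \sigma_i^2$. The $r'$ term avoids this cost entirely because $r'$ is independent of the selection.
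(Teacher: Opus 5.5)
Your proposal is correct and follows the paper's proof essentially step for step: normalize each coordinate, bound the selected one by the maximum, split off $\mathbb E[\sigma_x^2]$ with Cauchy--Schwarz, and control $\mathbb E[\max_i Y_i^2]$ by a union-bound tail integral. Your explicit cutoff $\lambda = 2\log(2|r|)$ is exactly the crossover point in the paper's $\int_0^\infty \min(1, 2|r|e^{-t/2})\,dt$. The only cosmetic difference is the $r'$ term: you use the conditional variance bound plus Jensen, while the paper applies a second Cauchy--Schwarz with $\mathbb E[(Z'_x)^2]\le 1$; both rely on $r' \perp x$ and give the same constant $\beta$.
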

\begin{proof}
    Define $Z_z = (r_z - \mu_z) / \sigma_z$. Since $r_z - \mu_z$ is $\sigma_z$-subgaussian, $Z_z$ is $1$-subgaussian. Cauchy-Schwarz then gives, $\mathbb E[|r_x - \mu_x|] = \mathbb E [\sigma_{x} | Z_x|] \leq \sqrt{\mathbb E[\sigma_{x}^2] \mathbb E[Z_x^2]}$. Now, $Z_x^2 \leq 
    \max_{1 \leq j \leq |r|} Z_j^2$, so the tail integral formula for expectations with union bound gives $\textstyle \mathbb E[Z_x^2] \leq \mathbb E[\max_{1 \leq j \leq |r|} Z_j^2] = \int_0^\infty\! \mathbb P[\max_{1\leq j \leq |r|} Z_j^2 \geq t] dt \leq \int_0^\infty \!\min(1, 2|r| e^{-t/2}) dt =  2 \log (2|r|)+2$. The first result then follows immediately. For the second result, apply the triangle inequality $\mathbb E[|r_x-r^\prime_x|] \leq \mathbb E[|r_x-\mu_x|] + \mathbb E[|r_x^\prime-\mu_x|]$, bound the first term as above, and the second by $\sqrt{\mathbb E[\sigma_{x}^2]}$, using that $r^\prime \perp x = f(r)$ and thus $\mathbb E[(Z_x^\prime)^2] \leq 1$.
\end{proof}

\begin{lemma}\label{lem:strict_expected_subgaussian_variance}
    Let $X$ and $Y$ be correlated random variables where $X$ and $X\mid Y$ are strictly subgaussian. Let $\sigma^2_X$ denote the optimal subgaussian parameter of $X$, and $\sigma^2_{X|Y}$ denote the optimal conditional subgaussian parameter of $X$ given $Y$. Then,
    \begin{equation*}
        \sigma^2_X > \mathbb{E}_Y[\sigma^2_{X|Y}].
    \end{equation*}
\end{lemma}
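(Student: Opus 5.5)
We will prove Lemma~\ref{lem:strict_expected_subgaussian_variance} through the law of total variance, following the proof sketch. The key fact about strict subgaussianity is that the optimal subgaussian parameter equals the variance. For a strictly subgaussian $X$ we therefore have $\sigma^2_X = \mathrm{Var}(X)$. Likewise, since $X\mid Y$ is strictly subgaussian for (almost) every value of $Y$, we have $\sigma^2_{X|Y} = \mathrm{Var}(X\mid Y)$ pointwise in $Y$.

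The first step is to recall why the optimal parameter can never be smaller than the variance, and to note that strictness is precisely the case of equality. Expanding $\log\mathbb E[e^{\lambda(X-\mathbb E X)}] = \tfrac{\lambda^2}{2}\mathrm{Var}(X) + O(\lambda^3)$ near $\lambda=0$ and comparing it with $\tfrac{\lambda^2}{2}\sigma^2$ gives $\sigma^2\ge \mathrm{Var}(X)$. Strict subgaussianity is the statement that this inequality is an equality. We apply the same reasoning to the conditional law of $X$ given each $Y=y$.

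The second step is to apply the law of total variance:
\begin{equation*}
\sigma^2_X = \mathrm{Var}(X) = \mathbb E_Y[\mathrm{Var}(X\mid Y)] + \mathrm{Var}_Y(\mathbb E[X\mid Y]) = \mathbb E_Y[\sigma^2_{X|Y}] + \mathrm{Var}_Y(\mathbb E[X\mid Y]).
\end{equation*}
This holds provided the conditional variances are measurable and integrable. Integrability is automatic, since $\mathbb E_Y[\mathrm{Var}(X\mid Y)] \le \mathrm{Var}(X) < \infty$.

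The third step, which we expect to be the main obstacle, is showing that $\mathrm{Var}_Y(\mathbb E[X\mid Y])>0$, i.e.\ that $\mathbb E[X\mid Y]$ is not almost surely constant. Here we use the correlation hypothesis. Suppose $\mathbb E[X\mid Y]=\mathbb E[X]$ almost surely. Then
\begin{equation*}
\mathrm{Cov}(X,Y)=\mathbb E[Y(\mathbb E[X\mid Y]-\mathbb E X)]=0,
\end{equation*}
which contradicts the assumption that $X$ and $Y$ are correlated; this needs $Y$ to have a finite second moment so that the covariance is defined. We will therefore interpret ``correlated'' as nonzero covariance, with $Y$ square-integrable. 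Under that reading the variance of the conditional mean is strictly positive, and $\sigma^2_X > \mathbb E_Y[\sigma^2_{X|Y}]$ follows.
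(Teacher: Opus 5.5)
Your proposal is correct and follows essentially the same route as the paper: strict subgaussianity identifies the optimal (conditional) subgaussian parameters with the (conditional) variances, the law of total variance splits $\mathrm{Var}(X)$, and the identity $\mathrm{Cov}(X,Y)=\mathbb E[Y(\mathbb E[X\mid Y]-\mathbb E X)]$ rules out a constant conditional mean. Your added remarks, that $Y$ must be square-integrable and that the optimal variance proxy always dominates the variance, make explicit points the paper leaves tacit.
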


\begin{proof}
    By strict subgaussianity, the optimal subgaussian parameters are equal to the variance of the distributions. Therefore, the marginal parameter is $\sigma^2_X = \mathrm{Var}(X)$, and the conditional parameter satisfies $\sigma^2_{X|Y} = \mathrm{Var}(X \mid Y)$. By the Law of Total Variance, the marginal variance decomposes as:
    \begin{equation*}
        \mathrm{Var}(X) = \mathbb{E}_Y[\mathrm{Var}(X \mid Y)] + \mathrm{Var}(\mathbb{E}[X \mid Y]).
    \end{equation*}
    Next, note that the conditional mean $\mathbb{E}[X \vert Y]$ cannot be almost surely constant, otherwise $\mathbb{E}[X \vert Y] = \mathbb E[\mathbb{E}[X \vert Y]] = \mathbb E[X]$ a.s., and thus $\mathrm{Cov}(X,Y) := \mathbb E[XY] - \mathbb E[X] \mathbb E[Y] = \mathbb E[Y (\mathbb E[X \vert Y] - \mathbb E[X])] = 0$, contradicting the correlation assumption. As a result, we can apply the strict form of Jensen's inequality to infer that the variance of the conditional mean is strictly positive: $\mathrm{Var}(\mathbb E[X \vert Y]) = \mathbb E\left[(\mathbb E[X \vert Y])^2\right] - (\mathbb E[\mathbb E[X \vert Y]])^2 > 0$. Substituting the subgaussian parameters into the variance decomposition and applying this strict positivity yields the final bound:
    \begin{equation*}
        \sigma^2_X = \mathbb{E}_Y[\sigma^2_{X|Y}] + \mathrm{Var}(\mathbb{E}[X \mid Y]) > \mathbb{E}_Y[\sigma^2_{X|Y}].
    \end{equation*}
\end{proof}

\begin{lemma}\label{lem:information_gain_for_gaussians_noiseless}
    Let $r \sim \mathcal N(\mu, \Sigma)$ be a multivariate Gaussian that is consecutively evaluated at locations $(x_t)_{t=1}^T$ with noiseless observations $y_t = r(x_t)$. Then, the aggregated predictive variances $\sigma_t^2(x) := \mathrm{Var}[r(x) \mid y_1, \ldots, y_{t-1}]$ at the evaluation locations satisfy
    \begin{equation*}
        \sum_{t=1}^T \sigma_t^2(x_t) \leq \mathrm{tr}(\Sigma).
    \end{equation*}
\end{lemma}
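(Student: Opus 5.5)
The plan is to track the trace of the posterior covariance matrix and show that each predictive variance $\sigma_t^2(x_t)$ is at most the drop in this trace caused by the $t$-th observation. Summing over $t$ then telescopes.

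Let $\Sigma_t$ denote the posterior covariance of $r$ given $y_1,\ldots,y_{t-1}$, so that $\Sigma_1=\Sigma$ and $\sigma_t^2(x)=(\Sigma_t)_{xx}$. For a noiseless Gaussian observation of coordinate $x_t$, Gaussian conditioning gives the rank-one update
\begin{equation*}
\Sigma_{t+1}=\Sigma_t-\frac{\Sigma_t e_{x_t} e_{x_t}^\top \Sigma_t}{e_{x_t}^\top\Sigma_t e_{x_t}}
\end{equation*}
whenever $\sigma_t^2(x_t)>0$. If $\sigma_t^2(x_t)=0$, the observation is deterministic given the history, so $\Sigma_{t+1}=\Sigma_t$ and the contribution $\sigma_t^2(x_t)=0$ is trivially bounded by the zero trace decrease. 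In either case the conditional covariance does not depend on the observed values, so every $\Sigma_t$ is deterministic, which keeps the argument clean. If $x_t$ is chosen adaptively, the same computation holds conditionally on the history at each step.

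Taking traces of the update,
\begin{equation*}
\mathrm{tr}(\Sigma_t)-\mathrm{tr}(\Sigma_{t+1})=\frac{\|\Sigma_t e_{x_t}\|^2}{(\Sigma_t)_{x_tx_t}}\ \ge\ \frac{(\Sigma_t)_{x_tx_t}^2}{(\Sigma_t)_{x_tx_t}}=\sigma_t^2(x_t).
\end{equation*}
The inequality holds because the squared norm of the column $\Sigma_t e_{x_t}$ is at least the square of its diagonal entry.

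Summing from $t=1$ to $T$ gives
\begin{equation*}
\sum_t\sigma_t^2(x_t)\le\mathrm{tr}(\Sigma_1)-\mathrm{tr}(\Sigma_{T+1})\le\mathrm{tr}(\Sigma),
\end{equation*}
where the last step uses that $\Sigma_{T+1}$ is positive semidefinite, so its trace is nonnegative.

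The main obstacle is justifying the update formula when $\Sigma_t$ is singular. One option is the Schur complement with a generalized inverse. Alternatively, one can add noise $\eta>0$ to each observation, run the same argument with denominator $\sigma_t^2(x_t)+\eta$, and let $\eta\to0$. In that version the trace drop is $\|\Sigma_te_{x_t}\|^2/(\sigma_t^2+\eta)\ge \sigma_t^4/(\sigma_t^2+\eta)$, which converges to $\sigma_t^2$ as $\eta\to0$.
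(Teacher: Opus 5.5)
Your proposal is correct and follows the paper's proof essentially step for step: the rank-one Gaussian conditioning update of $\Sigma_t$, the bound $\|\Sigma_t e_{x_t}\|_2^2 \geq \sigma_t(x_t)^4$ from the diagonal entry of the column, the separate trivial case $\sigma_t^2(x_t)=0$, and telescoping the trace down to $\mathrm{tr}(\Sigma)$. The singularity of $\Sigma_t$ that you flag as the main obstacle is not actually one, since conditioning on a scalar $y_t = e_{x_t}^\top r$ with $e_{x_t}^\top \Sigma_t e_{x_t} > 0$ gives exactly that update even for degenerate Gaussians, so neither a generalized inverse nor the $\eta \to 0$ regularization is needed.
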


\begin{proof}
    First, note that the expression on the left hand side is well-defined, because $\sigma_t(x)$ only depends on the observation locations $x_1, \ldots, x_{t-1}$, but not on the observed value. Let $\Sigma_t$ denote the posterior covariance of $r$ given $y_1, \ldots, y_{t-1}$, so that
    \[
        \sigma_t^2(x_t) = e_{x_t}^\top \Sigma_t e_{x_t},
    \]
    where $e_{x_t}$ is the standard basis vector corresponding to coordinate $x_t$. If $\sigma_t^2(x_t) = 0$, then observing $y_t = r(x_t)$ adds no new information and one may take $\Sigma_{t+1} = \Sigma_t$, so the $t$-th summand vanishes. Otherwise, conditioning a Gaussian on the noiseless linear observation $y_t = r(x_t)$ yields the covariance update
    \[
        \Sigma_{t+1}
        =
        \Sigma_t
        -
        \frac{\Sigma_t e_{x_t} e_{x_t}^\top \Sigma_t}{e_{x_t}^\top \Sigma_t e_{x_t}}.
    \]
    Taking traces gives
    \begin{align*}
        \mathrm{tr}(\Sigma_t) - \mathrm{tr}(\Sigma_{t+1})
        &=
        \frac{\mathrm{tr}(\Sigma_t e_{x_t} e_{x_t}^\top \Sigma_t)}{e_{x_t}^\top \Sigma_t e_{x_t}}\\
        &=
        \frac{\|\Sigma_t e_{x_t}\|_2^2}{\sigma_t^2(x_t)}.
    \end{align*}
    Since the $x_t$-th component of $\Sigma_t e_{x_t}$ equals $e_{x_t}^\top \Sigma_t e_{x_t} = \sigma_t^2(x_t)$, it follows that
    \[
        \|\Sigma_t e_{x_t}\|_2^2 \geq \sigma_t(x_t)^4.
    \]
    Therefore,
    \[
        \mathrm{tr}(\Sigma_t) - \mathrm{tr}(\Sigma_{t+1}) \geq \sigma_t^2(x_t).
    \]
    Summing over $t=1,\ldots,T$ and telescoping yields the statement
    \[
        \sum_{t=1}^T \sigma_t^2(x_t)
        \leq
        \sum_{t=1}^T \bigl(\mathrm{tr}(\Sigma_t) - \mathrm{tr}(\Sigma_{t+1})\bigr)
        =
        \mathrm{tr}(\Sigma_1) - \mathrm{tr}(\Sigma_{T+1})
        \leq
        \mathrm{tr}(\Sigma_1)
        =
        \mathrm{tr}(\Sigma).
    \]
\end{proof}

%% file: checklist.tex
\newpage
\section*{NeurIPS Paper Checklist}

\begin{enumerate}

\item {\bf Claims}
    \item[] Question: Do the main claims made in the abstract and introduction accurately reflect the paper's contributions and scope?
    \item[] Answer: \answerYes{} % Replace by \answerYes{}, \answerNo{}, or \answerNA{}.
    \item[] Justification: The abstract and introduction accurately reflect the paper's scope, emphasizing the theoretical analysis of post-generation selection and in-context Thompson sampling.
    \item[] Guidelines:
    \begin{itemize}
        \item The answer \answerNA{} means that the abstract and introduction do not include the claims made in the paper.
        \item The abstract and/or introduction should clearly state the claims made, including the contributions made in the paper and important assumptions and limitations. A \answerNo{} or \answerNA{} answer to this question will not be perceived well by the reviewers. 
        \item The claims made should match theoretical and experimental results, and reflect how much the results can be expected to generalize to other settings. 
        \item It is fine to include aspirational goals as motivation as long as it is clear that these goals are not attained by the paper. 
    \end{itemize}

\item {\bf Limitations}
    \item[] Question: Does the paper discuss the limitations of the work performed by the authors?
    \item[] Answer: \answerYes{} % Replace by \answerYes{}, \answerNo{}, or \answerNA{}.
    \item[] Justification: The paper includes a dedicated "Limitations" subsection in the conclusion that explicitly notes the focus on competitive coding tasks over multi-file editing workflows.
    \item[] Guidelines:
    \begin{itemize}
        \item The answer \answerNA{} means that the paper has no limitation while the answer \answerNo{} means that the paper has limitations, but those are not discussed in the paper. 
        \item The authors are encouraged to create a separate ``Limitations'' section in their paper.
        \item The paper should point out any strong assumptions and how robust the results are to violations of these assumptions (e.g., independence assumptions, noiseless settings, model well-specification, asymptotic approximations only holding locally). The authors should reflect on how these assumptions might be violated in practice and what the implications would be.
        \item The authors should reflect on the scope of the claims made, e.g., if the approach was only tested on a few datasets or with a few runs. In general, empirical results often depend on implicit assumptions, which should be articulated.
        \item The authors should reflect on the factors that influence the performance of the approach. For example, a facial recognition algorithm may perform poorly when image resolution is low or images are taken in low lighting. Or a speech-to-text system might not be used reliably to provide closed captions for online lectures because it fails to handle technical jargon.
        \item The authors should discuss the computational efficiency of the proposed algorithms and how they scale with dataset size.
        \item If applicable, the authors should discuss possible limitations of their approach to address problems of privacy and fairness.
        \item While the authors might fear that complete honesty about limitations might be used by reviewers as grounds for rejection, a worse outcome might be that reviewers discover limitations that aren't acknowledged in the paper. The authors should use their best judgment and recognize that individual actions in favor of transparency play an important role in developing norms that preserve the integrity of the community. Reviewers will be specifically instructed to not penalize honesty concerning limitations.
    \end{itemize}

\item {\bf Theory assumptions and proofs}
    \item[] Question: For each theoretical result, does the paper provide the full set of assumptions and a complete (and correct) proof?
    \item[] Answer: \answerYes{} % Replace by \answerYes{}, \answerNo{}, or \answerNA{}.
    \item[] Justification: Full sets of assumptions and complete proofs for all theoretical claims (e.g., Propositions~\ref{prop:properties_s_similarity},~\ref{prop:equivalence_classes},~\ref{prop:additivity_probability_of_correctness} and Theorems~\ref{thm:measure_smoothing},~\ref{thm:consistency_and_bias},~\ref{thm:snr_dominance},~\ref{thm:reward_bound}) are provided in Appendix~\ref{sec:proofs}
    \item[] Guidelines:
    \begin{itemize}
        \item The answer \answerNA{} means that the paper does not include theoretical results. 
        \item All the theorems, formulas, and proofs in the paper should be numbered and cross-referenced.
        \item All assumptions should be clearly stated or referenced in the statement of any theorems.
        \item The proofs can either appear in the main paper or the supplemental material, but if they appear in the supplemental material, the authors are encouraged to provide a short proof sketch to provide intuition. 
        \item Inversely, any informal proof provided in the core of the paper should be complemented by formal proofs provided in appendix or supplemental material.
        \item Theorems and Lemmas that the proof relies upon should be properly referenced. 
    \end{itemize}

    \item {\bf Experimental result reproducibility}
    \item[] Question: Does the paper fully disclose all the information needed to reproduce the main experimental results of the paper to the extent that it affects the main claims and/or conclusions of the paper (regardless of whether the code and data are provided or not)?
    \item[] Answer: \answerYes{} % Replace by \answerYes{}, \answerNo{}, or \answerNA{}.
    \item[] Justification: The paper fully discloses models, datasets, system prompts, and feedback compression techniques in Appendix~\ref{sec:experimental_details}. Moreover, the entire code base is made available.
    \item[] Guidelines:
    \begin{itemize}
        \item The answer \answerNA{} means that the paper does not include experiments.
        \item If the paper includes experiments, a \answerNo{} answer to this question will not be perceived well by the reviewers: Making the paper reproducible is important, regardless of whether the code and data are provided or not.
        \item If the contribution is a dataset and\slash or model, the authors should describe the steps taken to make their results reproducible or verifiable. 
        \item Depending on the contribution, reproducibility can be accomplished in various ways. For example, if the contribution is a novel architecture, describing the architecture fully might suffice, or if the contribution is a specific model and empirical evaluation, it may be necessary to either make it possible for others to replicate the model with the same dataset, or provide access to the model. In general. releasing code and data is often one good way to accomplish this, but reproducibility can also be provided via detailed instructions for how to replicate the results, access to a hosted model (e.g., in the case of a large language model), releasing of a model checkpoint, or other means that are appropriate to the research performed.
        \item While NeurIPS does not require releasing code, the conference does require all submissions to provide some reasonable avenue for reproducibility, which may depend on the nature of the contribution. For example
        \begin{enumerate}
            \item If the contribution is primarily a new algorithm, the paper should make it clear how to reproduce that algorithm.
            \item If the contribution is primarily a new model architecture, the paper should describe the architecture clearly and fully.
            \item If the contribution is a new model (e.g., a large language model), then there should either be a way to access this model for reproducing the results or a way to reproduce the model (e.g., with an open-source dataset or instructions for how to construct the dataset).
            \item We recognize that reproducibility may be tricky in some cases, in which case authors are welcome to describe the particular way they provide for reproducibility. In the case of closed-source models, it may be that access to the model is limited in some way (e.g., to registered users), but it should be possible for other researchers to have some path to reproducing or verifying the results.
        \end{enumerate}
    \end{itemize}

\item {\bf Open access to data and code}
    \item[] Question: Does the paper provide open access to the data and code, with sufficient instructions to faithfully reproduce the main experimental results, as described in supplemental material?
    \item[] Answer: \answerYes{} % Replace by \answerYes{}, \answerNo{}, or \answerNA{}.
    \item[] Justification: The code is provided in the supplementary materials and will be released publicly under a permissive license upon acceptance.
    \item[] Guidelines:
    \begin{itemize}
        \item The answer \answerNA{} means that paper does not include experiments requiring code.
        \item Please see the NeurIPS code and data submission guidelines (\url{https://neurips.cc/public/guides/CodeSubmissionPolicy}) for more details.
        \item While we encourage the release of code and data, we understand that this might not be possible, so \answerNo{} is an acceptable answer. Papers cannot be rejected simply for not including code, unless this is central to the contribution (e.g., for a new open-source benchmark).
        \item The instructions should contain the exact command and environment needed to run to reproduce the results. See the NeurIPS code and data submission guidelines (\url{https://neurips.cc/public/guides/CodeSubmissionPolicy}) for more details.
        \item The authors should provide instructions on data access and preparation, including how to access the raw data, preprocessed data, intermediate data, and generated data, etc.
        \item The authors should provide scripts to reproduce all experimental results for the new proposed method and baselines. If only a subset of experiments are reproducible, they should state which ones are omitted from the script and why.
        \item At submission time, to preserve anonymity, the authors should release anonymized versions (if applicable).
        \item Providing as much information as possible in supplemental material (appended to the paper) is recommended, but including URLs to data and code is permitted.
    \end{itemize}

\item {\bf Experimental setting/details}
    \item[] Question: Does the paper specify all the training and test details (e.g., data splits, hyperparameters, how they were chosen, type of optimizer) necessary to understand the results?
    \item[] Answer: \answerYes{} % Replace by \answerYes{}, \answerNo{}, or \answerNA{}.
    \item[] Justification: Hyperparameters (e.g., temperature 1.0, top-p 0.95) and other evaluation settings are clearly specified in Appendix~\ref{sec:extensive_model_description} and follow standard practices. Moreover, the code base has configuration files that do not leave any parameters unspecified.
    \item[] Guidelines:
    \begin{itemize}
        \item The answer \answerNA{} means that the paper does not include experiments.
        \item The experimental setting should be presented in the core of the paper to a level of detail that is necessary to appreciate the results and make sense of them.
        \item The full details can be provided either with the code, in appendix, or as supplemental material.
    \end{itemize}

\item {\bf Experiment statistical significance}
    \item[] Question: Does the paper report error bars suitably and correctly defined or other appropriate information about the statistical significance of the experiments?
    \item[] Answer: \answerYes{} % Replace by \answerYes{}, \answerNo{}, or \answerNA{}.
    \item[] Justification: The experimental tables and figures correctly report the mean and standard error over 5 random seeds. The latter is the correct statistical measure to determine statistical significance of our results.
    \item[] Guidelines:
    \begin{itemize}
        \item The answer \answerNA{} means that the paper does not include experiments.
        \item The authors should answer \answerYes{} if the results are accompanied by error bars, confidence intervals, or statistical significance tests, at least for the experiments that support the main claims of the paper.
        \item The factors of variability that the error bars are capturing should be clearly stated (for example, train/test split, initialization, random drawing of some parameter, or overall run with given experimental conditions).
        \item The method for calculating the error bars should be explained (closed form formula, call to a library function, bootstrap, etc.)
        \item The assumptions made should be given (e.g., Normally distributed errors).
        \item It should be clear whether the error bar is the standard deviation or the standard error of the mean.
        \item It is OK to report 1-sigma error bars, but one should state it. The authors should preferably report a 2-sigma error bar than state that they have a 96\% CI, if the hypothesis of Normality of errors is not verified.
        \item For asymmetric distributions, the authors should be careful not to show in tables or figures symmetric error bars that would yield results that are out of range (e.g., negative error rates).
        \item If error bars are reported in tables or plots, the authors should explain in the text how they were calculated and reference the corresponding figures or tables in the text.
    \end{itemize}

\item {\bf Experiments compute resources}
    \item[] Question: For each experiment, does the paper provide sufficient information on the computer resources (type of compute workers, memory, time of execution) needed to reproduce the experiments?
    \item[] Answer: \answerYes{} % Replace by \answerYes{}, \answerNo{}, or \answerNA{}.
    \item[] Justification: The compute environment (AMD EPYC 7763 64-Core CPU, 8 NVIDIA A100 GPUs) and specific time requirements are detailed in Section~\ref{sec:runtimes} and Appendix~\ref{sec:experimental_details}.
    \item[] Guidelines:
    \begin{itemize}
        \item The answer \answerNA{} means that the paper does not include experiments.
        \item The paper should indicate the type of compute workers CPU or GPU, internal cluster, or cloud provider, including relevant memory and storage.
        \item The paper should provide the amount of compute required for each of the individual experimental runs as well as estimate the total compute. 
        \item The paper should disclose whether the full research project required more compute than the experiments reported in the paper (e.g., preliminary or failed experiments that didn't make it into the paper). 
    \end{itemize}
    
\item {\bf Code of ethics}
    \item[] Question: Does the research conducted in the paper conform, in every respect, with the NeurIPS Code of Ethics \url{https://neurips.cc/public/EthicsGuidelines}?
    \item[] Answer: \answerYes{} % Replace by \answerYes{}, \answerNo{}, or \answerNA{}.
    \item[] Justification: The research utilizes standard benchmarking datasets and open-weight models, conforming to standard ethical guidelines for machine learning research.
    \item[] Guidelines:
    \begin{itemize}
        \item The answer \answerNA{} means that the authors have not reviewed the NeurIPS Code of Ethics.
        \item If the authors answer \answerNo, they should explain the special circumstances that require a deviation from the Code of Ethics.
        \item The authors should make sure to preserve anonymity (e.g., if there is a special consideration due to laws or regulations in their jurisdiction).
    \end{itemize}

\item {\bf Broader impacts}
    \item[] Question: Does the paper discuss both potential positive societal impacts and negative societal impacts of the work performed?
    \item[] Answer: \answerYes{} % Replace by \answerYes{}, \answerNo{}, or \answerNA{}.
    \item[] Justification: There is a dedicated impact statement at the beginning of the Appendix.
    \item[] Guidelines:
    \begin{itemize}
        \item The answer \answerNA{} means that there is no societal impact of the work performed.
        \item If the authors answer \answerNA{} or \answerNo, they should explain why their work has no societal impact or why the paper does not address societal impact.
        \item Examples of negative societal impacts include potential malicious or unintended uses (e.g., disinformation, generating fake profiles, surveillance), fairness considerations (e.g., deployment of technologies that could make decisions that unfairly impact specific groups), privacy considerations, and security considerations.
        \item The conference expects that many papers will be foundational research and not tied to particular applications, let alone deployments. However, if there is a direct path to any negative applications, the authors should point it out. For example, it is legitimate to point out that an improvement in the quality of generative models could be used to generate Deepfakes for disinformation. On the other hand, it is not needed to point out that a generic algorithm for optimizing neural networks could enable people to train models that generate Deepfakes faster.
        \item The authors should consider possible harms that could arise when the technology is being used as intended and functioning correctly, harms that could arise when the technology is being used as intended but gives incorrect results, and harms following from (intentional or unintentional) misuse of the technology.
        \item If there are negative societal impacts, the authors could also discuss possible mitigation strategies (e.g., gated release of models, providing defenses in addition to attacks, mechanisms for monitoring misuse, mechanisms to monitor how a system learns from feedback over time, improving the efficiency and accessibility of ML).
    \end{itemize}
    
\item {\bf Safeguards}
    \item[] Question: Does the paper describe safeguards that have been put in place for responsible release of data or models that have a high risk for misuse (e.g., pre-trained language models, image generators, or scraped datasets)?
    \item[] Answer: \answerNA{} % Replace by \answerYes{}, \answerNo{}, or \answerNA{}.
    \item[] Justification: The paper does not release high-risk models or scraped datasets that require specific safeguards.
    \item[] Guidelines:
    \begin{itemize}
        \item The answer \answerNA{} means that the paper poses no such risks.
        \item Released models that have a high risk for misuse or dual-use should be released with necessary safeguards to allow for controlled use of the model, for example by requiring that users adhere to usage guidelines or restrictions to access the model or implementing safety filters. 
        \item Datasets that have been scraped from the Internet could pose safety risks. The authors should describe how they avoided releasing unsafe images.
        \item We recognize that providing effective safeguards is challenging, and many papers do not require this, but we encourage authors to take this into account and make a best faith effort.
    \end{itemize}

\item {\bf Licenses for existing assets}
    \item[] Question: Are the creators or original owners of assets (e.g., code, data, models), used in the paper, properly credited and are the license and terms of use explicitly mentioned and properly respected?
    \item[] Answer: \answerYes{} % Replace by \answerYes{}, \answerNo{}, or \answerNA{}.
    \item[] Justification: All creators and owners of assets are properly credited. The license and terms of use are respected and listed in Section~\ref{sec:datasets} and Section~\ref{sec:extensive_model_description} in the Appendix.
    \item[] Guidelines:
    \begin{itemize}
        \item The answer \answerNA{} means that the paper does not use existing assets.
        \item The authors should cite the original paper that produced the code package or dataset.
        \item The authors should state which version of the asset is used and, if possible, include a URL.
        \item The name of the license (e.g., CC-BY 4.0) should be included for each asset.
        \item For scraped data from a particular source (e.g., website), the copyright and terms of service of that source should be provided.
        \item If assets are released, the license, copyright information, and terms of use in the package should be provided. For popular datasets, \url{paperswithcode.com/datasets} has curated licenses for some datasets. Their licensing guide can help determine the license of a dataset.
        \item For existing datasets that are re-packaged, both the original license and the license of the derived asset (if it has changed) should be provided.
        \item If this information is not available online, the authors are encouraged to reach out to the asset's creators.
    \end{itemize}

\item {\bf New assets}
    \item[] Question: Are new assets introduced in the paper well documented and is the documentation provided alongside the assets?
    \item[] Answer: \answerYes{} % Replace by \answerYes{}, \answerNo{}, or \answerNA{}.
    \item[] Justification: The paper introduces \textsc{QiskitHumanEvalSimX}, detailing the specific prompt template used to generate it via Gemini 3 Pro. The dataset is provided in the code as part of the supplementary materials and will be released upon acceptance under a permissive license.
    \item[] Guidelines:
    \begin{itemize}
        \item The answer \answerNA{} means that the paper does not release new assets.
        \item Researchers should communicate the details of the dataset\slash code\slash model as part of their submissions via structured templates. This includes details about training, license, limitations, etc. 
        \item The paper should discuss whether and how consent was obtained from people whose asset is used.
        \item At submission time, remember to anonymize your assets (if applicable). You can either create an anonymized URL or include an anonymized zip file.
    \end{itemize}

\item {\bf Crowdsourcing and research with human subjects}
    \item[] Question: For crowdsourcing experiments and research with human subjects, does the paper include the full text of instructions given to participants and screenshots, if applicable, as well as details about compensation (if any)? 
    \item[] Answer: \answerNA{} % Replace by \answerYes{}, \answerNo{}, or \answerNA{}.
    \item[] Justification: The research does not involve crowdsourcing or human subjects.
    \item[] Guidelines:
    \begin{itemize}
        \item The answer \answerNA{} means that the paper does not involve crowdsourcing nor research with human subjects.
        \item Including this information in the supplemental material is fine, but if the main contribution of the paper involves human subjects, then as much detail as possible should be included in the main paper. 
        \item According to the NeurIPS Code of Ethics, workers involved in data collection, curation, or other labor should be paid at least the minimum wage in the country of the data collector. 
    \end{itemize}

\item {\bf Institutional review board (IRB) approvals or equivalent for research with human subjects}
    \item[] Question: Does the paper describe potential risks incurred by study participants, whether such risks were disclosed to the subjects, and whether Institutional Review Board (IRB) approvals (or an equivalent approval/review based on the requirements of your country or institution) were obtained?
    \item[] Answer: \answerNA{} % Replace by \answerYes{}, \answerNo{}, or \answerNA{}.
    \item[] Justification: The research does not involve human subjects.
    \item[] Guidelines:
    \begin{itemize}
        \item The answer \answerNA{} means that the paper does not involve crowdsourcing nor research with human subjects.
        \item Depending on the country in which research is conducted, IRB approval (or equivalent) may be required for any human subjects research. If you obtained IRB approval, you should clearly state this in the paper. 
        \item We recognize that the procedures for this may vary significantly between institutions and locations, and we expect authors to adhere to the NeurIPS Code of Ethics and the guidelines for their institution. 
        \item For initial submissions, do not include any information that would break anonymity (if applicable), such as the institution conducting the review.
    \end{itemize}

\item {\bf Declaration of LLM usage}
    \item[] Question: Does the paper describe the usage of LLMs if it is an important, original, or non-standard component of the core methods in this research? Note that if the LLM is used only for writing, editing, or formatting purposes and does \emph{not} impact the core methodology, scientific rigor, or originality of the research, declaration is not required.
    %this research? 
    \item[] Answer: \answerYes % Replace by \answerYes{}, \answerNo{}, or \answerNA{}.
    \item[] Justification: The use of LLMs is the core subject of the paper, and the specific models used (Qwen3, GPT-OSS, MiniMax, Gemini 3 Pro) are thoroughly documented.
    \item[] Guidelines:
    \begin{itemize}
        \item The answer \answerNA{} means that the core method development in this research does not involve LLMs as any important, original, or non-standard components.
        \item Please refer to our LLM policy in the NeurIPS handbook for what should or should not be described.
    \end{itemize}

\end{enumerate}